\newcommand{\yu}[1]{\ifthenelse{\boolean{showcomments}}
{ \textcolor{orange}{(Yu says:  #1)}}{}}
\newcommand{\guocheng}[1]{\ifthenelse{\boolean{showcomments}}
{ \textcolor{red}{(Guocheng says:  #1)}}{}}
\newcommand{\adam}[1]{\ifthenelse{\boolean{showcomments}}
{ \textcolor{blue}{(Adam says:  #1)}}{}}
\newcommand{\juba}[1]{\ifthenelse{\boolean{showcomments}}
{ \textcolor{purple}{(Juba says:  #1)}}{}}
\newcommand{\jianwei}[1]{\ifthenelse{\boolean{showcomments}}
{ \textcolor{red}{(Jianwei says: #1)}}{}}
\newcommand{\addref}[0]{\ifthenelse{\boolean{showcomments}}
{ \textcolor{red}{(add ref)}}{}}
\newcommand{\todo}[1]{\ifthenelse{\boolean{showcomments}}
{ \textcolor{red}{(To do:  #1)}}{}}
\newcommand{\note}[1]{\ifthenelse{\boolean{showcomments}}
{ \textcolor{red}{#1}}{}}
\newcommand{\agentSet}{\mathcal{S}}
\newcommand{\agentNum}{s}
\newcommand{\parSet}{\mathcal{N}}
\newcommand{\groupParSet}[1]{\mathcal{N}_#1}
\newcommand{\groupNum}{I}
\newcommand{\groupProb}[1]{q_{#1}}
\newcommand{\groupProbVec}{\mathbf{q}}
\newcommand{\group}[1]{i(#1)}
\newcommand{\cost}{c}
\newcommand{\reportedCost}{\tilde{c}}
\newcommand{\data}{x}
\newcommand{\budget}{B}
\newcommand{\agent}{k}
\newcommand{\decision}{\mathbbm{1} (\agent \in \groupParSet{i})}
\newcommand{\jointD}{\mathcal{D}}
\newcommand{\costD}{f}
\newcommand{\tD}{\mathcal{T}(\jointD, \allocation)}
\newcommand{\ratio}{\bar{\theta}}
\newcommand{\aveRatio}{\bar{\theta}}
\newcommand{\ratioVec}{\boldsymbol{\theta}}
\newcommand{\intraRatio}[1]{\bar{\theta}_{#1}}
\newcommand{\interRatio}[1]{\bar{\theta}_{-#1}}
\newcommand{\ratioVecone}{\boldsymbol{\theta}_f}
\newcommand{\ratioVecequ}{\boldsymbol{\theta}^*}
\newcommand{\ratioequ}{\bar{\theta}^*}
\newcommand{\ratioequg}{\bar{\theta}^*}
\newcommand{\gCorrelation}{\alpha}
\newcommand{\gIntraCor}[1]{\alpha_{#1}}
\newcommand{\gInterCor}[1]{\alpha_{-#1}}
\newcommand{\gCorrelationVec}{\boldsymbol{\alpha}}
\newcommand{\correlationVec}{\boldsymbol{\alpha}}
\newcommand{\correlation}{\alpha}
\newcommand{\intraCor}[1]{\alpha_{#1}}
\newcommand{\interCor}[1]{\alpha_{-#1}}
\newcommand{\benefitFnc}{w(\cdot)}
\newcommand{\benefit}[1]{w(#1)}
\newcommand{\g}{g}
\newcommand{\h}{h}
\newcommand{\gPrivacyFnc}{\g(\cdot)}
\newcommand{\gppl}[2]{\g(#1, \ratioVec_{#2}; \gCorrelationVec_#2)}
\newcommand{\gpplequ}[2]{\g(#1, \ratioVecequ_#2; \gCorrelationVec_#2)}
\newcommand{\gintraloss}[2]{{\g}(#1, \ratio_{#2}; \gCorrelation_{#2})}
\newcommand{\ginterloss}[2]{{\g}(#1, \ratio_{-#2}; \gCorrelation_{-#2})}
\newcommand{\privacyFnc}{\h(\cdot)}
\newcommand{\ppl}[2]{\h(#1, \ratioVec_#2; \gCorrelationVec_#2)}
\newcommand{\pplequ}[2]{\h(#1, \ratioVecequ_#2; \gCorrelationVec_#2)}
\newcommand{\intraloss}[2]{{\h}(#1, \ratio_{#2}; \correlation_{#2})}
\newcommand{\interloss}[2]{{\h}(#1, \ratio_{-#2}; \correlation_{-#2})}
\newcommand{\EFnc}{\mathbb{E}}
\newcommand{\E}[1]{\mathbb{E}_{#1}}
\newcommand{\allocation}{A}
\newcommand{\parameter}{\mu}
\newcommand{\estimator}{\hat{\parameter}}
\newcommand{\bias}{\mathbb{B}(\estimator; \jointD, \allocation)}
\newcommand{\variance}{\mathbb{V}(\estimator; \jointD, \allocation)}
\newcommand{\weight}{\gamma}
\newcommand{\utilityP}[3]{u(#1, #2, \ratioVec_#3; \correlationVec_#3)}
\newcommand{\utility}[3]{u_#3(#1 | #2)}
\newcommand{\expectedUtility}[3]{\bar{u}_{#3} (#1 | #2)}
\newcommand{\payFnc}{P(\cdot)}
\newcommand{\paymentP}[2]{P(#1,\ratioVec_#2; \correlationVec_#2)}
\newcommand{\payment}[2]{P_{#2}(#1)}
\newcommand{\selectionProbFnc}{A(\cdot)}
\newcommand{\selectionProbP}[2]{A(#1,\ratioVec_#2; \correlationVec_#2)}
\newcommand{\selectionProb}[2]{A_#2(#1)}
\newcommand{\constantpartequ}[1]{\tau(\ratioVecequ_#1,\gCorrelationVec_#1)}
\newcommand{\constantpart}[1]{\tau(\ratioVec_#1,\gCorrelationVec_#1)}
\newcommand{\costthreshold}[1]{\hat{c}_#1}
\newcommand{\virtualcost}[1]{\omega_#1}
\newcommand{\virtualcostall}{\omega}
\newtheorem{remark}{Remark}
\newtheorem{assumption}{Assumption}
\newcommand{\SelectionProbability}{A}
\newcommand{\pay}{P}
\newcommand{\Cost}{c}
\newcommand{\ParticipantSet}{\mathcal{N}}
\newtheorem{Def}{Definition}
\newtheorem{The}{Theorem}
\newtheorem{Pro}{Proposition}
\newtheorem{Cor}{Corollary}
\newtheorem{Lem}{Lemma}
\newtheorem{Prop}{Property}
\newtheorem{Cla}{Claim}
\newcommand{\printfnsymbol}[1]{%
  \textsuperscript{\@fnsymbol{#1}}%
}
\title{The Privacy Paradox and Optimal Bias-Variance Trade-offs in Data Acquisition}
\author[\space\space 1,4,5]{Guocheng Liao\space \thanks{These two authors contributed equally.}}
\author[2]{Yu Su\space\printfnsymbol{1}}
\author[\space\space 3]{Juba Ziani}
\author[\space\space 2]{Adam Wierman}
\author[\space\space 4,5]{Jianwei Huang}
\affil[1]{Department of Information Engineering, The Chinese University of Hong Kong}
\affil[2]{Department of Computing and Mathematical Sciences, California Institute of Technology}
\affil[3]{Department of Computer and Information Science, University of Pennsylvania}
\affil[4]{Shenzhen Institute of Artificial Intelligence and Robotics for Society}
\affil[5]{School of Science and Engineering, The Chinese University of Hong Kong, Shenzhen}
\date{}
\begin{document}
\pagenumbering{arabic}
\maketitle


\begin{abstract}
While users claim to be concerned about privacy, often they do little to protect their privacy in their online actions.  One prominent explanation for this ``privacy paradox'' is that when an individual shares her data, it is not just her privacy that is compromised; the privacy of other individuals with correlated data is also compromised. This information leakage encourages oversharing of data and significantly impacts the incentives of individuals in online platforms.  In this paper, we study the design of mechanisms for data acquisition in settings with information leakage and verifiable data.  We design an incentive compatible mechanism that optimizes the worst-case trade-off between bias and variance of the estimation subject to a budget constraint, where the worst-case is over the unknown correlation between costs and data. Additionally, we characterize the structure of the optimal mechanism in closed form and study monotonicity and non-monotonicity properties of the marketplace. 
\end{abstract}


\section{Introduction}

There is a fundamental discrepancy between privacy attitudes and the  behaviors of users online: users claim to be concerned about their privacy but do little to protect privacy in their actions. More specifically, users express their concerns about privacy, including the ambiguous distribution of data and its use by third party \cite{Informationprivacyresearch, acquisti2017nudges, goldfarb2012shifts}; however, when choosing services, users mainly focus on the popularity, convenience, price, etc, despite the potential risk of data misuse \cite{Informationexamining,examination}. This phenomenon is known as the \emph{privacy paradox} \cite{privacyparadox} and understanding the reasons behind this paradox and its consequences for the design of online platforms is an important goal for both computer scientists and economists.  

The privacy paradox is at the root of the behavior of individuals in modern online data marketplaces.  Online platforms gather data on billions of individuals in order to personalize advertising and customize other aspects of their systems.  However, such usage tends to provide little direct benefits for the users, a fact that is used as indirect evidence to argue that users provide a small value on privacy \cite{athey2017digital}.  Such an argument ignores the impact that an individual's participation decision has on others.  In particular, when an individual shares her data, it is not just her privacy that is compromised; the privacy of other individuals whose data is correlated with hers is also compromised. Thus, these other individuals are more likely to share their own data given that some has already been leaked \cite{acemoglu_privacy}. This simple, but often overlooked issue is at the root of the privacy paradox.  Information leakage due to correlation has been shown to lead to oversharing since each individual overlooks their own privacy concerns as a result of the negative externalities created by others' revelation decisions.  Thus, information leakage leads to the potential for significant economic and social inefficiency in data marketplaces.

In this paper, \emph{we study the impact of privacy concerns and information leakage on the design of data markets.} Specifically, we study the task of designing mechanisms for obtaining verifiable data from a population for a statistical estimation task, such as estimating the expected value of some function of the underlying data.

The goal of designing mechanisms for optimal data acquisition is a core piece of the emerging literature on data marketplaces.  A common motivating example is a setting where a healthcare platform is doing statistical analysis on its population of users.  While some data is measured accurately from their smart devices, other desired data may be about characteristics users do not wish to provide or may vary over time. Thus the healthcare platform has to conduct a survey among the users to obtain such information accurately  (e.g. giving the individual a smart device or having the individual fill out a form); however when administering such a survey the responses are likely biased. For example, if weight is the target, then the respondents may be biased towards low-weight samples.  Thus, the task of designing mechanisms to limit the bias and reduce the variance of estimates obtained from such surveys is crucial.  However, such a task is challenging due to the fact that the analyst does not know the distribution of the data and has a limited budget.   

A growing line of work has focused on the design of such optimal data acquisition mechanisms, e.g., \cite{aaron_grant, dataacquisition, shuran_yiling, abernethy2015low}.  Initiated by \cite{abernethy2015low, aaron_grant}, this line of work has led to the design of mechanisms for unbiased estimation with minimal variance in a variety of settings. However, in this literature it is assumed that all individuals will participate, thus unbiased estimation is possible.  The trade-off between bias and variance has been ignored to this point with the exception of \citet{shuran_yiling}, which still assumes all individuals will participate and does not consider privacy concerns.  Further, this line of work has not considered the issues created  by information leakage due to correlation between the participants. Information leakage creates significant incentives for increased data sharing and thus mechanisms that do not consider it directly will suffer from undetected bias and increased variance in the obtained estimates.  Modeling the incentives created by leakage potentially provides the analyst the opportunity to obtain an estimator of the same quality using a smaller budget, due to the externalities created by data correlation.

\textbf{Contributions.} In this paper, we provide the first characterization of an optimal mechanism for data acquisition in the setting where agents are concerned about privacy and their data is correlated with each other. As a result, information leakage due to data correlation not only contributes to an agent's privacy cost, but also to the privacy costs of others with correlated data. Additionally, the mechanism allows, for the first time, a trade-off between the bias and variance of the estimator when privacy cost is considered. This offers the analyst freedom to tailor towards an emphasis on either bias or variance of the estimator depending on the contextual goals.

Specifically, we propose a novel model for data acquisition. The novelty of our model is a result of three important components. 
First, we introduce the privacy cost to model impacts of data correlation. Unlike modeling data correlation on an individual level in \cite{acemoglu_privacy}, 
we divide the agents into different groups and assume that agents within the same group share a same correlation strength. This gives us the power to work with any granularity of choice with regard to data correlation. For example, if every group has a relatively small size of agents, then our model of data correlation shifts towards a near individual level. In addition, an agent suffers a larger privacy cost if she joins the platform than that if she does not join. The choice of our privacy cost function enables us to model all these desired properties. Second, in reality, not every agent always decides to join the platform. Thus, we introduce the notion of participation rate as the ratio of the number of agents who join the platform to the number of total agents. This further allows us to study equilibrium with respect to participation rate, which is crucial since the mechanism impacts the participation rate, which in turn impacts the bias and variance. Third, given that not every agent always joins the platform, it is not always realistic to aim for an unbiased estimator. Instead, we minimize a linear combination of bias and variance of the estimator. Via a choice of constant weights for bias and variance, we are able to balance between these two metrics of the estimator as desired.

Our main theoretical results provide a closed form solution of payment and allocation rules under a choice of equilibrium participation rate in order to achieve a truthful mechanism (Theorem  \ref{pro_payment} $\&$  Theorem \ref{the_opt_pro}). More specifically, we aim to minimize a linear combination of bias and variance subject to budget and truthfulness constraints. By considering a linear combination, we are able to emphasize either bias or variance of the estimator as desired. Moreover, we provide conditions for the optimality of an unbiased estimator in the case when it is possible to achieve a full participation rate, i.e. every agent decides to join the platform.

Our results offer some interesting insights about mechanisms for data acquisition. First, an unbiased estimator is possible even if the budget is relatively small because we can meet the expected budget constraint using a small selection probability. However, an unbiased estimator is not always realistic in practice. As a result, it is important to optimize the bias-variance trade-off. Second, incorporation of privacy cost due to information leakage and sharing makes it possible for the analyst to underpay the agents to acquire the same data set. This can potentially lead to a relatively small payment for data, something that is frequently seen in practice. Last but not least, the privacy cost from leakage encourages more agents to join the platform, which coincides with the data oversharing phenomenon frequently observed in platforms today.

The design of mechanisms for data acquisition is known to be challenging even if we focus on an unbiased estimator and ignore privacy cost due to data correlation. However, obtaining our results requires overcoming additional challenges.  There are two technical innovations that enable our analysis. First, we need to introduce and characterize an equilibrium with respect to the participation rate, as the participation rate is an endogenous property of the mechanism design problem. This equilibrium adds considerable complexity to the analysis, but also provides insights about how the data acquisition mechanism depends on the popularity of the platform. Second, we introduce the notion of data correlation strength to characterize the privacy cost due to information leakage. This allows us to capture the impact of data correlation on the platform. Further, depending on a choice of group sizes, we are able to model information leakage at different granularity levels as desired.

\textbf{Related Literature.} The design of data markets has attracted a significant amount of interest in recent years. There is growing body of work studying a variety of aspects of data markets, including monetizing information via either dynamic sales or optimal mechanisms, e.g., \cite{babaioff2012optimal, horner2016selling}, exploiting personal information to improve allocation of resources in online markets, e.g., \cite{goldfarb2011online, bergemann2015selling, montes2019value}, optimal acquisition of information, e.g., \cite{aaron_grant, dataacquisition, shuran_yiling}. For a recent survey see \cite{bergemann_data_markets} and the references therein. Our work broadly falls into the last line of work, and focuses on the design of a mechanism for optimal data acquisition. The study of the optimal data acquisition has attracted a growing amount of attention across both economics and computer science. In this paper, we focus on designing a truthful mechanism in order to perform a statistical estimation task. This goal has received considerable attention. However, prior work does not consider the privacy cost of the participants.  Our work aims to fill the gap by considering privacy cost as an important factor for individuals' decision-making. 

In more detail, the prior work on optimal data acquisition can be divided into two categories depending on whether data is verifiable or not. In the first category, individuals' utilities often directly depend on the outcome of the statistical inference and they thus have an incentive to misreport their data, e.g., \cite{liu2016learning, liu2017sequential, liu2018surrogate, dekel2010incentive, meir2011strategyproof, meir2012algorithms, perote2003impossibility}. This is possible since there is no ground truth to verify the data. In the second category, individuals are assumed to report their data truthfully due to the ability of the analyst to verify the data, e.g., \cite{cummings2015accuracy,cai2015optimum,abernethy2015low, shuran_yiling, aaron_grant, dataacquisition}. In this paper, we consider the setting where data is verifiable and so focus our discussion on prior work in that category below.

The task in this context is to purchase data from individuals whose private costs are subject to an (expected) budget constraint. This model was introduced in \cite{abernethy2015low, aaron_grant}. The model assumes that data cannot be fabricated and that private costs of the participants are correlated with the data. Moreover, the participants do not derive utility or disutility from the estimation outcome. \citet{aaron_grant} minimizes a bound of the worst case variance while achieving an unbiased estimator while \citet{abernethy2015low} considers general supervised learning. Following these initial papers, a closed form result that directly minimizes the worst case variance is given in \cite{dataacquisition}. However, unbiased estimators are not always possible in realistic settings. When a biased estimator is considered, \citet{shuran_yiling} proposes a slightly different model, in which agents arrive in an online fashion and cost distribution is not known a priori. Under this model, \citet{shuran_yiling} studies a trade-off between bias and variance of the estimator rather than only focus on the unbiased estimators. This trade-off is also a core component of our work.

The prior work discussed above mostly focuses on data acquisition without considering the privacy concerns of the participants. Even when the privacy concerns are considered in the prior work, privacy cost is often interpreted as a simple cost, which does not capture the information leakage due to data correlation.  In contrast, privacy cost due to information leakage as a result of data correlation is a crucial concern in the model we consider in this paper. There is a recent line of work that examines data acquisition through the lens of differential privacy \cite{ghosh_selling_privacy,cummings2015accuracy,fleischer2012approximately,buyingwithoutverification,nissim_redrawing}. As agents might not be willing to report their data due to their privacy concerns, monetary incentives are given to encourage individuals to participate in order to balance between privacy of individuals and accuracy of estimation. However, in this line of work, the  impact of information leakage is not considered and thus the practical impact of the privacy paradox is not considered.

The information leakage that results from correlation between individuals' is at the root of the privacy paradox and is a crucial factor for data markets, as has been recognized by recent work in economics, e.g.,\cite{liao2018social,acemoglu_privacy}.  More specifically, \citet{acemoglu_privacy} recently introduced the concept of information leakage to account for privacy cost due to other individuals' data sharing.
Inspired by this, we incorporate the privacy cost due to heterogeneous data correlation in our model and aim to design a truthful mechanism to balance between bias and variance subject to an expected budget constraint.  The privacy concern of individuals, specifically the privacy cost due to data correlation, is essential in the design of such a mechanism. This is a critical feature of our model that is not present in any prior work on data acquisition.  The setting and model we consider differ considerably from \cite{acemoglu_privacy}.  We consider the problem of optimal data acquisition and the incentives created by information leakage, which have not been considered in this context.


\section{System Model}

We consider an online platform consisting of an analyst and many agents.  At a high level, the analyst aims to design a pricing mechanism to purchase private data from agents in order to perform a statistical estimation task, e.g., estimate the mean of the agents' data. Ideally, the analyst would like to purchase all the private data to obtain an unbiased estimator. However, given a limited budget, the analyst has to design a pricing mechanism to wisely select the data in order to balance between the bias and variance of the estimator.

To this end, we consider a family of pricing mechanisms that presents a menu to the agents. The menu consists of pairs of payments and probabilities of having agents' data selected for use in the estimation task. Given the menu, the agents report their costs and decide individually if they would like to join the platform or not. Once the agents make the decisions, the analyst selects data from an agent on the platform to purchase with the given probability selected from the menu. The analyst's goal in designing the mechanism is to determine the menu of payments and selection probabilities in order to perform a statistical estimation subject to an expected budget constraint. The form of this mechanism is classical, and adopted from, e.g., \cite{aaron_grant,dataacquisition,abernethy2015low}.

Considering privacy cost is critical to the design of such a mechanism. For the agents on the platform, they obtain benefit from participation but reveal some personal information through their interactions with the platform, which leads to the privacy cost. Further, when an agent's data is used, it negatively impacts not only the privacy of the agent revealing the data, but also the privacy of other agents whose data is correlated. Thus, even though some agents do not join the platform and do not directly share their data, they may still suffer a privacy cost through their peers' interactions with the platform due to data correlation. This privacy cost via \emph{information leakage} is an important and novel feature of the model presented here. 

We describe the full details of the model and the family of mechanisms we consider in the remainder of this section.  Figure \ref{fig:model} provides an overview of the model.

\begin{figure}[t]
    \centering
    \includegraphics[width=0.95\textwidth]{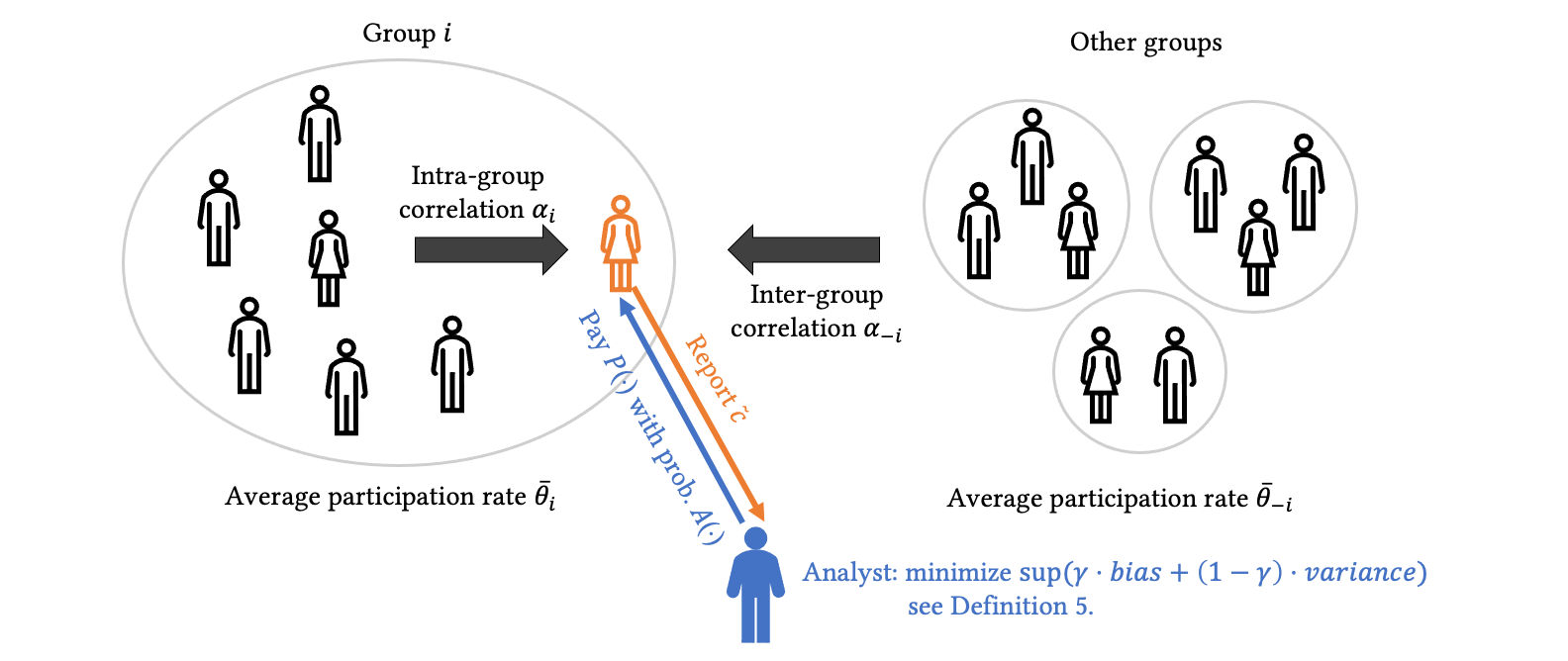}
    \caption{An illustration of the model from the perspective of a participating agent in group $i$.}
    \label{fig:model}
\end{figure}

\subsection{Agent model} \label{sec:agent}


We consider $\agentNum$ agents that hold data of interest to the analyst.  The set of agents is denoted by $\agentSet$. Every agent owns a data point. By reporting her data to the platform's survey, the agent incurs an overall cost $\cost$, which is known to her but not to the analyst. The overall cost consists of a combination of reporting cost and privacy cost, where the reporting cost results from the act of reporting the data while the privacy cost comes from both data sharing and data correlation. We discuss these costs further in Section \ref{sec:privacy}. 

We now present how the mechanism works. First, each agent is presented a price menu by the analyst that consists of a payment rule $\payFnc$ and selection probability $\selectionProbFnc$, both of which depend on the reported cost $\reportedCost$ of the agent. We interchangeably use selection probability or allocation rule, as convenient. Second, given the menu, an agent decides if she would like to join the platform. An agent who decides to join the platform is asked to report her cost, which determines the payment and selection probability. The payment $\payFnc$ is given to the agent if she joins the platform \emph{and} her data is selected (used) by the analyst.   More specifically, her data is selected with probability $\selectionProbFnc$, and she receives the payment only if her data is selected.


Whether an agent decides to join the platform relies on weighing the benefit of participation $\benefitFnc$, plus a potential payment, against the privacy cost that occurs as a result of her own or her peers' interactions with the platform. More specifically, for an agent on the platform, her privacy cost incurred includes both the cost from the agent herself sharing her data and the privacy cost due to her friends' sharing of possibly correlated data. We use $\privacyFnc$ to denote this combination.  In contrast, if an agent does not join the platform, she still suffers a  privacy cost $\gPrivacyFnc$  due to information leakage as a result of correlation between her data and the data of those agents who do join.
We use $\parSet$ to denote the set of agents who join the platform.
For agent $\agent$, her utility is as follows:
\begin{equation}\label{equ_uti_parameterized}
    \utility{\reportedCost}{\cost}{i} = \begin{cases}
        -\gPrivacyFnc, &\text{if } \agent \not\in \parSet, \\
        -\privacyFnc+\benefitFnc, &\text{w/ prob. } 1-\selectionProbFnc, \text{ if } \agent \in \parSet,\\ 
        \payFnc -c + \benefitFnc, &\text{w/ prob. }  \selectionProbFnc, \text{ if } \agent \in \parSet.
    \end{cases}
\end{equation}

Up to this point, we have described the model in a general way, without giving details on the form of the benefit of participation ($\benefitFnc$) and privacy cost ($\gPrivacyFnc$ and $\privacyFnc$).  In the remainder of this section, we introduce relevant models of these functions, which illustrate a variety of parameterized forms motivated by different potential settings. 



\subsubsection{Participation Benefit}

The participation benefit is the non-negative value received by agents who take advantage of the service provided by the platform. Intuitively, the more participants the platform attracts, the more valuable the platform's service becomes. Take Facebook as an example, the more friends of a user use Facebook, the more valuable Facebook as a social network is to this user. However, in light of the limited budget in practice, it is almost impossible to have every agent join the platform. As a result, we define the average participation rate of the population $\ratio$ to be the ratio of number of agents on the platform to the population of agents. The average participation rate $\ratio$ can be viewed as a measure of the popularity of the platform.\footnote{The participation rate is determined endogenously via the equilibrium, which we discuss in Section \ref{sec:mechanism}.} Moreover, let $\benefit{\ratio}$ denote the participation benefit as a continuous function of the average participation rate. The participation benefit is assumed to be non-decreasing in the average participation rate of the population $\ratio$. 

\subsubsection{Correlation Strength.}
Whether an agent decides to join the platform or not, an important source for her privacy cost comes from the information leakage due to data correlation. A stronger correlation naturally leads to more leakage and induces a larger privacy cost. Intuitively, if an agent's data is highly correlated with the rest of the agents, then she has a relatively large correlation strength. 
Moreover, some agents might share a stronger correlation with each other within a group of agents than with others outside the group. For instance, on a healthcare related platform, users who carry a common disease might share a similar pattern and thus their data is possibly highly correlated with each other. In order to capture the inter-group versus intra-group difference, we divide $\agentNum$ agents into $\groupNum$ groups, and agents within the same group $i$ share a common correlation strength $\correlationVec_i$. The correlation strength vector $\correlationVec_i$ is further defined as $\correlationVec_i \triangleq (\intraCor{i}, \interCor{i})$, where $\gIntraCor{i}$ and $\gInterCor{i}$ are used to respectively denote the correlation strength induced by agents inside group $i$ and those outside group $i$. Note that vectors are bold while scalars are not.

\subsubsection{Privacy cost} \label{sec:privacy}

Privacy cost is critical to an agent's decision about whether to join the platform. For an agent who does not join the platform, her privacy cost $\gPrivacyFnc$ comes entirely from information leakage through her peers' data sharing on the platform due to data correlation. In contrast, for an agent who joins the platform but does not report her data, not only her peers' actions but also her own interactions with the platform result in her privacy cost, which is denoted by $\privacyFnc$. Next, we introduce the parameterized form of privacy costs in detail.

\emph{Intra-group $\&$ Inter-group Privacy Loss.}
To differentiate the privacy cost due to correlation inside and outside the group, we further decompose privacy cost into a sum of intra-group cost and inter-group cost. Let the participation rate $\ratioVec_i$ denote a vector of participation rate within the group $\intraRatio{i}$ and that of the rest groups $\interRatio{i}$, i.e. $\ratioVec_i \triangleq  [\intraRatio{i},\interRatio{i}]$. For convenience, we later use $\aveRatio$ to denote the average participation rate of the overall population. For an agent in group $i$, her privacy cost is equal to a sum of intra-group cost and inter-group cost: $\gppl{\cost}{i} = \gintraloss{\cost}{i} + \ginterloss{\cost}{i}$ if the agent does not join the platform; otherwise her privacy cost of joining the platform is $\ppl{\cost}{i} = \intraloss{\cost}{i} + \interloss{\cost}{i}$.

Up to this point, we have introduced a parameterized form of privacy functions $\gppl{\cost}{i}$ and $\ppl{\cost}{i}$. 
{To derive more clear engineering insights,} we make two assumptions regarding these privacy functions. These assumptions are intuitive and consistent with the applications we consider.

\begin{assumption} \label{combine}
[Monotonicity $\&$ Boundedness] Both $\gppl{\cost}{i}$ and $\ppl{\cost}{i}$ are continuously  non-decreasing {in} cost $\cost$, participation rate $\intraRatio{i}$ and $\interRatio{i}$, and correlation strength $\intraCor{i}$ and $\interCor{i}$. Further, the difference $\ppl{\cost}{i}- \gppl{\cost}{i}$ is continuously increasing {in} cost $\cost$, {and  $\gppl{\cost}{i}$ is no greater than  $\ppl{\cost}{i}$. Both of them are bounded by $\cost$}:  $\gppl{\cost}{i} \le \ppl{\cost}{i} \leq \cost$.
\end{assumption}

To motivate the monotonicity assumption, first recall that cost is composed of reporting costs and privacy costs, as mentioned in Section \ref{sec:agent}. Thus, an agent with a high cost tends to value her privacy more, which is captured by the monotonicity with respect to costs. Second, intuitively, privacy costs increase as more agents join the platform. Indeed, the more agents join the platform and share their data, the more information that the analyst can infer about agents, including  those who do not join the platform. Third, the correlation strength characterizes how each agent's data is correlated with other agents' in the population. For an agent, a stronger correlation strength indicates that data sharing by other agents of the population could potentially cause more leakage, thus more privacy losses.

To understand the assumption $\gppl{\cost}{i}\leq\ppl{\cost}{i}\leq c$, we note that every agent suffers a privacy cost induced by her peers' activities on the platform due to data correlation even if she does not join the platform. In addition, for those who indeed join the platform, they tend to leak more information about themselves through their own activities on the platform. As a result, it is reasonable to assume that the privacy cost $\gppl{\cost}{i}$ of an agent, if she does not join the platform, is smaller than her privacy cost $\ppl{\cost}{i}$ if she joins but does not report her data. In other words,  sharing on the platform potentially leads to more privacy cost. The difference of these two parts models the privacy cost due to individual sharing. Naturally, for any agent, the privacy cost due to individual sharing tends to be higher if the agent has a larger overall cost. 
Since the overall cost consists of reporting cost and privacy cost, any privacy cost function is modeled as a fraction of the overall cost $\cost$, i.e. privacy cost is upper bounded by $\cost$.

\begin{assumption}
\label{Ass_linearincost}
The privacy cost  $\ppl{\cost}{i}$ is linear in cost $\Cost$ with parameter $b(\ratioVec_i;\gCorrelationVec_i)$, i.e., 
\begin{equation}
     \ppl{\Cost}{i} = \cost \cdot b(\ratioVec_i;\gCorrelationVec_i).
 \end{equation}
\end{assumption}

We assume that the privacy cost function of participation $\ppl{\cost}{i}$ is linear in cost $\cost$.   This assumption is  a consequence of the work in~\citet{ghosh_selling_privacy}, which shows that cost functions of the form $c_i \varepsilon$ are appropriate for settings using differential privacy, where $\varepsilon$ quantifies the amount of privacy leaked. Note that the parameter $b(\ratioVec_i;\gCorrelationVec_i)$ inherits the monotonicity with respect to the participation rate and correlation strength from that of the privacy cost function $\ppl{\cost}{i}$. 

\subsubsection{Agent's Utility}
After introducing the key components of agent's utility, we summarize the utility function as follows:

\begin{enumerate}[label=(\roman*)]
    \item If the agent does not join the platform, she only experiences a privacy cost $\gppl{\cost}{i}$ induced by information leakage due to data correlation.
    \item If the agent joins the platform but is not selected to report her data, her utility is $\ppl{\cost}{i}+\benefit{\aveRatio}$, where $\benefit{\aveRatio}$ is the participation benefit.
    \item If the agent joins the platform and is selected to report her data, she incurs her overall cost $\cost$, which includes her privacy cost. Based on the agent's reported cost $\reportedCost$, she receives a payment $\paymentP{\reportedCost}{i}$ and thus her utility is $\paymentP{\reportedCost}{i} -c + \benefit{\aveRatio}$.
\end{enumerate}


As the agent's privacy cost function is parameterized by the correlation strength parameter, so should the payment function $\paymentP{\reportedCost}{i}$, selection probability $\selectionProbP{\reportedCost}{i}$, utility $\utilityP{\reportedCost}{\cost}{i}$. However, for simplicity, we later use $\payment{\reportedCost}{i}$, $\selectionProb{\reportedCost}{i}$, and $\utility{\reportedCost}{\cost}{i}$, 
respectively to denote payment, selection probability and utility function. 
For agent $\agent$, her utility is as follows:
\begin{equation}\label{equ_uti}
    \utility{\reportedCost}{\cost}{i} = \begin{cases}
        -\gppl{\cost}{i}, &\text{if } \agent \not\in \parSet, \\
        -\ppl{\cost}{i}+\benefit{\aveRatio}, &\text{w/ prob. } 1-\selectionProb{\reportedCost}{i}, \text{ if } \agent \in \parSet,\\ 
        \payment{\reportedCost}{i} -c + \benefit{\aveRatio}, &\text{w/ prob. }  \selectionProb{\reportedCost}{i}, \text{ if } \agent \in \parSet.
    \end{cases}
\end{equation}
 Let the expected utility for an agent (reporting $\reportedCost$ while having a cost $\cost$) in group $i$ who joins the platform be denoted by $\expectedUtility{\reportedCost}{\cost}{i}$. Consequently,
\begin{equation}\label{expected_utility}
\begin{aligned}
\expectedUtility{\reportedCost}{\cost}{i} =& \selectionProb{\reportedCost}{i} \left[ \payment{\reportedCost}{i} -c + \benefit{\aveRatio} \right] + (1-\selectionProb{\reportedCost}{i}) \left[ - \ppl{\cost}{i} + \benefit{\aveRatio} \right].\\
=&\selectionProb{\reportedCost}{i}\left[\payment{\reportedCost}{i} -c + \ppl{\cost}{i}\right]- \ppl{\cost}{i}+ \benefit{\aveRatio}.
\end{aligned}
\end{equation}

\section{Mechanism Design} 


\label{sec:mechanism}

The analyst aims to perform an estimation task by designing the menu for the mechanism, which is composed of payment function and allocation rule for agents. To study this problem, we first introduce two standard desirable properties of the mechanism: \emph{truthfulness} (Definition \ref{def_truthful}) and \emph{budget feasibility} (Definition \ref{def_budget}), and then define the overall objective (a trade-off between bias and variance of the estimator) (Definition \ref{def_obj}) subject to these two properties.  Following that setup, we introduce the equilibrium participation rate (Definition \ref{def_equilibrium}), which is  a situation where no agent wants to alter her participation decision given the current participation rate. We then derive  structural results on the payment function and allocation rule that induce participants' truthful reporting in the equilibrium in Section \ref{sec_payment} (Theorem \ref{pro_payment}). Finally, we present monotonicity properties of the mechanism in Section \ref{sec_properties}. 

\subsection{Problem Statement}

To define the analyst's mechanism design problem, we first describe two classical, desirable properties, truthfulness and an expected budget constraint. 
First, we require the mechanism to be truthful, which is common, e.g. \cite{dataacquisition, shuran_yiling}.

\begin{Def}[Truthfulness]\label{def_truthful}
A  mechanism is truthful if for every participant with cost $\cost$, she can maximize her expected utility if 
she truthfully reports her cost, i.e.,
\begin{equation}\label{equ_truthfulness}
    \utility{\cost}{\cost}{i}\ge \utility{\reportedCost}{\cost}{i}, \quad
    \forall \reportedCost \neq \cost, \forall i.
\end{equation}
\end{Def}

Definition \ref{def_truthful} guarantees that rational agents on the platform will truthfully report their costs. Substituting the expected utility as in equation \eqref{expected_utility}, we get
\begin{equation}
\selectionProb{\cost}{i} \cdot \left[\payment{\cost}{i} -c + \ppl{\cost}{i}\right] \geq \selectionProb{\reportedCost}{i} \cdot \left[\payment{\reportedCost}{i} -c + \ppl{\cost}{i}\right], \quad
    \forall \reportedCost \neq \cost, \forall i.
\end{equation}




 Second, we constrain the mechanism to use a limited budget $\budget$, which limits the payments of the analyst to the agents for their data. Before presenting the budget constraint in details, we introduce some notation. Every agent owns a data point $\data$ and cost $\cost$.  These pairs $(\data, \cost)$, owned by agents in group $i$ ($i\in [I]$), follow a joint distribution $\jointD_i$. Define  $\jointD \triangleq \{ \jointD_i \}_{i\in [I]}$. Let $\costD_i$ denote the marginal distribution of the cost. We assume that $\costD \triangleq \{ \costD_i \}_{i\in [I]}$ is known to the analyst while $\jointD \triangleq \{ \jointD_i \}_{i\in [I]}$ is unknown.\footnote{This prior $\{\costD_i\}$ could be constructed from previous interactions between agents and data buyers. Knowledge of the cost (or valuation) distribution is a standard assumption in Bayesian mechanism design (e.g.~\citet{myerson}), and is assumed in the works of \citet{aaron_grant} and \citet{dataacquisition}.}

\begin{Def}[Expected Budget Constraint]\label{def_budget}
A mechanism with payment function $\pay$ and selection probability $\SelectionProbability$ satisfies the expected budget constraint $\budget$ if and only if
\begin{equation}
    \sum_{\agent: i = \group{\agent}} \E{c \sim \costD_{i}} [\payment{\cost}{i} \selectionProb{\cost}{i} \cdot \decision] \le \budget.\\
\end{equation}
\end{Def}

We enforce the budget constraint on the expected payment in  Definition \ref{def_budget}, which is a common approach in the prior work, e.g. \cite{aaron_grant, dataacquisition, shuran_yiling}. Note that the expectation is taken with respect to the marginal distribution of agent's cost $\costD$, and the agents who join the platform as only they potentially receive a payment by the allocation rule. For convenience, we use $\group{\agent}$ to denote the group index of agent $\agent$.

We now formally define bias and variance of the  estimator of the analyst.  Recall the analyst seeks to optimize a tradeoff between bias and variance in our setting. Specifically, the analyst wishes to learn an underlying parameter $\mu$ of the whole population, and he obtains an estimator $\estimator$ based on participants' data.
The estimator $\estimator$ is viewed as a random variable, and the randomness of the estimator comes from the joint distribution $\jointD$ and the allocation rule $\allocation$. We view the estimator as drawn from a distribution $\tD$.


\begin{Def}[Bias]\label{def_bias}
Given an allocation rule $\allocation$ and an instance of the true distribution $\jointD$, the bias of an estimator $\estimator$ is defined as follows:
\begin{equation}
    \bias = \left|\E{\estimator \sim \tD} \left[ \estimator - \parameter \right]\right|.
\end{equation}
\end{Def}

\begin{Def}[Variance]\label{def_variance}
Given an allocation rule $\allocation$ and an instance of the true distribution $\jointD$, the variance of an estimator $\estimator$ is defined as follows:
\begin{equation}
    \variance = \E{\estimator \sim \tD} \left[ (\estimator - \EFnc [\estimator])^2 \right].
\end{equation}
\end{Def}



Since the analyst does not know the joint distribution $\jointD$, he cannot directly optimize over bias and variance of the estimator. Instead, we consider the goal of minimizing the worst-case linear combination of bias (Definition \ref{def_bias}) and variance (Definition \ref{def_variance}) over all instantiations of $\jointD$ that are consistent with the marginal cost distribution $\costD$.

\begin{Def}[Worst-Case Bias-Variance Trade-off] \label{def_obj}
 \textcolor{black}{Given an allocation rule $\allocation$, an instance of the true distribution $\jointD$, and a combination parameter $\weight$, the worst-case bias-variance trade-off is the supremum of the linear combination of bias and variance:}  
 \begin{equation}
     \sup_{\costD \text{ consistent with } \jointD}\quad   \weight \cdot \variance+ (1-\weight)\bias.
 \end{equation}
\end{Def}

Using the above, we formally define the mechanism design problem as follows.

\begin{Def}[Mechanism Design Task]\label{def_mech_design}
Given an estimator $\estimator$, cost distribution $\costD$, correlation strength parameters, and fixed parameter $\weight$, the analyst aims to minimize a worst-case bias-variance trade-off by designing payment rule $\pay$ and allocation rule $\allocation$ subject to truthfulness and budgetary constraints:
\begin{equation}\label{equ_mech_design}
\begin{aligned}
\inf_{\allocation, \pay} \sup_{\costD \text{ consistent with } \jointD}\quad &   \weight \cdot \variance+ (1-\weight)\bias\\
\textrm{s.t.} \quad & \sum_{\agent: i = \group{\agent}} \E{c \sim \costD_{i}} [\payment{\cost}{i} \selectionProb{\cost}{i} \cdot \decision] \le \budget\\
& \selectionProb{\cost}{i}\left[\payment{\cost}{i} -c + \ppl{\cost}{i}\right] \geq \selectionProb{\reportedCost}{i}\left[\payment{\reportedCost}{i} -c + \ppl{\cost}{i}\right], \quad
    \forall \reportedCost \neq \cost, \forall i\\
\end{aligned}
\end{equation}
\end{Def}

Note that each agent suffers a privacy cost due to data correlation even if she does not join the platform. This negative utility offers the analyst freedom to compensate the agent a partial cost rather than the whole cost while motivating the agent to join the platform (individual rationality). As a result, we do not have to constrain the mechanism to satisfy a positive value of participation, which is a common requirement in the prior work in this area.  This is a novel, significant consequence of considering privacy leakage.



\subsection{Equilibrium Characterization}\label{sec_equilibrium}

To characterize the equilibrium that results under a given mechanism design, we consider the agents' decisions given a fixed participation rate. Recall that the participation rate is an aggregation of agents' decisions on whether to participate, and that an agent's utility depends on the participation rate. We emphasize that an agent's decision does not depend on other agents' individual decisions, rather only on the participation rate as an aggregate. This is natural for large platforms where a single agent's decision has little impact on others'. 

Before defining the equilibrium concept, we first introduce some notation. Recall that there are $I$ groups of agents parameterized by data correlation strength. We use $\groupProb{i}$ to represent the likelihood of a random agent coming from group $i$ in the population. For agents in group $i$, the cost follows a continuous distribution $\costD_i$  with  a support set $\mathcal{C} \triangleq [\Cost_{\min},\Cost_{\max}]$. We use $\ratio_i$ to denote the participation rate for group $i$. Let $\ratioVec \triangleq [\ratio_i]_{1\le i \le I}$, an array of average participation rate for all the groups, denote the participation rate profile. Naturally we have the average participation rate in the population satisfying $\aveRatio = \sum_{1\le i\le I} \groupProb{i}\cdot \aveRatio_i$.

For a given participation rate profile, an agent makes decisions by weighing her utility under different choices. For an agent with cost $\cost$ in group $i$, we use a binary variable $d_i(c | \ratioVec)$ to denote her decision. If the agent joins the platform, $d_i(c | \ratioVec) = 1$; otherwise $d_i(c | \ratioVec) = 0$.  We consider a non-atomic model where a single agent's decision has no effect on the aggregate participation rate when the agent set is large enough \cite{marketdemand}. This is a realistic assumption for a large platform. Every agent is assumed to be rational, i.e., she decides to join the platform only if her utility of non-participation is not lower than that of participation. This translates to the following mathematical expression:
\begin{equation}\label{equ_brparticipaiton}
d_i(c|\ratioVec) = \begin{cases}
1, & \text{if}\ \ \max_{\reportedCost}\expectedUtility{\reportedCost}{\cost}{i}\ge -\gppl{\cost}{i},\\
0, &  \text{otherwise}.
\end{cases}
\end{equation}
Since both payment and selection probability depend on the reported cost, an agent tends to report the cost that maximizes her expected utility of participation $\expectedUtility{\reportedCost}{\cost}{i}$.

For an agent in group $i, 1\le i\le I$, her decision on whether to participate is closely related to the participation rate  $\ratio_i$  of group $i$ and that of the rest groups $\ratio_{-i}$. Meanwhile, we notice that there are many possible participation rate profiles $\ratioVec$ that corresponds to one specific average participation rate $\ratio (= \sum_{1\le i\le I} \groupProb{i}\cdot \aveRatio_i)$. Thus, the average participation rate $\ratio$ alone is not enough to capture the agents' decisions. This motivates us to leverage the participation rate profile $\ratioVec$, instead of average participation rate $\ratio$ of all the groups, as the equilibrium concept.

We now define the equilibrium concept formally. This notion of equilibrium guarantees that agents' decisions are consistent with the participation rate profile. Note that we consider a strictly positive participation rate at equilibrium, i.e., $\ratioequ_i > 0, \forall i$, with an adequate budget; otherwise a zero participation rate for certain groups essentially leads to a biased estimator towards the rest of the groups with positive participation rates. 

\begin{Def}[Equilibrium]\label{def_equilibrium}
A participation rate profile  $\ratioVecequ=[\ratioequ_i]_{1\le i\le I}$ is an equilibrium, if for each group $i, 1\le i\le I$, the fraction of participating agents is exactly $\ratioequ_i$, i.e., 
\begin{equation}\label{equ_equilibrium}
   \int_{\cost_{\min}}^{\cost_{\max}}\mathbbm{1}\left[d_i(c|\ratioVecequ)=1\right]\cdot \costD_i(\cost)d\cost = \ratioequ_i, \quad \forall i.
\end{equation}
\end{Def}

\subsection{Payment Function}\label{sec_payment}

We now analyze the structure of the payment function in the mechanism associated with the equilibrium $\ratioVecequ$. The analyst's mechanism design problem in Definition \ref{def_mech_design} involves a  truthfulness constraint and a budgetary constraint. We first present the payment function that satisfies the truthfulness constraint for a given non-increasing allocation rule. Later, in Section \ref{sec_opt_tradeoff}, we further optimize over allocation rule to solve for the general optimization problem.

We present the payment function for group $i, 1\le i\le I$, and then characterize the requirements for a truthful mechanism. We begin with some definitions. For a fixed desired participation rate profile $\ratioVec$, we introduce  $\costthreshold{i}$  of group $i$, which satisfies the equality in \eqref{equ_ratio_threshold}. Later in this section, we explain in Corollary \ref{cor_threshold} that $\costthreshold{i}$ can be viewed as a cost threshold for agents in group $i$, and agents whose cost is no greater than this value would like to participate. 

\begin{equation}\label{equ_ratio_threshold}
	 \ratio_{i} = \int_{\Cost_{\min}}^{\costthreshold{i}} \costD_i(\Cost) d\Cost, 1\le i\le I.
\end{equation}
Next, we give  the payment function of group $i$ as follows

\begin{equation}\label{equ_payment}
\begin{aligned}
\payment{\reportedCost}{i} =\reportedCost-\ppl{\reportedCost}{i} 
+\frac{1}{\selectionProb{\reportedCost}{i}}\left(\left(1-b(\ratioVec_i;\gCorrelationVec_i)\right)\int_{\reportedCost}^{\cost_{\max}}\selectionProb{z}{i}dz + \constantpart{i}\right),
\end{aligned}
\end{equation}
where
\begin{equation}\label{equ_constantpart}
\constantpart{i} = \ppl{\costthreshold{i}}{i}-\gppl{\costthreshold{i}}{i}-\left(1-b(\ratioVec_i;\gCorrelationVec_i)\right)\int_{\costthreshold{i}}^{\cost_{\max}}\selectionProb{z}{i}dz - \benefit{\ratio}.
\end{equation}

The characterization of the payment function relies on allocation rule $\allocation$ and the associated participation rate profile $\ratioVec$. According to Myerson's Lemma \cite{myerson}, the allocation rule should be non-increasing
{in} the reported cost $\reportedCost$ in order to induce agents' truthfulness. Further, we present the requirements for a truthful mechanism in Theorem  \ref{pro_payment}. Such a mechanism can uniquely induce agents' truthful reporting of costs.  If the selection probability is strictly decreasing, the mechanism can induce agents' strict truthfulness.\footnote{By strict truthfulness, we are saying that the agent can maximize her utility of participation if and only if she truthfully reports her cost, i.e., $\utility{\cost}{\cost}{i}> \utility{\reportedCost}{\cost}{i},$ for all $\reportedCost \neq \cost$. That is, by removing the equality of (\ref{equ_truthfulness}) in Definition \ref{def_truthful}, we get the definition of strict truthfulness.} We highlight this result in Theorem \ref{pro_payment} and put its proof in Appendix \ref{appendixsec_payment}. 
	\begin{The}\label{pro_payment}
	The mechanism is truthful (strictly truthful, respectively) and induces participation profile $\ratioVecequ$ at equilibrium if and only if for every group $i \in [I]$, both of the following statements hold:
	\begin{enumerate}	[label=(\roman*)]
	    \item allocation rule $\selectionProb{\reportedCost}{i}$ is a non-increasing (decreasing, respectively) function of the reported cost $\reportedCost$;
	    \item payment function is given as in Equation \eqref{equ_payment} (as a function of $A$), where $\ratioVec \triangleq \ratioVecequ$ is the desired participation rate profile.
	\end{enumerate}    
\end{The}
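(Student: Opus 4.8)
The plan is to establish the two-way characterization in the spirit of Myerson's Lemma, adapted to the present setting where the ``value'' an agent attaches to being selected depends on her reported cost through the privacy term $\ppl{\cost}{i}$. First I would fix a group $i$ and rewrite the expected utility of a participating agent with true cost $\cost$ reporting $\reportedCost$ using \eqref{expected_utility}, isolating the selection-dependent part
\begin{equation*}
\expectedUtility{\reportedCost}{\cost}{i} = \selectionProb{\reportedCost}{i}\bigl[\payment{\reportedCost}{i} - \cost + \ppl{\cost}{i}\bigr] - \ppl{\cost}{i} + \benefit{\ratio}.
\end{equation*}
Define the ``gain'' function $G_i(\reportedCost) \triangleq \selectionProb{\reportedCost}{i}\bigl[\payment{\reportedCost}{i} - \reportedCost + \ppl{\reportedCost}{i}\bigr]$, so that truthfulness \eqref{equ_truthfulness} becomes, after cancelling the $\ppl{\cost}{i}$ terms that multiply $\selectionProb{\reportedCost}{i}$,
\begin{equation*}
G_i(\cost) \ge G_i(\reportedCost) + \selectionProb{\reportedCost}{i}\bigl[\ppl{\cost}{i} - \ppl{\reportedCost}{i} - (\cost - \reportedCost)\bigr], \qquad \forall \reportedCost.
\end{equation*}
By Assumption~\ref{combine}, $\ppl{\cost}{i} - \cost$ is non-increasing in $\cost$ (since $\ppl{\cost}{i} \le \cost$ and the difference $\ppl{\cost}{i} - \gppl{\cost}{i}$ is increasing — I would double-check the precise monotonicity needed and invoke Assumption~\ref{Ass_linearincost}, under which $\ppl{\cost}{i} - \cost = -\cost(1 - b(\ratioVec_i;\gCorrelationVec_i))$ is linear and decreasing because $b \le 1$). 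This is exactly the single-crossing structure needed: the standard Myerson argument then yields that incentive compatibility holds for all $\reportedCost$ iff $\selectionProb{\reportedCost}{i}$ is non-increasing and $G_i$ satisfies the integral ``envelope'' identity obtained by integrating $G_i'(\reportedCost) = \selectionProb{\reportedCost}{i}\cdot\frac{d}{d\reportedCost}\bigl[\ppl{\reportedCost}{i} - \reportedCost\bigr]$, i.e. $G_i(\reportedCost) = G_i(\cost_{\max}) + (1 - b(\ratioVec_i;\gCorrelationVec_i))\int_{\reportedCost}^{\cost_{\max}}\selectionProb{z}{i}\,dz$ under Assumption~\ref{Ass_linearincost}. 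Solving $G_i(\reportedCost) = \selectionProb{\reportedCost}{i}[\payment{\reportedCost}{i} - \reportedCost + \ppl{\reportedCost}{i}]$ for $\payment{\reportedCost}{i}$ reproduces \eqref{equ_payment} up to the additive constant $\constantpart{i}$, which plays the role of $G_i(\cost_{\max})$ divided through.

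Next I would pin down the constant $\constantpart{i}$ using the equilibrium requirement. The point of $\costthreshold{i}$, defined by \eqref{equ_ratio_threshold}, is that it is the marginal cost: an agent with cost exactly $\costthreshold{i}$ must be indifferent between joining and not joining. Writing out $\max_{\reportedCost}\expectedUtility{\reportedCost}{\cost}{i} = \expectedUtility{\cost}{\cost}{i}$ (truthful reporting is optimal, as just shown) and setting the indifference condition $\expectedUtility{\costthreshold{i}}{\costthreshold{i}}{i} = -\gppl{\costthreshold{i}}{i}$ gives one linear equation in $\constantpart{i}$; solving it yields exactly \eqref{equ_constantpart}. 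I then need the converse monotonicity check: that with this choice of threshold and payment, the net utility of joining minus not joining, namely $\expectedUtility{\cost}{\cost}{i} + \gppl{\cost}{i}$, is non-increasing in $\cost$, so that precisely the agents with $\cost \le \costthreshold{i}$ join — this is where the assumption that $\ppl{\cost}{i} - \gppl{\cost}{i}$ is \emph{increasing} in $\cost$ enters, guaranteeing the ``join set'' is a down-closed interval and hence that \eqref{equ_ratio_threshold} indeed produces participation rate $\ratio_i = \ratioequ_i$, matching Definition~\ref{def_equilibrium}.

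For the strict-truthfulness refinement I would observe that when $\selectionProb{\reportedCost}{i}$ is strictly decreasing, the inequality in the Myerson integral step is strict for $\reportedCost \ne \cost$ (the integral of a strictly monotone function over a nondegenerate interval is strictly smaller/larger), giving $\utility{\cost}{\cost}{i} > \utility{\reportedCost}{\cost}{i}$. The main obstacle I anticipate is handling the non-standard value function cleanly: because the per-selection surplus $\payment{\reportedCost}{i} - \cost + \ppl{\cost}{i}$ depends on the \emph{true} cost $\cost$ both directly and through $\ppl{\cost}{i}$, the usual Myerson reduction does not apply verbatim, and I must verify that Assumptions~\ref{combine} and~\ref{Ass_linearincost} supply exactly the monotone single-crossing property that makes the envelope argument go through in both directions — in particular that linearity in Assumption~\ref{Ass_linearincost} is what lets the envelope integral be evaluated in closed form as in \eqref{equ_payment}, and that without it one would only get an implicit characterization. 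A secondary subtlety is the simultaneous ``if and only if'' with the equilibrium condition: I need to argue that any truthful mechanism inducing $\ratioVecequ$ must have its threshold cost equal to the $\costthreshold{i}$ of \eqref{equ_ratio_threshold} (uniqueness of the threshold given the participation rate, using continuity and strict positivity of $\costD_i$ on $\mathcal{C}$), so that the constant is forced to be \eqref{equ_constantpart}.
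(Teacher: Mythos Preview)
Your proposal is correct and follows essentially the same route as the paper: a Myerson-style envelope argument (the paper computes $\partial \expectedUtility{\reportedCost}{\cost}{i}/\partial \reportedCost$ directly and applies the envelope theorem, while you package the same computation via the gain function $G_i$, but the single-crossing ingredient and the resulting integral identity are identical), followed by pinning down $\constantpart{i}$ via indifference of the marginal type $\costthreshold{i}$ and checking that the net utility $\expectedUtility{\cost}{\cost}{i}+\gppl{\cost}{i}$ is decreasing in $\cost$ using the monotonicity of $\ppl{\cost}{i}-\gppl{\cost}{i}$ from Assumption~\ref{combine}. Your identification of Assumption~\ref{Ass_linearincost} as the reason the envelope integral closes into the explicit form \eqref{equ_payment} is exactly right.
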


For a mechanism as described in Theorem  \ref{pro_payment}, the agent would like to truthfully report her overall cost, as truthful reporting can maximize her expected utility.  The monotonicity of the allocation rule indicates that an agent with a high reported cost in the same group is less likely to be selected and get a payment.  
Furthermore, we would like to emphasize that the strictly decreasing property of the allocation rule induces strict truthfulness of the mechanism. Suppose that the allocation rule is fixed in a certain range of reported cost (i.e., not decreasing).  By reporting any cost in this range, the agents in this range can get the highest utility, which does not satisfy strict truthfulness definition. Thus, a strictly decreasing allocation rule is necessary to induce strict truthfulness.

\begin{Cor}\label{cor_threshold}
Under the mechanism in Theorem \ref{pro_payment}, the participation decisions of agents in group $i$ satisfy $d_i(c | \ratioVecequ) = 1$ if $c\le \costthreshold{i}$ and $d_i(c | \ratioVecequ) = 0$ otherwise.
\end{Cor}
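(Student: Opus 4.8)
The plan is to turn the participation criterion \eqref{equ_brparticipaiton} into a single scalar inequality in the cost $\cost$ and then read off the threshold from a monotonicity argument. First, since Theorem~\ref{pro_payment} asserts that the mechanism in question is truthful and induces $\ratioVecequ$, the inner maximization over reports in \eqref{equ_brparticipaiton} is attained at the truthful report; hence an agent with cost $\cost$ in group $i$ joins if and only if $\expectedUtility{\cost}{\cost}{i} \ge -\gppl{\cost}{i}$. Substituting the payment rule \eqref{equ_payment} into \eqref{expected_utility}, the quantity $\selectionProb{\cost}{i}\bigl[\payment{\cost}{i}-\cost+\ppl{\cost}{i}\bigr]$ simplifies to $\bigl(1-b(\ratioVec_i;\gCorrelationVec_i)\bigr)\int_{\cost}^{\cost_{\max}}\selectionProb{z}{i}\,dz + \constantpart{i}$, so the criterion becomes
\[
\bigl(1-b(\ratioVec_i;\gCorrelationVec_i)\bigr)\int_{\cost}^{\cost_{\max}}\selectionProb{z}{i}\,dz \;-\; \ppl{\cost}{i} \;+\; \gppl{\cost}{i} \;+\; \constantpart{i} \;+\; \benefit{\aveRatio} \;\ge\; 0 .
\]

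Next I would define $\phi_i(\cost) := \bigl(1-b(\ratioVec_i;\gCorrelationVec_i)\bigr)\int_{\cost}^{\cost_{\max}}\selectionProb{z}{i}\,dz - \ppl{\cost}{i} + \gppl{\cost}{i}$ and plug in the definition \eqref{equ_constantpart} of $\constantpart{i}$; since $\aveRatio=\ratio$, this yields $\constantpart{i}+\benefit{\aveRatio} = -\phi_i(\costthreshold{i})$, so the participation criterion is exactly $\phi_i(\cost)\ge\phi_i(\costthreshold{i})$. In other words, $\constantpart{i}$ has been chosen precisely so that $\costthreshold{i}$ is the indifference cost.

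It then remains to verify that $\phi_i$ is strictly decreasing on the support $\mathcal{C}=[\cost_{\min},\cost_{\max}]$. The map $\cost\mapsto\int_{\cost}^{\cost_{\max}}\selectionProb{z}{i}\,dz$ is non-increasing because $\selectionProb{\cdot}{i}\ge 0$, and $1-b(\ratioVec_i;\gCorrelationVec_i)\ge 0$ because Assumption~\ref{Ass_linearincost} gives $\ppl{\cost}{i}=\cost\cdot b(\ratioVec_i;\gCorrelationVec_i)$ while Assumption~\ref{combine} forces $\ppl{\cost}{i}\le\cost$; hence the first term of $\phi_i$ is non-increasing. By Assumption~\ref{combine} the difference $\ppl{\cost}{i}-\gppl{\cost}{i}$ is continuously increasing in $\cost$, so $-\ppl{\cost}{i}+\gppl{\cost}{i}$ is strictly decreasing; adding the two pieces shows $\phi_i$ is strictly decreasing. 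Therefore $\phi_i(\cost)\ge\phi_i(\costthreshold{i})$ holds iff $\cost\le\costthreshold{i}$ (with equality at $\cost=\costthreshold{i}$ still counting as participation, consistent with the weak inequality in \eqref{equ_brparticipaiton}), which is the claimed characterization; that $\costthreshold{i}\in\mathcal{C}$ is guaranteed by its defining identity \eqref{equ_ratio_threshold}.

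I do not expect a genuine obstacle here: the work is essentially the bookkeeping of the substitution and the observation that the additive constant $\constantpart{i}$ is engineered exactly to align the cutoff with $\costthreshold{i}$. The only place to be careful is confirming $1-b(\ratioVec_i;\gCorrelationVec_i)\ge 0$ and the strict monotonicity of $\cost\mapsto\ppl{\cost}{i}-\gppl{\cost}{i}$, both of which are immediate from Assumptions~\ref{combine} and~\ref{Ass_linearincost}.
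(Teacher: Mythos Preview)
Your proposal is correct and follows essentially the same route as the paper: the argument appears in part~2 of the proof of Theorem~\ref{pro_payment} in Appendix~\ref{appendixsec_payment}, where the truthful expected utility is computed, compared to $-\gppl{\cost}{i}$, and shown to be continuous and strictly decreasing in $\cost$ so that the participation set is exactly $\{\cost\le\costthreshold{i}\}$. One cosmetic caution: avoid the name $\phi_i$ for your auxiliary function, since the paper reserves $\phi_i$ for the virtual cost in Definition~\ref{def_vcost}.
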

This result highlights that, under the mechanism in Theorem \ref{pro_payment}, the decisions of the agents in each group demonstrate a threshold structure with respect to overall cost $\Cost$. Agents in group $i$ will participate and truthfully report her overall cost, if her overall cost is no greater than the threshold $\costthreshold{i}$; otherwise she will not participate.  


\subsection{Properties of the mechanism}\label{sec_properties}

We now study the connections among three key components of the mechanism design: data correlation, group participation rate, and payment (see Properties \ref{pr_pay_ratio}-\ref{pr_ratio_correlation} later in this section).  Existing discussions (e.g., \citet{acemoglu_privacy}) of the connections among data correlation, number of users, and payment, which demonstrated some interesting properties of data trading, motivate our study. For example, \citet{acemoglu_privacy} shows that the total payment to users is non-monotone in the number of users in some scenarios. It is interesting to explore such connections under the settings of this work. Moreover, studying the connections helps uncover the impacts these issues have on the mechanism design under our settings when we incorporate information leakage.

Many of the properties below highlight the complexity of data marketplaces with information leakage.  For example, our first property highlights that, even though a larger average participation rate means more participants,  the total payment for the entire group does not necessarily increase. This is because the individual payment for these agents will decrease as well.

\begin{Prop}\label{pr_pay_ratio}
Given a fixed selection probability $\allocation$, the total expected payment for agents in group $i$ is non-monotonic
{in} the average participation rate of group $i$.
\end{Prop}


However, despite the non-monotonicity of the expected payment 
{in} the average participation rate, there are some intuitive monotonicity properties that do hold for the payment after imposing one additional natural assumption.

\begin{Prop}\label{pr_payment_correlation}
Suppose the difference $\ppl{\costthreshold{i}}{i}-\gppl{\costthreshold{i}}{i}$ is non-increasing
{in} correlation strength. The payment to the agents in group $i$ who join the platform is decreasing 
{in} both intra-group correlation strength $\intraCor{i}$ and inter-group correlation strength  $\interCor{i}$. 
\end{Prop}

In the property above, we further require the difference function, i.e., $\ppl{\costthreshold{i}}{i}-\gppl{\costthreshold{i}}{i}$, to be non-increasing
{in} both intra-group correlation strength and inter-group correlation strength. Intuitively, this difference can be viewed as the extra privacy cost for the agent with cost equal to $\costthreshold{i}$ due to her own activities on the platform, and thus it might be even independent of data correlation strength. As a consequence, the payment for the agents who join the platform in group $i$ is decreasing in both intra-group and inter-group correlation strength, i.e., $\intraCor{i}$ and  $\interCor{i}$.

Some intuition for the above property comes from the observation that, when there is strong correlation among the agents' data, an individual will suffer a large privacy leakage even if they do not join the platform. As a result, they are incentivized to join the platform even when the individual payment is relatively low. Naturally, given a fixed participation rate profile and allocation rule, the total expected payment for the entire group decreases as a result.

However, in contrast to the above intuition, we do not have monotonicity of the participation rate in this setting, as we show in the property below.  Note that for most of this paper we treat the participation rate as something fixed (and optimized) by the analyst.  However, understanding its behavior is important for performing such an optimization.

\begin{Prop}\label{pr_ratio_correlation}
In group $i$, the group participation rate is non-monotonic in both the intra-group correlation strength $\intraCor{i}$ and inter-group correlation strength  $\interCor{i}$.
\end{Prop}

To develop some intuition for the above, we note that the analyst's mechanism design is subject to a budget constraint. Although stronger data correlation indicates lower payment, the total expected payment is also affected by the selection probability. If the selection probability increases, the analyst needs to induce a lower group participation rate so that the total expected payment does not exceed the budget. Otherwise, the analyst could induce a higher participation rate.

Our last property investigates what happens as the budget increases.  In this case the analyst can exploit the additional budget to improve either bias or variance. On the one hand, increasing the selection probability helps reduce the uncertainty of collected data, thus decreases the variance. On the other hand, inducing an equilibrium with a higher participation rate helps cover a wider range of participants' data, thus decreases the bias. 

\begin{Prop}\label{pr_pay_biasvariance}
As budget increases, the estimator achieves a better bias-variance trade-off, i.e., the optimal objective of the overall optimization problem as in Definition \ref{def_mech_design} reduces. 
\end{Prop}

\section{Optimization of the Worst-Case Bias-Variance Trade-off}\label{sec_opt_tradeoff}

We now discuss the design of a mechanism that optimizes the worst-case bias-variance trade-off.  First, we introduce the analyst's choice of estimator in Section \ref{sec_estimator}. Second, we  characterize the worst-case bias-variance trade-off of the estimator, given a selection rule and a participation rate profile at equilibrium, in Section \ref{sec_tradeoff}. Finally, we provide our full mechanism in Section \ref{sec_solution}. 

From a mathematical perspective, our bias-variance optimization problem optimizes over two types of variables: (i) the allocation rule, which is a collection of  probabilities that participants are selected to report their data and get the payments, and (ii) the participation rate profile in equilibrium, which controls the fraction of agents in each group that join the platform in the first place. However, in practice, a platform is unlikely to be willing to sacrifice a high participation rate to improve the performance of any single statistical task. Thus, optimizing over the participation rate is not attractice.  Thus, here, we treat the participation rate as given so that our work can  apply to any platform at any growth stage, e.g., either a start-up platform with a low participation rate or a popular platform with a high participation rate.

In what follows, we first show that the optimal allocation rule under a fixed equilibrium participation rate profile is monotone in the agents' virtual costs, as defined in Definition \ref{def_vcost} in Section \ref{sec_solution}, and provide a closed-form solution for this allocation rule. Given that optimizing over the participation rate is unrealistic, we instead present structural results and conditions for a full participation rate profile. This provides a connection to previous work on data acquisition since, when there is a full participation rate profile, the analyst sees a representative sample of the population and obtains an unbiased estimator. 

\subsection{Estimator}\label{sec_estimator}

The objective of the analyst's mechanism design problem in Definition \ref{def_mech_design} is to minimize the worst-case bias-variance trade-off of a given estimator. In this section, we describe the estimator of our choice. We then characterize its worst-case bias-trade-off variance in Section \ref{sec_tradeoff}.

The analyst is interested in estimating a population statistic using the mechanism. In this paper, we focus on estimating a mean over the population in question. For example, the analyst wishes to understand the average income, average BMI, or average ratings for a new movie.

We use the Horvitz-Thompson estimator \cite{HTestimator}, which is the unique unbiased linear estimator for the settings where an analyst must sample at different rates from different sub-populations; in our case, agents in different groups or with different costs may be selected with different probabilities $A(\cdot)$, and in turn define several distinct sub-populations. In such a setting, the standard sample mean estimator is biased towards the sub-populations from which the analyst samples most (relatively to the sub-population size). The Horvitz-Thompson estimator eliminates this bias by re-weighting each data point by the inverse of the selection probability associated with its sub-population. In our setting, the Horvitz-Thompson estimator reweights the data of each agent $k$ by $1/A^k$, where $A^k$ is the selection probability associated with agent $k$.

We define the Horvitz-Thompson estimator formally below in Definition \ref{Def_estimator}. Recall that $\ParticipantSet$ denotes the set of participants (i.e., agents who join the platform) and is of size $N$. Let $\mathcal{O}\subset \mathcal{N}$ denote the set of participants who are selected to report their data. Let $x_k$ be the data of agent $k$.
Suppose that the agents' data is independently and identically distributed. The Horvitz-Thompson estimator is then an unbiased estimator for the mean of data of the participants in set $\ParticipantSet$, i.e., the expectation of the estimator is equal to the expected value of the participants' data.

\begin{Def}[Horvitz-Thompson estimator] \label{Def_estimator}
The Horvitz-Thompson estimator is given by
\begin{equation}
\estimator = \frac{1}{N}\sum_{k\in \mathcal{N}} \frac{x_k \cdot \mathbbm{1}\{k\in \mathcal{O}\}}{A^k},
\end{equation}
where $A^k$ is the selection probability for agent $k$.
\end{Def}

Although the estimator is unbiased with respect to the set of participants $\ParticipantSet$, it may be biased with respect to the overall agent set $\agentSet$, when the average participation rate is less than one. When some agents do not participate, their data are not fed to the estimator and the estimator fails to capture the information of the non-participants. On the other hand, if the average participation rate is equal to one, i.e., all the agents are willing to participate, the estimator is naturally unbiased.  In conclusion, the participation rate at equilibrium is an important factor in the issue of estimation bias. Since there is an intrinsic trade-off between bias and variance, the  analyst might prefer low variance instead of low or zero bias in some cases, e.g., when the linear weight associated with the variance in equation \eqref{equ_mech_design} is high and the budget is large enough to use a large selection probability. Later, in Section \ref{sec_solution}, we present conditions under which the analyst prefers to enforce a full participation rate to achieve an unbiased estimator.

\subsection{Characterizing the worst-case bias-variance trade-off}\label{sec_tradeoff}
We characterize the worst-case bias-variance trade-off of the estimator in this section. We start by introducing some key notation. For simplicity,  we assume that the agent's data is binary from the set of $\{0,1\}$. In Remark \ref{rem_tradeoff}, we highlight that our analysis can be generalized to continuous data in $[0,1]$. The probability of an agent being in group $i$ is denoted by $\groupProb{i}$. Let $\groupProbVec\triangleq [\groupProb{i}]_{1\le i\le I}$. Let $p_{i}(\cost) \triangleq Pr[x = 1|c, i]$ denote the probability of an agent $k$ with cost $c$ in group $i$ having data point $x_k = 1$. Recall that according to Corollary \ref{cor_threshold}, an agent in group $i$ will join in the platform if her cost $\cost$ is no greater than the participation threshold $\costthreshold{i}$. Let $A \triangleq \{\selectionProb{c}{i}\}_{1\le i\le I}$, consisting of the selection probabilities in all groups. Furthermore, we assume a positive correlation between data and cost, which is mathematically described as follows.
\begin{assumption}\label{a_positivecorrealtion}
For each group $i$, $p_{i}(\cost)$ is a non-decreasing function of cost $\cost$. 
\end{assumption}
The assumption is often reasonable in practice. If the agent's data is of a higher value, then her cost is more likely to be higher as well, and vice versa. For example, agents with a higher income or higher BMI might care more about their privacy, hence their reporting costs tend to be higher for them as well.

\begin{Lem}\label{lem_tradeoff}
Under Assumption \ref{a_positivecorrealtion}, fix allocation rule $A$ and participation rate profile $\ratioVecequ$. Conditional on $\mathcal{N}$, the \textcolor{black}{supremum}  of linear combination of bias and variance of the estimator $\estimator$ in Definition \ref{Def_estimator} is 
\begin{equation}\label{equ_tradeoff}
\begin{aligned}
T(A,\ratioVecequ) = &\sup_{\costD \text{ consistent with } \jointD}  \weight \cdot \variance+ (1-\weight)\bias\\& =\sup_{p_{i}(c)\in [0,1], c \le \costthreshold{i} } \frac{\weight}{\agentNum\left(\ratioequ\right)^2}\left(\sum_{i} \groupProb{i}\int_{\cost_{\min}}^{\costthreshold{i}}\frac{p_i(c)}{\selectionProb{c}{i}}f_i(\cost)d\cost-\frac{1}{\ratioequ}\left(\sum_{i} \groupProb{i}\int_{\cost_{\min}}^{\costthreshold{i}}p_{i}(\cost)f_i(\cost)d\cost\right)^2\right) \\ &+(1-\weight)(1-\ratioequ) \cdot \left(1-\frac{1}{\ratioequ}\sum_{i} \groupProb{i}\int_{\cost_{\min}}^{\costthreshold{i}}p_{i}(\cost)f_i(\cost)d\cost\right).
\end{aligned}
\end{equation}
\end{Lem}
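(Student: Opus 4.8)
The plan is to compute, conditional on the participant set $\mathcal{N}$, the bias and variance of $\estimator$ in closed form, and then maximize over all joint laws $\jointD$ consistent with the known cost marginals $\costD$. \textbf{Conditional moments.} For the first step I would fix $\mathcal{N}$: by Corollary \ref{cor_threshold}, at the equilibrium $\ratioVecequ$ the participants in group $i$ are exactly the agents with cost at most $\costthreshold{i}$, so in the non-atomic model group $i$ contributes a mass $\groupProb{i}\ratioequ_i$ of participants and $N=|\mathcal{N}|=\agentNum\ratioequ$. Writing $Y_k=x_k\mathbbm{1}\{k\in\mathcal{O}\}/A^k$, the variables $(Y_k)_{k\in\mathcal{N}}$ are mutually independent because the data $x_k$ are independent across agents and the selection indicators $\mathbbm{1}\{k\in\mathcal{O}\}\sim\mathrm{Bernoulli}(A^k)$ are independent of everything else. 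Using this independence together with $x_k^2=x_k$ for binary data, one gets $\mathbb{E}[Y_k]=p_{i(k)}(c_k)$ and $\mathbb{E}[Y_k^2]=p_{i(k)}(c_k)/A^k$, hence $\mathbb{E}[\estimator]=\frac1N\sum_{k\in\mathcal{N}}p_{i(k)}(c_k)$ and $\mathrm{Var}(\estimator)=\frac1{N^2}\sum_{k\in\mathcal{N}}(p_{i(k)}(c_k)/A^k-p_{i(k)}(c_k)^2)$. Converting the sums over $\mathcal{N}$ to integrals against $\{\groupProb{i}f_i\}$ on $[\cost_{\min},\costthreshold{i}]$ (the standard non-atomic limit) and substituting $N=\agentNum\ratioequ$ turns these into $\mathbb{E}[\estimator]=\bar{p}/\ratioequ$ and a variance of the form $\frac{1}{\agentNum(\ratioequ)^2}\,(\sum_i\groupProb{i}\int p_i/\selectionProb{c}{i}\,f_i\,dc-\sum_i\groupProb{i}\int p_i^2 f_i\,dc)$, where $\bar{p}\triangleq\sum_i\groupProb{i}\int_{\cost_{\min}}^{\costthreshold{i}}p_i f_i\,dc$. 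For the bias I would split the population mean as $\parameter=\bar{p}+\bar{p}_{\mathrm{out}}$ with $\bar{p}_{\mathrm{out}}\triangleq\sum_i\groupProb{i}\int_{\costthreshold{i}}^{\cost_{\max}}p_i f_i\,dc\in[0,1-\ratioequ]$, so that $\bias=|\bar{p}(1/\ratioequ-1)-\bar{p}_{\mathrm{out}}|$.

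\textbf{Worst case over $\jointD$.} The heart of the argument is the supremum over the conditional distributions $\{p_i(\cdot)\}$. The structural observation is that the variance depends only on the restrictions $p_i|_{[\cost_{\min},\costthreshold{i}]}$ (the participants), whereas the non-participant pieces $p_i|_{(\costthreshold{i},\cost_{\max}]}$ enter only through $\bar{p}_{\mathrm{out}}$, hence only through the bias. I would therefore maximize in two stages. First, holding the participant conditionals fixed, push $\bar{p}_{\mathrm{out}}$ to an endpoint to maximize $|\bar{p}(1/\ratioequ-1)-\bar{p}_{\mathrm{out}}|$; under Assumption \ref{a_positivecorrealtion} setting $p_i\equiv 1$ on $(\costthreshold{i},\cost_{\max}]$ is admissible (it keeps $p_i$ nondecreasing) and yields $\bar{p}_{\mathrm{out}}=1-\ratioequ$, so the worst-case bias is $(1-\ratioequ)(1-\bar{p}/\ratioequ)$. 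Second, for the quadratic term in the variance, Cauchy--Schwarz against the measure $\{\groupProb{i}f_i(c)\,dc:c\le\costthreshold{i}\}$, whose total mass is $\ratioequ$, gives $\sum_i\groupProb{i}\int p_i^2 f_i\ge\bar{p}^2/\ratioequ$, with equality exactly when the participant profile is constant --- and that same constant profile is the one making the bias bound tight. Hence replacing $-\sum_i\groupProb{i}\int p_i^2 f_i$ by $-\bar{p}^2/\ratioequ$ does not change the supremum, leaving $\{p_i(c)\in[0,1]:c\le\costthreshold{i}\}$ as the remaining free variables, and assembling the pieces reproduces the displayed $T(A,\ratioVecequ)$.

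\textbf{Main obstacle.} I expect the worst-case bias step, and its compatibility with the variance relaxation, to be the delicate part. One must verify that the absolute value does not make the opposite extreme $\bar{p}_{\mathrm{out}}=0$ binding --- this is where the monotonicity of Assumption \ref{a_positivecorrealtion} is genuinely used, since by $p_i(c)\le p_i(\costthreshold{i})$ on participants a large $\bar{p}$ forces the boundary values $p_i(\costthreshold{i})$, and hence $\bar{p}_{\mathrm{out}}$, to be large as well, pinning the worst case in the $(1-\ratioequ)(1-\bar{p}/\ratioequ)$ regime --- and one must check that the Cauchy--Schwarz relaxation of the quadratic term and the bias extremization are simultaneously attained by a single joint law, so that one obtains an identity between the two suprema rather than merely an inequality. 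By contrast, the conditional moment computation is routine once the independence structure is made explicit, and the passage from finite sums to integrals is the usual non-atomic approximation already invoked elsewhere in the paper.
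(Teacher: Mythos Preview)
The gap is in your variance computation, not in the bias step you flag as the main obstacle. By conditioning on the realized $(i_k,c_k)$'s of the participants you obtain
\[
\mathrm{Var}(\estimator)=\frac{1}{\agentNum(\ratioequ)^{2}}\Bigl(\sum_i \groupProb{i}\int_{\cost_{\min}}^{\costthreshold{i}}\frac{p_i}{A_i}\,f_i\,d\cost-\sum_i \groupProb{i}\int_{\cost_{\min}}^{\costthreshold{i}} p_i^{2}\, f_i\,d\cost\Bigr),
\]
whose quadratic term is $-\int p^{2}\,d\mu$ rather than the lemma's $-\bar p^{2}/\ratioequ$. Cauchy--Schwarz only tells you your expression is \emph{pointwise below} the lemma's; it does not force the two suprema over $p$ to coincide, and in general they do not. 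Take a single group with full participation ($\ratioequ=1$), $\weight=1$, and two cost values of mass $1/2$ each with selection probabilities $1$ and $1/2$: your objective is maximized at $(p_1,p_2)=(\tfrac12,1)$ with value $5/(8\agentNum)$, whereas the lemma's is maximized at $(0,1)$ with value $3/(4\agentNum)$. Neither maximizer is a constant profile, so your tightness argument (``equality in Cauchy--Schwarz at constant $p$, and the bias bound is tight there too'') cannot recover the missing $1/(8\agentNum)$.

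The paper avoids this issue entirely. It reads ``conditional on $\mathcal{N}$'' as conditioning on the set of participating indices, \emph{not} on their realized types: since agents are ex ante i.i.d.\ draws from the population, conditional on participation each $(i_k,c_k)$ is itself an i.i.d.\ draw from the participant law $\groupProb{i}f_i(c)\mathbbm{1}\{c\le\costthreshold{i}\}/\ratioequ$, so the $Y_k$'s are genuinely i.i.d. The per-summand variance is then $E_{i,c}[p_i(c)/A_i(c)]-\bigl(E_{i,c}[p_i(c)]\bigr)^{2}$, and dividing by $N=\agentNum\ratioequ$ gives the $-\bar p^{2}/\ratioequ$ term directly, with no relaxation step. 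Your treatment of the worst-case bias (using Assumption~\ref{a_positivecorrealtion} to fix the sign of $\mu-\mathbb{E}[\estimator]$ and then pushing the non-participant conditionals to $1$) matches the paper's; once you adopt the i.i.d.\ participant model for the variance, the lemma follows immediately.
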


\begin{remark}\label{rem_tradeoff}
While we focus on the binary data case, we note that if the agents' data points $x_k$ are taken from the interval $[0,1]$, our proof can immediately be adapted to show that
\begin{equation}
\begin{aligned}
T(A,\ratioVecequ) = &\sup_{\costD \text{ consistent with } \jointD}  \weight \cdot \variance+ (1-\weight)\bias\\ \le& \sup_{p_{i}(c)\in [0,1], c \le \costthreshold{i} } \frac{\weight}{\agentNum\left(\ratioequ\right)^2}\left(\sum_{i} \groupProb{i}\int_{\cost_{\min}}^{\costthreshold{i}}\frac{p_i(c)}{\selectionProb{c}{i}}f_i(\cost)d\cost-\frac{1}{\ratioequ}\left(\sum_{i} \groupProb{i}\int_{\cost_{\min}}^{\costthreshold{i}}p_{i}(\cost)f_i(\cost)d\cost\right)^2\right) \\ &+(1-\weight)(1-\ratioequ) \cdot \left(1-\frac{1}{\ratioequ}\sum_{i} \groupProb{i}\int_{\cost_{\min}}^{\costthreshold{i}}p_{i}(\cost)f_i(\cost)d\cost\right).
\end{aligned}
\end{equation}

\noindent That is, the expression that we optimize over is an over-estimate of the worst-case bias-variance trade-off; in turn, our approach still provides an upper bound and does not underestimate the bias-variance trade-off when the data is non-binary. 
\end{remark}

The proof of Lemma \ref{lem_tradeoff} and Remark \ref{rem_tradeoff} are in Appendix \ref{apendixsec_tradeoff}. The basic idea is to derive the variance and the worst-case bias separately, and they both are fully characterized by the distribution of participants' data (i.e., $p_i(\cost)$ for $c\le \costthreshold{i}$). Thus, optimizing a linear combination of bias and variance gives the worst-case linear combination of bias and variance.


Lemma \ref{lem_tradeoff} presents  an  analytical formulation of the objective function with an auxiliary variable $p_i(c)$. This  facilitates our later analysis of the optimization problem, which is the minimization of \eqref{equ_tradeoff} subject to truthfulness constraint and budget constraint.

To conclude the subsection, we provide some interpretation about how the allocation rule $A$ and participation rate profile $\ratioVecequ$ affect the trade-off between bias and variance.   If the value of allocation  rule $A_i(\cost)$ is higher, the variance $\variance$ is lower. However, due to the budget constraint, a higher value of allocation  rule indicates a lower participation rate, which means a possibly higher bias. Thus, due to the budget constraint,  the analyst needs to carefully trade off bias and variance. 

The analyst's objective is to minimize the above  worst-case bias-variance trade-off by choosing the allocation rule $A$ and the participation rate profile $\ratioVecequ$ in the equilibrium.  Next, we discuss the optimal allocation rule and participation rate profile in details.

\subsection{Optimal Allocation Rule and Conditions of Full Participation Rate}\label{sec_solution}

We now derive solutions to the worst-case bias-variance trade-off optimization problem under the budget and truthfulness constraints. First, we present the optimal allocation rule given the participation rate profile at equilibrium. Second, we identify sufficient conditions under which the optimal participation rate is one and the analyst obtains an unbiased estimator. 

\subsubsection{Optimal Allocation Rule}
We first study the design of the allocation rule, given a desired participation rate profile $\ratioVecequ$ at equilibrium. Recall that we focus on $\ratioVecequ$ in which $\ratioequ_i\ge \ratio_{\min}, \forall i$ for some positive value $\ratio_{\min}>0$ to avoid the trivial case of complete non-participation of a group. 

In this section, we first introduce the notion of virtual cost in Definition \ref{def_vcost}, which is analogous to the classical notion of virtual value in~\cite{myerson}. The virtual costs help characterize the payment function that, given a fixed allocation rule, induces agents to report their privacy costs truthfully. The characterization of payment function is in Theorem \ref{pro_payment}.  We note that the virtual cost of an agent in group $i$ depends on her privacy cost ($ \pplequ{\cost}{i}$) and the cost distribution ($F_i$ and $\costD_i$) in group $i$. 

\begin{Def}[Virtual Cost]\label{def_vcost}
 In group $i$, given participation rate profile $\ratioVecequ$, the virtual cost of an  agent with cost $\cost$ is 
 \begin{equation}\label{equ_vcost}
 \phi_i(\cost;\ratioVecequ) = \cost-\pplequ{\cost}{i} + (1-b(\ratioVecequ_i;\gCorrelationVec_i))\frac{F_i(c)}{f_i(c)}.
 \end{equation}
 Here, $F_i$ and $\costD_i$ are cdf and pdf of cost in group $i$, respectively. 
 \end{Def}

Recall that the vector $\ratioVecequ_i$  consists of $\ratio^*_i$ and  $\ratio^*_{-i}$. The virtual cost $\phi_i(\cost;\ratioVecequ)$  in group $i$ is decreasing in the within-group participation rate $\ratio^*_i$ and the outside-group rate $\ratio^*_{-i}$, as $h(\cdot)$ and $b(\cdot)$ are increasing in both  $\ratio^*_i$ and  $\ratio^*_{-i}$. Recall that a higher participation rate means more privacy cost. Furthermore, we assume that the virtual cost within a group is non-decreasing in cost $c$ as in \cite{dataacquisition,myerson} (Assumption \ref{a_regularity}). Note that a uniform distribution is one example satisfying the assumption.    

\begin{assumption}\label{a_regularity}[Regularity]
The virtual cost $\phi_i(\cost;\ratioVecequ)$  in group $i$ is non-decreasing. Furthermore, $\costD_i(\cost)$ is twice differentiable, in which case $F_i(\cost)\costD'_i(\cost)\le2(\costD_i(\cost))^2$.
\end{assumption}

 


Now we are ready to present the optimal allocation rule of each group.
\begin{The}\label{the_opt_pro}
	Under Assumptions \ref{combine}-\ref{a_regularity}, given 
	a desired participation rate  profile $\ratioVecequ$, the optimal allocation rule of group $i$ is

	\begin{align}\label{equ_general_opt_pro}
	    A_i(\cost) = \begin{cases}\chi, \ \ &\text{if} \ \  \phi_{i}(\cost;\ratioVecequ) \le \hat{\phi}, \\
	\frac{\eta}{\sqrt{\phi_{i}(\cost;\ratioVecequ)}}, \ \ &\text{if} \  \ \hat{\phi}<\phi_{i}(\cost;\ratioVecequ)\le \phi_{i}(\costthreshold{i};\ratioVecequ).
	\end{cases}
	\end{align}
The characterizations of the constants $\eta$, $\chi$, and $\hat{\phi}$ depends on the system parameters including the budget $B$, number of agents $s$, distribution of virtual cost. See \eqref{equ_continuousA} in Appendix \ref{appendixsec_proof_the_opt}.
\end{The}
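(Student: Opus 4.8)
The plan is to hold the equilibrium profile $\ratioVecequ$ (and hence each threshold $\costthreshold{i}$, fixed by \eqref{equ_ratio_threshold}) constant, so that the only remaining choice variable is the allocation rule $A=\{\selectionProb{\cdot}{i}\}_i$, ranging over the convex set of rules with $\selectionProb{\cdot}{i}$ non-increasing on $[\cost_{\min},\costthreshold{i}]$ and $0<\selectionProb{\cost}{i}\le 1$ (exactly the feasibility required by Theorem \ref{pro_payment}, the payment then being given by \eqref{equ_payment}). First I would rewrite the budget constraint of Definition \ref{def_mech_design}: substituting \eqref{equ_payment} and multiplying by $\selectionProb{\cost}{i}$ kills the $1/\selectionProb{\cost}{i}$ factor, and integrating $\int_{\cost}^{\cost_{\max}}\selectionProb{z}{i}dz$ by parts against $\costD_i$ on $[\cost_{\min},\costthreshold{i}]$ lets the $\benefit{\ratio}$ term and the $\big(1-b(\ratioVecequ_i;\gCorrelationVec_i)\big)$-weighted terms recombine via the definition \eqref{equ_constantpart} of $\constantpart{i}$; the per-agent expected payment of group $i$ collapses to $\int_{\cost_{\min}}^{\costthreshold{i}}\selectionProb{\cost}{i}\,\phi_i(\cost;\ratioVecequ)\,\costD_i(\cost)\,d\cost$ plus a quantity that does not depend on $A$. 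Summing over the agents of each group, the budget constraint becomes the affine-in-$A$ inequality $\sum_i n_i\int_{\cost_{\min}}^{\costthreshold{i}}\selectionProb{\cost}{i}\,\phi_i(\cost;\ratioVecequ)\,\costD_i(\cost)\,d\cost\le\bar B$ for a modified budget $\bar B$, and Assumptions \ref{combine}--\ref{Ass_linearincost} give $b(\ratioVecequ_i;\gCorrelationVec_i)\in[0,1]$ and hence $\phi_i\ge 0$, so increasing any $\selectionProb{\cost}{i}$ consumes budget. The objective is the supremum $T(A,\ratioVecequ)$ from Lemma \ref{lem_tradeoff}.

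I would then dispose of the inner supremum over the adversarial conditional probabilities $p_i(\cdot)$ in \eqref{equ_tradeoff}. That integrand is affine in $\{p_i(\cost)\}$ apart from the term $-\big(\sum_i\groupProb{i}\int p_i\costD_i\big)^2$, so $T(A,\cdot)$ is concave on the convex set $\{p_i:[\cost_{\min},\costthreshold{i}]\to[0,1]\}_i$, while $T(\cdot,p)$ is convex in $A$ (because of the $1/\selectionProb{\cost}{i}$ terms) on the feasible set of allocation rules, which is compact (Helly's selection theorem for monotone functions). A minimax theorem then permits exchanging $\inf_A\sup_p$ for $\sup_p\inf_A$, so it suffices to exhibit a saddle point. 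For fixed $p$, $\inf_A T(A,p)$ is convex; introducing a multiplier $\lambda>0$ for the budget constraint (active at the optimum, since raising $\selectionProb{\cost}{i}$ where $\phi_i>0$ both increases the objective and tightens the budget) and minimizing pointwise over $\selectionProb{\cost}{i}\in(0,1]$, the pointwise objective has the shape $p_i(\cost)/\selectionProb{\cost}{i}+\lambda\cdot(\text{const})\cdot\phi_i(\cost)\cdot\selectionProb{\cost}{i}$, which is convex with interior minimizer proportional to $\sqrt{p_i(\cost)/\phi_i(\cost)}$, truncated above by $1$ and by the non-increasing constraint.

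Imposing the saddle-point conditions then closes the loop. The first-order condition for the adversary's $p$ against a non-increasing $A$ makes $p^*_i$ a threshold rule in $\cost$: it equals $1$ where $1/\selectionProb{\cost}{i}$ exceeds a group-independent constant and $0$ otherwise. On the sub-interval where $\selectionProb{\cdot}{i}$ no longer enters the objective but must still respect monotonicity, the optimal $A$ is therefore forced to be a constant, and on the complementary sub-interval it takes the truncated square-root branch; invoking the regularity Assumption \ref{a_regularity} --- equivalently, that $\cost+F_i(\cost)/\costD_i(\cost)$ is non-decreasing, which follows from $F_i\costD_i'\le 2\costD_i^2$ and makes $\phi_i(\cdot;\ratioVecequ)$ non-decreasing --- to certify that the candidate is genuinely non-increasing in $\cost$ (so the payment \eqref{equ_payment} is valid), one recovers exactly the two-piece form of the theorem: $\selectionProb{\cost}{i}=\chi$ when $\phi_i(\cost;\ratioVecequ)\le\hat\phi$, and $\selectionProb{\cost}{i}=\eta/\sqrt{\phi_i(\cost;\ratioVecequ)}$ for larger virtual cost. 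The constants are pinned down by continuity at the breakpoint ($\chi=\eta/\sqrt{\hat\phi}$), by the truncation (making $\hat\phi$ either $\eta^2$, when the binding cap is $A\le 1$, or the virtual cost at the adversary's threshold), and by the now-binding budget equation $\sum_i n_i\int_{\cost_{\min}}^{\costthreshold{i}}\selectionProb{\cost}{i}\,\phi_i(\cost;\ratioVecequ)\,\costD_i(\cost)\,d\cost=\bar B$, the remaining scalar equation fixing $\eta$ (equivalently $\lambda$); the inequality $F_i\costD_i'\le 2\costD_i^2$ also makes the relevant integrals monotone in $\eta$ so that the system is uniquely solvable, and the closed forms are recorded as \eqref{equ_continuousA} in Appendix \ref{appendixsec_proof_the_opt}.

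The main obstacle I anticipate is the interaction between the adversary's choice of $p$ and the shape constraint on $A$: on the region where $p^*_i(\cost)=0$ the allocation rule is invisible to the objective, so its value there is dictated only by the budget and by the monotonicity inherited from the neighbouring region, and one must verify this forces precisely one constant level $\chi$ --- the \emph{same} level across all groups, since the groups interact only through the single multiplier $\lambda$ --- rather than some more elaborate profile. The remaining technical points are justifying the minimax exchange (compactness of monotone $[0,1]$-valued allocation rules, semicontinuity of $T$ in the relevant topology) and the second-order/uniqueness checks built on the $F_i\costD_i'\le 2\costD_i^2$ hypothesis; the detailed bookkeeping for $\chi$, $\eta$, $\hat\phi$ is deferred to the appendix.
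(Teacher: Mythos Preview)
Your broad strategy matches the paper's: rewrite the budget constraint in terms of the virtual costs $\phi_i$ (the paper carries out the same integration by parts), view the worst-case objective as a convex--concave zero-sum game between the analyst (choosing $A$) and an adversary (choosing the conditional probabilities $p$), and look for a saddle point in which the analyst's Lagrangian yields $A\propto\sqrt{p/\phi}$. The paper implements this by first discretising the costs, solving the finite game via explicit best-response computations, and then passing to the continuous limit; your direct continuous argument via a minimax theorem is a legitimate alternative route.

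There is, however, a real gap in your saddle-point construction. You claim the adversary's equilibrium $p$ is a $0/1$ threshold in cost. This is not what occurs: against the two-piece $A$ (flat level $\chi$ for small virtual cost, then $\eta/\sqrt{\phi}$), on the flat region $1/A$ is identically $1/\chi$, equal to the adversary's first-order threshold, so the adversary is \emph{indifferent} there; the equilibrium $p$ the paper exhibits is strictly fractional on that region---specifically, proportional to $\phi$. This is not a cosmetic tie-break. With $p\propto\phi$ on the low region and $p=1$ above, the analyst's unconstrained pointwise minimiser $A\propto\sqrt{p/\phi}$ is already constant on the low region and $\propto 1/\sqrt{\phi}$ above, so the two-piece shape is the analyst's best response with no ironing, and the monotonicity constraint can simply be dropped and checked ex post (this is exactly how the paper sidesteps the shape constraint). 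With your $p=0$ on the low region, by contrast, the analyst's monotone best response is \emph{not} ``constant $\chi$ there, $\eta/\sqrt{\phi}$ above'': binding $A$ on the low region to the boundary value alters the effective marginal budget cost at the boundary, which feeds back into the Lagrangian and distorts the $1/\sqrt{\phi}$ profile on the high side (a three-point example with $\phi=(1,4,9)$ and $p=(0,1,1)$ already shows the analyst strictly prefers an allocation with $A_{\text{high}}\not\propto 1/\sqrt{\phi}$). Hence your candidate pair $(A^{\text{two-piece}},\,p^{0/1})$ is not a saddle point. The fix is the one you correctly flag as the main obstacle but misdiagnose: let the adversary play an interior $p$ on the flat region, chosen so that $\sqrt{p/\phi}$ is constant; this simultaneously makes the analyst's unconstrained optimum monotone and supplies the missing equation---the adversary's indifference condition, together with the binding budget---that pins down $\chi$, $\eta$ and $\hat\phi$.
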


\begin{wrapfigure}{r}{0.5\textwidth}
    \centering
    \includegraphics[width=.38\textwidth]{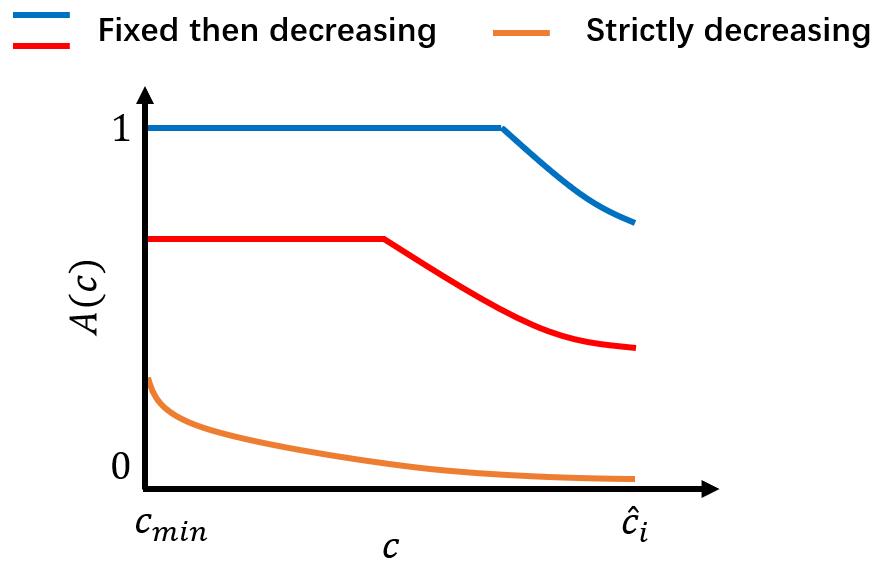}
    \caption{Illustration of two possible structures.}
    \label{fig_allocationrule}
\end{wrapfigure}
Theorem \ref{the_opt_pro} shows the optimal allocation rule of a group can have two possible structures: \emph{Fixed then Decreasing} (FtD) or  \emph{Strictly Decreasing} (SD), depending on system parameters such as budget $\budget$. Figure \ref{fig_allocationrule} provides an illustration. The FtD structure (the blue and red curves in Figure \ref{fig_allocationrule}) has an allocation rule  for the group that is firstly fixed in low-cost region (i.e., the region in which  $\phi_{i}(\cost;\ratioVecequ) \le \hat{\phi}$ holds) and then strictly decreasing in the high-cost region (inversely proportional to the square root of the virtual cost). Since the allocation rule is not strictly decreasing, the mechanism can only induce agents' weak (non-strict) truthfulness, according to Theorem \ref{pro_payment}. The other structure, SD (the orange curve in Figure \ref{fig_allocationrule}),  has an  allocation rule that is strictly decreasing (inversely proportional to the square root of the virtual cost) in the whole region. The strictly monotonic structure induces strict truthfulness.
{Note that the form of the optimal allocation rule parallels that of previous work~\cite{dataacquisition}, showing that a similar form is still optimal in a more general setting. As we have discussed, the generalization beyond~\cite{dataacquisition} to include the analyst's bias-variance trade-off, the agents' privacy costs, and data correlation add technical complexity and practicality. A priori, it is not clear that the optimal form would remain similar in the more general setting.}


Note that the optimal allocation rule presented in Theorem \ref{the_opt_pro} is defined for the  participating agents, whose cost are lower than threshold $\costthreshold{i}$ for group $i$. As in the case of non-participants, we only need to ensure monotonicity to induce truthfulness. A simple example is $A_i(\cost)=\epsilon$ for $c\ge \costthreshold{i}$, where $\epsilon$ is a small enough non-negative value. Note that this example does not make the total expected payment exceed the budget, as the non-participants do not get the payments.   In summary, the allocation rule of the cost lower than the threshold presented in Theorem \ref{the_opt_pro} and the allocation rule of the cost higher than the threshold make up a complete menu shown to the agents, which induces agents' truthfulness and optimizes the bias-variance trade-off.

We find that, when the budget $\budget$ is relatively small, the optimal allocation rule has a Strictly Decreasing structure. To state the result, first define 

\begin{equation}
 l(\ratioVecequ)\triangleq \sum_{i} q_i \ratioequg_i(\pplequ{\costthreshold{i}}{i}-\gpplequ{\costthreshold{i}}{i}-w(\ratioequ)).
\end{equation}
    \begin{Cor}\label{cor_opt_pro}
    Under Assumptions \ref{combine}-\ref{a_regularity}, the optimal allocation rule is 
    \begin{equation}
    A_i(\cost) = \frac{1}{\sqrt{\phi_{i}(\cost;\ratioVecequ)}} \cdot\frac{B/\agentNum-l(\ratioVecequ)}{\sum_{i}\groupProb{i}\int_{\cost_{\min}}^{\costthreshold{i}}\sqrt{\phi_{i}(\cost;\ratioVecequ)}\costD_i(\cost)d\cost},
    \end{equation}
    when the participation rate profile $\ratioVecequ$ satisfies the following inequality:
    \begin{equation}\label{equ_a}
        (B/s-l(\ratioVecequ))(2\weight+\ratioequ(1-\ratioequ)(1-\weight)\agentNum)<\weight\sqrt{\phi_{\min}(\ratioVecequ)} \cdot\sum_{i}\groupProb{i} \int_{\cost_{\min}}^{\costthreshold{i}}\sqrt{\phi_{i}(\cost;\ratioVecequ)}\costD_i(\cost)d\cost.
    \end{equation}
    Furthermore, the mechanism is strictly truthful.
    \end{Cor}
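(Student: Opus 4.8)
The plan is to read Corollary \ref{cor_opt_pro} as the specialization of Theorem \ref{the_opt_pro} to the regime in which the ``Fixed then Decreasing'' branch is empty, so that the optimal rule is globally ``Strictly Decreasing'', $\selectionProb{\cost}{i}=\eta/\sqrt{\phi_{i}(\cost;\ratioVecequ)}$ on $[\cost_{\min},\costthreshold{i}]$. By Assumption \ref{a_regularity} the virtual cost $\phi_{i}(\cost;\ratioVecequ)$ is non-decreasing in $\cost$, so over the participation region it is minimized at $\cost_{\min}$; put $\phi_{\min}(\ratioVecequ)=\min_i\phi_{i}(\cost_{\min};\ratioVecequ)$. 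In the two-piece form \eqref{equ_general_opt_pro} the flat piece $\selectionProb{\cost}{i}=\chi$ is supported on $\{\cost:\phi_{i}(\cost;\ratioVecequ)\le\hat\phi\}$, which is empty for every group precisely when $\hat\phi\le\phi_{\min}(\ratioVecequ)$ (equivalently, when the flat level $\chi$ from \eqref{equ_continuousA} exceeds $\eta/\sqrt{\phi_{\min}(\ratioVecequ)}$). So two things remain: (i) pin down $\eta$ in closed form, and (ii) rewrite the inequality $\hat\phi\le\phi_{\min}(\ratioVecequ)$ as the stated condition \eqref{equ_a} (which, since its left side is proportional to $\budget/\agentNum-l(\ratioVecequ)$, is consistent with the SD structure arising when the budget is relatively small).

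For (i) I would use that the constraint of Definition \ref{def_budget} holds with equality at the optimum. By Corollary \ref{cor_threshold} the participants of group $i$ are exactly those with $\cost\le\costthreshold{i}$, and counting $\groupProb{i}\agentNum$ agents per group, the binding constraint is $\agentNum\sum_i\groupProb{i}\int_{\cost_{\min}}^{\costthreshold{i}}\payment{\cost}{i}\,\selectionProb{\cost}{i}\,\costD_i(\cost)\,d\cost=\budget$. Substituting the truthful payment \eqref{equ_payment}--\eqref{equ_constantpart} gives $\payment{\cost}{i}\selectionProb{\cost}{i}=\selectionProb{\cost}{i}(\cost-\pplequ{\cost}{i})+(1-b(\ratioVecequ_i;\gCorrelationVec_i))\int_{\cost}^{\cost_{\max}}\selectionProb{z}{i}\,dz+\constantpartequ{i}$; integrating the iterated integral by parts against $\costD_i$, the $\cost-\pplequ{\cost}{i}$ term cancels against part of the resulting $F_i/f_i$ term — exactly the combination that, by Definition \ref{def_vcost}, turns $(1-b)F_i/f_i$ into $\phi_{i}(\cost;\ratioVecequ)-(\cost-\pplequ{\cost}{i})$ — while the leftover constants combine with $\constantpartequ{i}$ and $F_i(\costthreshold{i})=\ratioequ_i$ (from \eqref{equ_ratio_threshold}) into $\ratioequ_i(\pplequ{\costthreshold{i}}{i}-\gpplequ{\costthreshold{i}}{i}-\benefit{\ratioequ})$; the dependence of the payment on $\selectionProb{\cdot}{i}$ beyond $\costthreshold{i}$ drops out. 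Summing over groups, the total expected payment equals $\agentNum\big(\sum_i\groupProb{i}\int_{\cost_{\min}}^{\costthreshold{i}}\selectionProb{\cost}{i}\phi_{i}(\cost;\ratioVecequ)\costD_i(\cost)\,d\cost+l(\ratioVecequ)\big)$. Plugging $\selectionProb{\cost}{i}=\eta/\sqrt{\phi_{i}(\cost;\ratioVecequ)}$ makes $\selectionProb{\cost}{i}\phi_{i}=\eta\sqrt{\phi_{i}}$, and solving the resulting linear equation in $\eta$ gives the stated value; adequacy of the budget, $\budget/\agentNum>l(\ratioVecequ)$, makes $\eta>0$.

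For (ii) I would extract the closed form of $\hat\phi$ from the first-order/complementary-slackness conditions behind \eqref{equ_continuousA}: in minimizing $T(A,\ratioVecequ)$ of \eqref{equ_tradeoff} over $A$ with the budget linear in $A$ at ``price'' $\phi_{i}(\cost;\ratioVecequ)$ (by the previous step), stationarity of the $1/A$ variance term against this price produces the $1/\sqrt{\phi}$ shape, and the remaining marginal terms fix the flat cap $\chi$ and the threshold $\hat\phi$. Differentiating \eqref{equ_tradeoff} (including the inner $\sup$ over the data-conditional probabilities $p_i(\cost)$) isolates the combined factor $2\weight+\ratioequ(1-\ratioequ)(1-\weight)\agentNum$ — the $2\weight$ tracing to the variance bracket, the remainder to the $(1-\weight)(1-\ratioequ)$ bias term — against the budget-side quantity $\sum_i\groupProb{i}\int_{\cost_{\min}}^{\costthreshold{i}}\sqrt{\phi_{i}(\cost;\ratioVecequ)}\costD_i(\cost)\,d\cost$. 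Substituting the value of $\eta$ from the previous step and clearing denominators turns $\hat\phi\le\phi_{\min}(\ratioVecequ)$ into precisely \eqref{equ_a}. Finally, strict truthfulness follows from Theorem \ref{pro_payment}: since $\eta>0$ and $\phi_{i}(\cdot;\ratioVecequ)$ is increasing under Assumption \ref{a_regularity}, $\selectionProb{\cost}{i}=\eta/\sqrt{\phi_{i}(\cost;\ratioVecequ)}$ is strictly decreasing, so the payment \eqref{equ_payment} induces strict truthfulness.

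I expect the third step to be the main obstacle: it hinges on the explicit closed forms of $\hat\phi$ (together with $\chi$ and $\eta$) produced by the optimization behind \eqref{equ_continuousA}, including the correct handling of the inner $\sup$ over $p_i(\cost)$ inside $T(A,\ratioVecequ)$, and on verifying that the algebra collapses exactly to the displayed inequality \eqref{equ_a} rather than to a merely equivalent rearrangement. The payment bookkeeping of the second step (the integration by parts, the $\cost-\pplequ{\cost}{i}$ cancellation, and the reassembly of $\phi_{i}$ and of $l(\ratioVecequ)$) is routine but must be carried out carefully, and the first and last steps are short.
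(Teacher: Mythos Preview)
Your proposal is correct and matches the paper's approach: the paper's proof is a single sentence pointing to Case~1 in the proof of Theorem~\ref{the_opt_pro}, and your steps (i)--(ii) reconstruct precisely that case (the budget identity you derive in (i) is the ``Reformulating the optimization program'' computation at the start of the Theorem~\ref{the_opt_pro} proof, and the condition you seek in (ii) is the paper's Case-1 inequality $\frac{B/\agentNum-l}{\weight\ratioequ}<\frac{Q_c(\phi_{\min})}{R_c(\phi_{\min})}$, which after expanding $Q_c(\phi_{\min})=\sqrt{\phi_{\min}}\sum_i\groupProb{i}\int_{\cost_{\min}}^{\costthreshold{i}}\sqrt{\phi_i}\,\costD_i\,d\cost$ and $R_c(\phi_{\min})=2\weight\ratioequ+(\ratioequ)^2(1-\ratioequ)(1-\weight)\agentNum$ and dividing by $\ratioequ$ is exactly \eqref{equ_a}). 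The only cosmetic difference is that the paper verifies Case~1 by checking the saddle-point conditions directly (that $p_i\equiv 1$ and the stated $A$ are mutual best responses in the minimax game) rather than by first isolating $\hat\phi$ and then testing $\hat\phi<\phi_{\min}$, but the underlying algebra and conclusion are identical.
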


Corollary \ref{cor_opt_pro} is a special case of Theorem \ref{the_opt_pro}.  The inequality in (\ref{equ_a}) indicates that the budget is relatively low. Suppose the budget is large enough.  The analyst can keep the allocation rule in the low-cost region fixed, instead of strictly decreasing. In this case, the agents' reporting is weak truthful. However, if the budget is low as in (\ref{equ_a}), the optimal allocation rule is strictly decreasing, which induces agents' strict truthfulness. Strict truthfulness is a more desirable property than weak truthfulness.

\subsubsection{Sufficient Conditions for Unbiased Estimation}  

Next, we present simple, easy to verify, sufficient conditions for unbiased estimation. We focus on the low-budget regime, which induces agents' strict truthfulness, according to Corollary \ref{cor_opt_pro}. By substituting the allocation rule in Corollary \ref{cor_opt_pro} to the objective function (\ref{equ_tradeoff}), we arrive at an optimization problem  over $\ratioVecequ$ as follows:
\begin{equation}\label{equ_obj_ratio}
\max_{\ratioVecequ}  \ \  T^*(\ratioVecequ) = \frac{\weight}{\agentNum}\left(U(\ratioVecequ)-\frac{1}{\ratioequ}\right),
    \end{equation}
where 

\begin{equation}\label{equ_L}
        U(\ratioVecequ) \triangleq  \frac{1}{\left(\ratioequ\right)^2}\cdot \frac{\left(\sum_{i}\groupProb{i}\int_{\cost_{\min}}^{\costthreshold{i}}\sqrt{\phi_{i}(\cost;\ratioVecequ)}\costD_i(\cost)d\cost\right)^2}{B/s-l(\ratioVecequ)}.
    \end{equation}
The objective function in (\ref{equ_obj_ratio}) is complicated due to the complexity of $l(\ratioVecequ)$ in (\ref{equ_L}).  Furthermore, the complicated characterization of its derivative with respect to group $i$'s participation rate $\ratioequg_i$ adds to the complexity. Thus, we focus on identifying sufficient conditions under which the optimal participation rate is one, meaning that the analyst would like to have unbiased estimator. 

To begin, we introduce some notation and assumptions. We denote the full participation rate profile by $\ratioVecone\triangleq [1]^I\in \mathbb{R}^I $, each component of which is one. We define $\Delta_i \triangleq \ppl{\costthreshold{i}}{i}- \gppl{\costthreshold{i}}{i}$. It captures the additional privacy cost of an participant (compared with non-participation) whose cost is the cost threshold in group $i$. 

We now present sufficient conditions for achieving full participation and an unbiased estimator.


\begin{Pro}\label{Pro_fullpart_benefit}
Suppose that, for all $\ratioVecequ$ in which $\ratioequ_i\ge \ratio_{\min}>0, \forall i$,  (i) the inequality in (\ref{equ_a}) holds, and (ii) $w'(\ratioequ)\ge  D_i(\ratioVecequ,\budget,\weight,\agentNum,\groupProbVec)$ holds for the function $D_i(\ratioVecequ,\budget,\weight,\agentNum,\groupProbVec)$ defined in \eqref{equ_D} of Appendix \ref{appendixsec_fullpart_benefit},  where $w'$ is the derivative of function $w(\cdot)$.  Then, the optimal participation rate profile is $\ratioVecone$ and the analyst obtains an unbiased estimator.
       
        
    \end{Pro}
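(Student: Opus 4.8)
The plan is to show that $T^*(\ratioVecequ)$ defined in \eqref{equ_obj_ratio} is maximized at the full participation profile $\ratioVecone$. Since we are maximizing $T^*$ (which equals the optimal worst-case bias-variance trade-off under the low-budget allocation rule of Corollary \ref{cor_opt_pro}, by the minimax exchange justified in Lemma \ref{lem_tradeoff}), it suffices to show that, starting from any $\ratioVecequ$ with $\ratioequ_i\ge\ratio_{\min}$, increasing any single group's participation rate $\ratioequg_i$ weakly increases $T^*$; then a monotone path argument pushes every coordinate up to $1$. Concretely, I would compute $\partial T^*/\partial \ratioequg_i$ from \eqref{equ_obj_ratio}–\eqref{equ_L}, which splits into a term coming from $\partial U/\partial\ratioequg_i$ and a term coming from $\partial(1/\ratioequ)/\partial\ratioequg_i = -\groupProb{i}/(\ratioequ)^2$. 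The first term itself decomposes, by the quotient and chain rules, into contributions from (a) the explicit $1/(\ratioequ)^2$ prefactor, (b) the numerator $\bigl(\sum_i\groupProb{i}\int_{\cost_{\min}}^{\costthreshold{i}}\sqrt{\phi_{i}(\cost;\ratioVecequ)}\costD_i(\cost)d\cost\bigr)^2$, whose $\ratioequg_i$-dependence enters both through the upper limit $\costthreshold{i}$ (via \eqref{equ_ratio_threshold}, $d\costthreshold{i}/d\ratioequg_i = 1/\costD_i(\costthreshold{i})$) and through $\phi_i$ itself (monotone via Assumption \ref{combine} and Definition \ref{def_vcost}), and (c) the denominator $B/s - l(\ratioVecequ)$, whose derivative I would read off from the definition of $l$, again using $d\costthreshold{i}/d\ratioequg_i$ and differentiating $\pplequ{\costthreshold{i}}{i}-\gpplequ{\costthreshold{i}}{i}-w(\ratioequ)$.

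Collecting these pieces, the inequality $\partial T^*/\partial\ratioequg_i \ge 0$ rearranges into a lower bound on $w'(\ratioequ)$ — the derivative $w'$ appears precisely because $l(\ratioVecequ)$ contains the term $-w(\ratioequ)$ and $\ratioequ = \sum_j \groupProb{j}\ratioequg_j$ depends on $\ratioequg_i$ with weight $\groupProb{i}$. The function $D_i(\ratioVecequ,\budget,\weight,\agentNum,\groupProbVec)$ in \eqref{equ_D} is then \emph{defined} to be exactly the collection of all the remaining terms (those not multiplied by $w'$), divided through by the coefficient of $w'$; I would verify that this coefficient is positive on the relevant domain (it should be, being a positive multiple of things like $\groupProb{i}\ratioequg_i$ and the integral $\sum_i\groupProb{i}\int\sqrt{\phi_i}\costD_i\,d\cost$ divided by $(B/s-l(\ratioVecequ))^2 > 0$, the positivity of the last quantity being exactly what condition (i), i.e. inequality \eqref{equ_a}, guarantees via Corollary \ref{cor_opt_pro}). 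Hypothesis (ii), $w'(\ratioequ)\ge D_i(\ratioVecequ,\budget,\weight,\agentNum,\groupProbVec)$ for all admissible $\ratioVecequ$ and all $i$, then gives $\partial T^*/\partial\ratioequg_i\ge 0$ everywhere, so $T^*$ is nondecreasing along the coordinatewise-increasing path from $\ratioVecequ$ to $\ratioVecone$, whence the maximizer is $\ratioVecone$. Finally, since $\ratioVecone$ means every agent in every group participates, the Horvitz-Thompson estimator of Definition \ref{Def_estimator} sees a representative sample of the whole population $\agentSet$, so its bias (Definition \ref{def_bias}) is zero — consistent with the last paragraph of Section \ref{sec_estimator}.

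\textbf{Main obstacle.} The technical heart is the bookkeeping in differentiating $U(\ratioVecequ)$: the upper integration limit $\costthreshold{i}$, the integrand $\sqrt{\phi_i(\cost;\ratioVecequ)}$, and the scalar $l(\ratioVecequ)$ all depend on $\ratioequg_i$ (and on $\ratioequ$, hence again on $\ratioequg_i$), so Leibniz's rule produces several coupled terms, and one must track signs carefully — Assumption \ref{combine} gives monotonicity of $\h$ and $\gPrivacyFnc$, Assumption \ref{a_regularity} controls $\phi_i$ and the ratio $F_i/\costD_i$, and Assumption \ref{Ass_linearincost} pins down $b(\ratioVec_i;\gCorrelationVec_i)$, but assembling them into a clean sign for $\partial T^*/\partial\ratioequg_i$ is delicate. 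A secondary subtlety is ensuring the argument is uniform over the whole admissible set $\{\ratioVecequ : \ratioequ_i\ge\ratio_{\min}\}$ rather than just locally near $\ratioVecone$, which is why hypotheses (i) and (ii) are required to hold for \emph{all} such $\ratioVecequ$; this is what lets the single-coordinate monotonicity be integrated along a path without leaving the region where \eqref{equ_a} (and hence strict truthfulness and $B/s - l(\ratioVecequ) > 0$) remains valid.
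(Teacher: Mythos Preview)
Your approach is essentially the paper's: differentiate $T^*(\ratioVecequ)$ coordinatewise, show the sign is constant under hypothesis (ii), and conclude the optimum lies at the boundary $\ratioVecone$. However, you have the direction reversed throughout. The analyst's outer problem over $\ratioVecequ$ is a \emph{minimization}, not a maximization---the $\max$ in \eqref{equ_obj_ratio} is a typo (the appendix states it as $\min$), and there is no minimax exchange that turns the analyst's own choice of participation profile into a maximization; the sup in Lemma~\ref{lem_tradeoff} is over the adversary's $p_i(c)$, not over $\ratioVecequ$. Accordingly, the paper establishes $\partial T^*/\partial\ratioequg_i \le 0$ under (ii), not $\ge 0$.

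To see why the inequality goes this way, trace the $w'$ term: differentiating $l(\ratioVecequ)=\sum_j q_j\ratioequg_j(\Delta_j - w(\ratioequ))$ contributes $-q_i\ratioequ\, w'(\ratioequ)$ to $\partial l/\partial\ratioequg_i$, and since $l$ sits in the denominator $B/s - l(\ratioVecequ)$ of $U$, the resulting contribution of $w'$ to $\partial U/\partial\ratioequg_i$ is $-\dfrac{q_i\, r(\ratioVecequ)\, w'(\ratioequ)}{\ratioequ\,(B/s-l)^2}$, which carries a \emph{negative} coefficient. Thus $w'(\ratioequ)\ge D_i$ rearranges into an \emph{upper} bound on $\partial T^*/\partial\ratioequg_i$, yielding $\partial T^*/\partial\ratioequg_i\le 0$; the minimum of $T^*$ is then attained at $\ratioVecone$. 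Your two sign flips happen to cancel and the conclusion survives, but the intermediate reasoning (``$\partial T^*/\partial\ratioequg_i \ge 0$ rearranges into a lower bound on $w'$'') is inverted and should be corrected.
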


\begin{Pro}\label{pro_fullpart_loss}
Suppose that, for all $\ratioVecequ$ in which $\ratioequ_i\ge \ratio_{\min}>0, \forall i$, (i) the inequality in (\ref{equ_a})  holds, and (ii) $\frac{\partial \Delta_i}{\partial \ratioequ_i}\le  \delta_i(\ratioVecequ, \budget,\weight,\agentNum,\groupProbVec)$ holds for the function $\delta_i(\ratioVecequ, \budget,\weight,\agentNum,\groupProbVec)$ defined in \eqref{equ_delta} of Appendix \ref{appendixsec_fullpart_loss}. Then, the optimal participation rate profile is $\ratioVecone$ and the analyst obtains an unbiased estimator.
    

    \end{Pro}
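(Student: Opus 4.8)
The plan is two-step: use hypothesis (i) to collapse the joint design over the allocation rule and the participation profile into the scalar program \eqref{equ_obj_ratio} over profiles alone, and then use hypothesis (ii) to show that this program is optimized at the full-participation corner $\ratioVecone=[1]^I$.

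\emph{Reduction.} Hypothesis (i) says the low-budget inequality \eqref{equ_a} holds at every feasible $\ratioVecequ$ (i.e.\ $\ratioequ_i\ge\ratio_{\min}$ for all $i$), so Corollary~\ref{cor_opt_pro} applies uniformly: for each such $\ratioVecequ$ the cost-minimizing allocation rule is the strictly decreasing rule
\[
A_i(\cost)=\frac{1}{\sqrt{\phi_i(\cost;\ratioVecequ)}}\cdot\frac{B/\agentNum-l(\ratioVecequ)}{\sum_j\groupProb{j}\int_{\cost_{\min}}^{\costthreshold{j}}\sqrt{\phi_j(\cost;\ratioVecequ)}\costD_j(\cost)\,d\cost},
\]
and substituting it into \eqref{equ_tradeoff} turns the problem into \eqref{equ_obj_ratio} with $T^*(\ratioVecequ)=\tfrac{\weight}{\agentNum}\bigl(U(\ratioVecequ)-\tfrac1{\ratioequ}\bigr)$ and $U$ as in \eqref{equ_L}. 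Hypothesis (i) also keeps $B/\agentNum-l(\ratioVecequ)>0$, so $U(\ratioVecequ)>0$. It therefore suffices to prove that $T^*$ is optimized over the box $\prod_i[\ratio_{\min},1]$ at $\ratioVecequ=\ratioVecone$; the ``unbiased estimator'' part of the claim is then immediate, since at $\ratioVecone$ every agent participates, $\ParticipantSet=\agentSet$, and the Horvitz--Thompson estimator of Definition~\ref{Def_estimator} is unbiased for the population mean (Section~\ref{sec_estimator}).

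\emph{Corner optimality via coordinate monotonicity.} I would show that, at every feasible $\ratioVecequ$ and for each $i$, $T^*$ is monotone in $\ratioequ_i$ toward $\ratioequ_i=1$ with the other coordinates held fixed; a function monotone in every coordinate over a box is optimized at the appropriate corner, here $\ratioVecone$. So I would differentiate $T^*$ — equivalently $U$ in \eqref{equ_L} — with respect to $\ratioequ_i$. The variable $\ratioequ_i$ enters through several channels: (a) the explicit $1/\ratioequ$ and $1/(\ratioequ)^2$, via $\partial\ratioequ/\partial\ratioequ_i=\groupProb{i}$; (b) the threshold $\costthreshold{i}$, via $\partial\costthreshold{i}/\partial\ratioequ_i=1/\costD_i(\costthreshold{i})$ from \eqref{equ_ratio_threshold} — both as an integration limit (Leibniz's rule, whose boundary term collapses to $\sqrt{\phi_i(\costthreshold{i};\ratioVecequ)}$) and inside $l(\ratioVecequ)$ through $\Delta_i$; (c) the virtual costs $\phi_j(\cdot;\ratioVecequ)$ of every group — $\ratioequ_i$ is a component of $\ratioVecequ_i$ and of $\ratioVecequ_j$ via $\interRatio{j}$ for $j\ne i$ — entering through the privacy functions and $b(\ratioVecequ_j;\gCorrelationVec_j)$ in \eqref{equ_vcost}; and (d) $l(\ratioVecequ)=\sum_j\groupProb{j}\ratioequ_j(\Delta_j-\benefit{\ratioequ})$, whose $\ratioequ_i$-derivative supplies $\groupProb{i}(\Delta_i-\benefit{\ratioequ})$, $-\groupProb{i}\ratioequ\,w'(\ratioequ)$, cross-group terms $\sum_j\groupProb{j}\ratioequ_j\,\partial\Delta_j/\partial\ratioequ_i$, and — the one undetermined piece — $\groupProb{i}\ratioequ_i\,\partial\Delta_i/\partial\ratioequ_i$. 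Since $\partial U/\partial l=U/(B/\agentNum-l)$, this last piece enters $\partial T^*/\partial\ratioequ_i$ with coefficient $\kappa_i(\ratioVecequ)\triangleq\tfrac{\weight}{\agentNum}\cdot\tfrac{U(\ratioVecequ)}{B/\agentNum-l(\ratioVecequ)}\,\groupProb{i}\,\ratioequ_i>0$ (positive by the Reduction step). Hence $\partial T^*/\partial\ratioequ_i$ is affine and strictly increasing in $\partial\Delta_i/\partial\ratioequ_i$, so the sign needed for $\ratioVecone$ to be optimal is exactly an upper bound $\partial\Delta_i/\partial\ratioequ_i\le\delta_i(\ratioVecequ,\budget,\weight,\agentNum,\groupProbVec)$, obtained by setting $\partial T^*/\partial\ratioequ_i=0$ and solving for $\partial\Delta_i/\partial\ratioequ_i$ — absorbing into $\delta_i$ all the remaining terms, whose signs are pinned down by Assumptions~\ref{combine}--\ref{a_regularity} (monotonicity of the privacy cost functions and of $b(\cdot)$, and the regularity bound $F_i\costD'_i\le 2\costD_i^2$). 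This is precisely hypothesis (ii) and the function $\delta_i$ of \eqref{equ_delta}; under it the required monotonicity holds at every feasible $\ratioVecequ$, so $\ratioVecone$ is optimal. The argument runs parallel to Proposition~\ref{Pro_fullpart_benefit}, which instead isolates the $w'(\ratioequ)$ term of $\partial l/\partial\ratioequ_i$ and lower-bounds $w'$.

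\emph{Main obstacle.} The work is the explicit $\ratioequ_i$-differentiation of $U(\ratioVecequ)$ through these intertwined channels — a careful Leibniz-rule treatment of the moving limit $\costthreshold{i}$, and the chain rule through $\phi_j$ for \emph{all} $j$ as well as through $l$, $\benefit{\cdot}$, and $\Delta_i$ — followed by sign-bookkeeping on every resulting term so that $\partial\Delta_i/\partial\ratioequ_i$ remains the only sign-free quantity and $\kappa_i>0$ can be confirmed (the latter resting on $B/\agentNum-l(\ratioVecequ)>0$ and $U(\ratioVecequ)>0$, both from hypothesis (i)). Once $\partial T^*/\partial\ratioequ_i$ is organized as ``(sign-definite expression) $+\,\kappa_i\,\partial\Delta_i/\partial\ratioequ_i$'', reading off $\delta_i$ and concluding is routine.
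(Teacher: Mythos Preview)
Your proposal is correct and follows essentially the same approach as the paper: reduce via Corollary~\ref{cor_opt_pro} to the scalar program~\eqref{equ_obj_ratio}, differentiate $T^*$ in each coordinate $\ratioequ_i$, isolate the $\partial\Delta_i/\partial\ratioequ_i$ term (which enters with positive coefficient $\kappa_i$), and define $\delta_i$ as the threshold making $\partial T^*/\partial\ratioequ_i\le 0$. One small overreach: the paper does \emph{not} pin down the signs of the remaining terms via Assumptions~\ref{combine}--\ref{a_regularity}; it simply packages them into $\delta_i$ (keeping $\partial r/\partial\ratioequ_i$ as a black box in~\eqref{equ_delta}) and lets hypothesis (ii) do the work --- so your ``sign-bookkeeping'' step is unnecessary, and the only sign that actually matters is $\kappa_i>0$, which you correctly trace to $B/\agentNum-l(\ratioVecequ)>0$ and $U(\ratioVecequ)>0$.
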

    
    The proofs of Proposition \ref{Pro_fullpart_benefit}
and Proposition \ref{pro_fullpart_loss} are in Appendix \ref{appendixsec_fullpart_benefit} and Appendix \ref{appendixsec_fullpart_loss}, respectively.    To understand the conditions in these two propositions, note that Condition (i) in both propositions indicates that the budget is ``small'', i.e., lower than a certain threshold, which is related to the expectation of virtual cost in the population. It ensures that the allocation rule is strictly decreasing, according to Corollary \ref{cor_opt_pro}. Further, it holds if the budget itself is relatively low, or if the expectation of the virtual cost is relatively high. Recall that in the definition of virtual cost (Definition \ref{def_vcost}), the virtual cost is determined by both the cost and data correlation strength. In summary, Condition (i) indicates that the overall cost is relatively high, or the data correlation strength is relatively weak, or the budget is relatively low. 
    
Next we discuss Conditions (ii) in the above two propositions. Recall that the participation benefit  $\benefit{\ratio}$ is increasing in the participation rate. Condition (ii) in Proposition \ref{Pro_fullpart_benefit} implies that the increment of this benefit is high as average participation rate increases, which indicates that the participation benefit is large. Meanwhile, recall that  $\Delta_i= \ppl{\costthreshold{i}}{i}- \gppl{\costthreshold{i}}{i}$ captures the addition part of the privacy cost that  does not concern data correlation. Condition (ii) in Proposition \ref{pro_fullpart_loss} implies that the increment of this part of the privacy cost is small as the group participation rate increases, which indicates that the privacy cost of participation is closed to that of non-participation. Intuitively, if Conditions (ii) in the above two propositions holds,  the analyst does not need to pay a lot to each agent to incentivize her participation. This is because the benefit of participation is significant (Condition (ii) in Proposition \ref{Pro_fullpart_benefit}) or the negative impact of participation is small (Condition (ii) in Proposition \ref{pro_fullpart_loss}). As a result, the expected payment of each agent is relatively low. This enables the analyst to increase the participation rate and  incentivize more fraction of the population. This finally leads to  a full participation rate to induce zero bias.

\section{Concluding Remarks}

In this paper we study the design of an optimal mechanism for data acquisition in a setting where there is correlation among the data of participants that leads to information leakage.  Information leakage is a crucial and under-explored feature of data markets and our results represent the first to characterize an optimal mechanism for data acquisition in such a setting.  Additionally, our results provide the analyst the ability to optimally trade off bias and variance of the resulting estimator.  

Further, our results provide important perspectives about the consequences of correlation and information leakage for data marketplaces.  In particular, intuitively one might expect that these factors, which underlie the privacy paradox, would lead to inefficiencies where data marketplaces  exploit participants and obtain data for lower prices than otherwise.  The analysis in this paper shows that this is indeed the case.  Combined with the results of \cite{acemoglu_privacy}, there is a compelling argument that information leakage is a significant factor in market inefficiencies that lead to oversharing and reduced payments in data marketplaces.  

Thus, one may ask if it is possible to regulate data marketplaces to avoid such inefficiencies.  Unfortunately, our results highlight that typical suggestions, such as avoiding monopolistic platforms, may not be effective.  Since our results hold for any fixed participation rate, a platform can exploit information leakage regardless of its market size.  Further, differential privacy cannot eliminate the underlying issues in this case, which comes from inter-group and intra-group correlation.  Thus, an important open problem motivated by our work is: what approaches can be brought to bear to mitigate the impact of information leakage and the privacy paradox in data marketplaces?

\bibliographystyle{unsrtnat}
\bibliography{references}  

\begin{thebibliography}{35}
\providecommand{\natexlab}[1]{#1}
\providecommand{\url}[1]{\texttt{#1}}
\expandafter\ifx\csname urlstyle\endcsname\relax
  \providecommand{\doi}[1]{doi: #1}\else
  \providecommand{\doi}{doi: \begingroup \urlstyle{rm}\Url}\fi

\bibitem[Smith et~al.(2011)Smith, Dinev, and Xu]{Informationprivacyresearch}
H~Jeff Smith, Tamara Dinev, and Heng Xu.
\newblock Information privacy research: an interdisciplinary review.
\newblock \emph{MIS quarterly}, 35\penalty0 (4):\penalty0 989--1016, 2011.

\bibitem[Acquisti et~al.(2017)Acquisti, Adjerid, Balebako, Brandimarte, Cranor,
  Komanduri, Leon, Sadeh, Schaub, Sleeper, et~al.]{acquisti2017nudges}
Alessandro Acquisti, Idris Adjerid, Rebecca Balebako, Laura Brandimarte,
  Lorrie~Faith Cranor, Saranga Komanduri, Pedro~Giovanni Leon, Norman Sadeh,
  Florian Schaub, Manya Sleeper, et~al.
\newblock Nudges for privacy and security: Understanding and assisting users’
  choices online.
\newblock \emph{ACM Computing Surveys (CSUR)}, 50\penalty0 (3):\penalty0 1--41,
  2017.

\bibitem[Goldfarb and Tucker(2012)]{goldfarb2012shifts}
Avi Goldfarb and Catherine Tucker.
\newblock Shifts in privacy concerns.
\newblock \emph{American Economic Review}, 102\penalty0 (3):\penalty0 349--53,
  2012.

\bibitem[Keith et~al.(2013)Keith, Thompson, Hale, Lowry, and
  Greer]{Informationexamining}
Mark~J Keith, Samuel~C Thompson, Joanne Hale, Paul~Benjamin Lowry, and Chapman
  Greer.
\newblock Information disclosure on mobile devices: Re-examining privacy
  calculus with actual user behavior.
\newblock \emph{International journal of human-computer studies}, 71\penalty0
  (12):\penalty0 1163--1173, 2013.

\bibitem[Kim et~al.(2008)Kim, Park, and Oh]{examination}
Gil~Son Kim, Se-Bum Park, and Jungsuk Oh.
\newblock An examination of factors influencing consumer adoption of short
  message service (sms).
\newblock \emph{Psychology \& Marketing}, 25\penalty0 (8):\penalty0 769--786,
  2008.

\bibitem[Barth and De~Jong(2017)]{privacyparadox}
Susanne Barth and Menno~DT De~Jong.
\newblock The privacy paradox--investigating discrepancies between expressed
  privacy concerns and actual online behavior--a systematic literature review.
\newblock \emph{Telematics and Informatics}, 34\penalty0 (7):\penalty0
  1038--1058, 2017.

\bibitem[Athey et~al.(2017)Athey, Catalini, and Tucker]{athey2017digital}
Susan Athey, Christian Catalini, and Catherine Tucker.
\newblock The digital privacy paradox: Small money, small costs, small talk.
\newblock Technical report, National Bureau of Economic Research, 2017.

\bibitem[Acemoglu et~al.(2019)Acemoglu, Makhdoumi, Malekian, and
  Ozdaglar]{acemoglu_privacy}
Daron Acemoglu, Ali Makhdoumi, Azarakhsh Malekian, and Asuman Ozdaglar.
\newblock Too much data: Prices and inefficiencies in data markets.
\newblock Technical report, National Bureau of Economic Research, 2019.

\bibitem[Roth and Schoenebeck(2012)]{aaron_grant}
Aaron Roth and Grant Schoenebeck.
\newblock Conducting truthful surveys, cheaply.
\newblock In \emph{Proceedings of the 13th ACM Conference on Electronic
  Commerce}, pages 826--843, 2012.

\bibitem[Chen et~al.(2018)Chen, Immorlica, Lucier, Syrgkanis, and
  Ziani]{dataacquisition}
Yiling Chen, Nicole Immorlica, Brendan Lucier, Vasilis Syrgkanis, and Juba
  Ziani.
\newblock Optimal data acquisition for statistical estimation.
\newblock In \emph{Proceedings of the 2018 ACM Conference on Economics and
  Computation}, pages 27--44. ACM, 2018.

\bibitem[Chen and Zheng(2019)]{shuran_yiling}
Yiling Chen and Shuran Zheng.
\newblock Prior-free data acquisition for accurate statistical estimation.
\newblock In \emph{Proceedings of the 2019 ACM Conference on Economics and
  Computation}, pages 659--677, 2019.

\bibitem[Abernethy et~al.(2015)Abernethy, Chen, Ho, and
  Waggoner]{abernethy2015low}
Jacob Abernethy, Yiling Chen, Chien-Ju Ho, and Bo~Waggoner.
\newblock Low-cost learning via active data procurement.
\newblock In \emph{Proceedings of the Sixteenth ACM Conference on Economics and
  Computation}, pages 619--636, 2015.

\bibitem[Babaioff et~al.(2012)Babaioff, Kleinberg, and
  Paes~Leme]{babaioff2012optimal}
Moshe Babaioff, Robert Kleinberg, and Renato Paes~Leme.
\newblock Optimal mechanisms for selling information.
\newblock In \emph{Proceedings of the 13th ACM Conference on Electronic
  Commerce}, pages 92--109, 2012.

\bibitem[H{\"o}rner and Skrzypacz(2016)]{horner2016selling}
Johannes H{\"o}rner and Andrzej Skrzypacz.
\newblock Selling information.
\newblock \emph{Journal of Political Economy}, 124\penalty0 (6):\penalty0
  1515--1562, 2016.

\bibitem[Goldfarb and Tucker(2011)]{goldfarb2011online}
Avi Goldfarb and Catherine Tucker.
\newblock Online display advertising: Targeting and obtrusiveness.
\newblock \emph{Marketing Science}, 30\penalty0 (3):\penalty0 389--404, 2011.

\bibitem[Bergemann and Bonatti(2015)]{bergemann2015selling}
Dirk Bergemann and Alessandro Bonatti.
\newblock Selling cookies.
\newblock \emph{American Economic Journal: Microeconomics}, 7\penalty0
  (3):\penalty0 259--94, 2015.

\bibitem[Montes et~al.(2019)Montes, Sand-Zantman, and
  Valletti]{montes2019value}
Rodrigo Montes, Wilfried Sand-Zantman, and Tommaso Valletti.
\newblock The value of personal information in online markets with endogenous
  privacy.
\newblock \emph{Management Science}, 65\penalty0 (3):\penalty0 1342--1362,
  2019.

\bibitem[Bergemann and Bonatti(2019)]{bergemann_data_markets}
Dirk Bergemann and Alessandro Bonatti.
\newblock Markets for information: An introduction.
\newblock \emph{Annual Review of Economics}, 11:\penalty0 85--107, 2019.

\bibitem[Liu and Chen(2016)]{liu2016learning}
Yang Liu and Yiling Chen.
\newblock Learning to incentivize: Eliciting effort via output agreement.
\newblock \emph{arXiv preprint arXiv:1604.04928}, 2016.

\bibitem[Liu and Chen(2017)]{liu2017sequential}
Yang Liu and Yiling Chen.
\newblock Sequential peer prediction: Learning to elicit effort using posted
  prices.
\newblock In \emph{Thirty-First AAAI Conference on Artificial Intelligence},
  2017.

\bibitem[Liu and Chen(2018)]{liu2018surrogate}
Yang Liu and Yiling Chen.
\newblock Surrogate scoring rules and a dominant truth serum for information
  elicitation.
\newblock \emph{arXiv preprint arXiv:1802.09158}, 2018.

\bibitem[Dekel et~al.(2010)Dekel, Fischer, and Procaccia]{dekel2010incentive}
Ofer Dekel, Felix Fischer, and Ariel~D Procaccia.
\newblock Incentive compatible regression learning.
\newblock \emph{Journal of Computer and System Sciences}, 76\penalty0
  (8):\penalty0 759--777, 2010.

\bibitem[Meir and Rosenschein(2011)]{meir2011strategyproof}
Reshef Meir and Jeffrey~S Rosenschein.
\newblock Strategyproof classification.
\newblock \emph{ACM SIGecom Exchanges}, 10\penalty0 (3):\penalty0 21--25, 2011.

\bibitem[Meir et~al.(2012)Meir, Procaccia, and Rosenschein]{meir2012algorithms}
Reshef Meir, Ariel~D Procaccia, and Jeffrey~S Rosenschein.
\newblock Algorithms for strategyproof classification.
\newblock \emph{Artificial Intelligence}, 186:\penalty0 123--156, 2012.

\bibitem[Perote and Perote-Pena(2003)]{perote2003impossibility}
Juan Perote and J~Perote-Pena.
\newblock The impossibility of strategy-proof clustering.
\newblock \emph{Economics Bulletin}, 4\penalty0 (23):\penalty0 1--9, 2003.

\bibitem[Cummings et~al.(2015)Cummings, Ligett, Roth, Wu, and
  Ziani]{cummings2015accuracy}
Rachel Cummings, Katrina Ligett, Aaron Roth, Zhiwei~Steven Wu, and Juba Ziani.
\newblock Accuracy for sale: Aggregating data with a variance constraint.
\newblock In \emph{Proceedings of the 2015 Conference on Innovations in
  Theoretical Computer Science}, pages 317--324, 2015.

\bibitem[Cai et~al.(2015)Cai, Daskalakis, and Papadimitriou]{cai2015optimum}
Yang Cai, Constantinos Daskalakis, and Christos Papadimitriou.
\newblock Optimum statistical estimation with strategic data sources.
\newblock In \emph{Conference on Learning Theory}, pages 280--296, 2015.

\bibitem[Ghosh and Roth(2011)]{ghosh_selling_privacy}
Arpita Ghosh and Aaron Roth.
\newblock Selling privacy at auction.
\newblock In \emph{Proceedings of the 12th ACM conference on Electronic
  commerce}, pages 199--208, 2011.

\bibitem[Fleischer and Lyu(2012)]{fleischer2012approximately}
Lisa~K Fleischer and Yu-Han Lyu.
\newblock Approximately optimal auctions for selling privacy when costs are
  correlated with data.
\newblock In \emph{Proceedings of the 13th ACM Conference on Electronic
  Commerce}, pages 568--585, 2012.

\bibitem[Ghosh et~al.(2014)Ghosh, Ligett, Roth, and
  Schoenebeck]{buyingwithoutverification}
Arpita Ghosh, Katrina Ligett, Aaron Roth, and Grant Schoenebeck.
\newblock Buying private data without verification.
\newblock In \emph{Proceedings of the fifteenth ACM conference on Economics and
  computation}, pages 931--948. ACM, 2014.

\bibitem[Nissim et~al.(2014)Nissim, Vadhan, and Xiao]{nissim_redrawing}
Kobbi Nissim, Salil Vadhan, and David Xiao.
\newblock Redrawing the boundaries on purchasing data from privacy-sensitive
  individuals.
\newblock In \emph{Proceedings of the 5th conference on Innovations in
  theoretical computer science}, pages 411--422, 2014.

\bibitem[Liao et~al.(2018)Liao, Chen, and Huang]{liao2018social}
Guocheng Liao, Xu~Chen, and Jianwei Huang.
\newblock Social-aware privacy-preserving correlated data collection.
\newblock In \emph{Proceedings of the Eighteenth ACM International Symposium on
  Mobile Ad Hoc Networking and Computing}, pages 11--20, 2018.

\bibitem[Myerson(1981)]{myerson}
Roger~B Myerson.
\newblock Optimal auction design.
\newblock \emph{Mathematics of operations research}, 6\penalty0 (1):\penalty0
  58--73, 1981.

\bibitem[Trockel(2012)]{marketdemand}
Walter Trockel.
\newblock \emph{Market demand: An analysis of large economies with non-convex
  preferences}, volume 223.
\newblock Springer Science \& Business Media, 2012.

\bibitem[Horvitz and Thompson(1952)]{HTestimator}
Daniel~G Horvitz and Donovan~J Thompson.
\newblock A generalization of sampling without replacement from a finite
  universe.
\newblock \emph{Journal of the American statistical Association}, 47\penalty0
  (260):\penalty0 663--685, 1952.

\end{thebibliography}

\appendix


\section{Proof of Theorem \ref{pro_payment}}\label{appendixsec_payment}

First, we show that our mechanism is truthful if and only if the payment rule is of the form 
\begin{equation}\tag{\ref{equ_payment}}
\begin{aligned}
\payment{\reportedCost}{i}=\reportedCost-\ppl{\reportedCost}{i} 
+\frac{1}{\selectionProb{\reportedCost}{i}}\left(\left(1-b(\ratioVec_i;\gCorrelationVec_i)\right)\int_{\reportedCost}^{\cost_{\max}}\selectionProb{z}{i}dz + \constantpart{i}\right),
\end{aligned}
\end{equation}
given by Theorem \ref{pro_payment}, assuming that $\ratioVec$ is the true induced equilibrium participation profile. Second, we show that indeed, the equilibrium participation profile $\ratioVecequ$ is the desired profile $\ratioVec$ when using this payment rule if and only if 
\begin{equation}\tag{\ref{equ_constantpart}}
\constantpart{i} = \ppl{\costthreshold{i}}{i}-g(\costthreshold{i},\ratioVec;\gCorrelationVec_i)-\left(1-b(\ratioVec_i;\gCorrelationVec_i)\right)\int_{\costthreshold{i}}^{\cost_{\max}}\selectionProb{z}{i}dz - \benefit{\ratio}
\end{equation}
where $\costthreshold{i}$ satisfies 
\begin{equation}\tag{\ref{equ_ratio_threshold}}
	 \ratio_{i} = \int_{\Cost_{\min}}^{\costthreshold{i}} \costD_i(\Cost) d\Cost, 1\le i\le I.
\end{equation}


\begin{enumerate}
\item We first argue that our mechanism is (strictly) truthful if and only if the payment rule has the form given in Equation~\eqref{equ_payment} and $A_i$ is (strictly) monotone for all $i$.

First, we show the ``if'' direction. Given a desired participation rate profile $\ratioVec$, we plug the payment function into the agent's utility from participation, and obtain the following expected utility $\expectedUtility{\reportedCost}{\cost}{i}$ when the agent with true cost $\cost$ reports $\reportedCost$:
    \begin{align*}
    \expectedUtility{\reportedCost}{\cost}{i}  & =\selectionProb{\reportedCost}{i}\left[\payment{\reportedCost}{i} -c + \ppl{\cost}{i}\right]- \ppl{\cost}{i}+ \benefit{\aveRatio}
    \\& = \selectionProb{\reportedCost}{i} \left(\reportedCost - \ppl{\reportedCost}{i} \right)
    + \left(1-b(\ratioVec_i;\gCorrelationVec_i)\right)\int_{\reportedCost}^{\cost_{\max}}\selectionProb{z}{i}dz 
    + \selectionProb{\reportedCost}{i}\left[-c + \ppl{\cost}{i}\right] + K
    \end{align*}
    where $K$ does not depend on $\reportedCost$  
    
    Now, note that that we are in the non-atomic setting in which each agent is infinitesimal. In turn, a single agent's cost reporting and participation decisions does not affect the participation ratio, and $\frac{\partial \ratio{i}}{\partial \reportedCost} = 0$. As such, the derivative of $\expectedUtility{\reportedCost}{\cost}{i}$ with respect to $\reportedCost$ is given by 
\begin{align*}
\begin{split}
&A'_i(\reportedCost)(\reportedCost-\ppl{\reportedCost}{i})
+ \selectionProb{\reportedCost}{i}\left(1-b(\ratioVec_i;\gCorrelationVec_i)\right)
- \left(1-b(\ratioVec_i;\gCorrelationVec_i)\right) \selectionProb{\reportedCost}{i}
- A'_i(\reportedCost) (c-\ppl{\cost}{i})
\\&= A'_i(\reportedCost)(\reportedCost-\ppl{\reportedCost}{i}-(c-\ppl{\cost}{i})).
\end{split}
\end{align*}
    In turn, when the allocation rule is decreasing with respect to cost, i.e., $A'_i(c)< 0$, we obtain that
\begin{equation}
\frac{\partial \expectedUtility{\reportedCost}{\cost}{i}}{\partial \reportedCost}=A'_i(\reportedCost)(\reportedCost-\ppl{\reportedCost}{i}-(c-\ppl{\cost}{i}))  \begin{cases} >0, \quad \text{if} \ \ \reportedCost < \cost,\\
=0, \quad \text{if} \ \ \reportedCost = \cost,\\
<0, \quad \text{if} \ \ \reportedCost > \cost.
\end{cases}
\end{equation}
Therefore, the agent maximizes her expected utility of participation if and only if she truthfully reports her cost, i.e., $\reportedCost = \cost$. This corresponds to strict truthfulness. On the other hand, if the allocation rule is non-increasing, i.e., $A'_i(c)\le 0$, we have
\begin{equation}
\frac{\partial \expectedUtility{\reportedCost}{\cost}{i}}{\partial \reportedCost}=A'_i(\reportedCost)(\reportedCost-\ppl{\reportedCost}{i}-(c-\ppl{\cost}{i}))  \begin{cases} \ge 0, \quad \text{if} \ \ \reportedCost \le \cost,\\
\le 0, \quad \text{if} \ \ \reportedCost>\cost.
\end{cases}
\end{equation}
Therefore, the agent maximizes her expected utility of participation if she truthfully reports her cost, i.e., $\reportedCost = \cost$. This corresponds to general truthfulness, which incorporates strict truthfulness as a special case.

We now show that ``only if'' direction: any (strictly) truthful payment function must have the form given in Equation~\eqref{equ_payment} and requires $A_i$ to be (strictly) monotone. By truthfulness, we have $\expectedUtility{\cost}{\cost}{i} = \max_{\reportedCost} \expectedUtility{\reportedCost}{\cost}{i}$. Applying the envelope theorem to $\max_{\reportedCost} \expectedUtility{\reportedCost}{\cost}{i}$ yields
\begin{equation}
\frac{\partial \expectedUtility{\cost}{\cost}{i} }{\partial \cost} = \left.\frac{\partial \expectedUtility{\reportedCost}{\cost}{i} }{\partial \cost}\right|_{\reportedCost = \cost} = -(1-b(\ratioVec_i;\gCorrelationVec_i)) \cdot \selectionProb{\cost}{i} -b(\ratioVec_i;\gCorrelationVec_i).
\end{equation}

Taking the integral from $\cost$ to $\cost_{\max}$, we further have 
\begin{equation}
\expectedUtility{\cost_{\max}}{\cost_{\max}}{i}-\expectedUtility{\cost}{\cost}{i}= (b(\ratioVec_i;\gCorrelationVec_i) - 1)\int_{\cost}^{\max}\selectionProb{z}{i}dz+\ppl{\cost}{i}-\ppl{\cost_{\max}}{i}.   
\end{equation}
Since $\expectedUtility{\cost}{\cost}{i} = \selectionProb{\cost}{i}\left[\payment{\cost}{i} -c + \ppl{\cost}{i}\right] - \ppl{\cost}{i} + \benefit{\aveRatio}$, we obtain the following equation:
\begin{align*}
&\expectedUtility{\cost_{\max}}{\cost_{\max}}{i} + (1-b(\ratioVec_i;\gCorrelationVec_i))\int_{\cost}^{\cost_{\max}}\selectionProb{z}{i}dz+ \ppl{\cost_{\max}}{i}-\ppl{\cost}{i}  
\\&= \selectionProb{\cost}{i}\left[\payment{\cost}{i} -c + \ppl{\cost}{i}\right] - \ppl{\cost}{i} + \benefit{\aveRatio}.
\end{align*}

Hence, we have 
\begin{equation}
\payment{\cost}{i}=\cost-\ppl{\cost}{i} 
+\frac{1}{\selectionProb{\cost}{i}}\left(\left(1-b(\ratioVec_i;\gCorrelationVec_i)\right)\int_{\cost}^{\cost_{\max}}\selectionProb{z}{i}dz + v\right),
\end{equation}
where
\begin{equation}
 v=  \expectedUtility{\cost_{\max}}{\cost_{\max}}{i} + \ppl{\cost_{\max}}{i}-\benefit{\aveRatio}.
\end{equation}
Now, remembering that 
\begin{equation}
\frac{\partial \expectedUtility{\reportedCost}{\cost}{i}}{\partial \reportedCost}=A'_i(\reportedCost)(\reportedCost-\ppl{\reportedCost}{i}-(c-\ppl{\cost}{i})),  
\end{equation}
(strict) truthfulness implies that for all $\cost$, there exists $\epsilon > 0$ (small) such that for any $\reportedCost \in (\cost-\epsilon,\cost)$, $\frac{\partial \expectedUtility{\reportedCost}{\cost}{i}}{\partial \reportedCost} \geq 0$ ($> 0$ for strict) and for any $\reportedCost \in (\cost,\cost+\epsilon)$, $\frac{\partial \expectedUtility{\reportedCost}{\cost}{i}}{\partial \reportedCost} \leq 0$ ($< 0$ for strict)
. Since $c - \ppl{\cost}{i} = c (1 - b(\ratioVec_i;\gCorrelationVec_i))$ is increasing, this in particular requires that $A_i'(\reportedCost) \leq 0$ ($< 0$ for strict truthfulness) on $(\cost-\epsilon,\cost)$ and $(\cost,\cost+\epsilon)$. Since this holds for all $\cost$, this in particular implies that for all $\reportedCost$, $A_i'(\reportedCost) \leq 0$ (resp $< 0$), which shows (strict) monotonicity of $A$.



\item 
It remains to show that our payment rule induces an equilibrium participation profile $\ratioVecequ$ equal to the desired participation profile $\ratioVec$ if and only if 
\begin{equation}\tag{\ref{equ_constantpart}}
\constantpart{i} = \ppl{\costthreshold{i}}{i}-g(\costthreshold{i},\ratioVec;\gCorrelationVec_i)-\left(1-b(\ratioVec_i;\gCorrelationVec_i)\right)\int_{\costthreshold{i}}^{\cost_{\max}}\selectionProb{z}{i}dz - \benefit{\ratio},
\end{equation}
where $\costthreshold{i}$ is defined as the solution to 
\begin{equation}\tag{\ref{equ_ratio_threshold}}
	 \ratio_{i} = \int_{\Cost_{\min}}^{\costthreshold{i}} \costD_i(\Cost) d\Cost, 1\le i\le I.
\end{equation}
I.e., $\costthreshold{i}$ is such that if all (and only the) agents in group $i$ with cost at most $\costthreshold{i}$ participate, then the participation rate in group $i$ is $\ratio_i$. As such, to show the result, we simply need to show that the participating agents in group $i$ are those with cost at most $\costthreshold{i}$ if and only if $v = \constantpart{i}$.
To do so, note that an agent in group $i$ with cost $c$ who truthfully reports his cost has utility
\begin{align*}
\expectedUtility{\cost}{\cost}{i}
& =\selectionProb{\cost}{i}\left[\payment{\cost}{i} -c + \ppl{\cost}{i}\right]- \ppl{\cost}{i}+ \benefit{\aveRatio}
\\& = \left(1-b(\ratioVec_i;\gCorrelationVec_i)\right)\int_{\cost}^{\cost_{\max}}\selectionProb{z}{i}dz + v - \ppl{\cost}{i}+ \benefit{\aveRatio}, 
\end{align*}
and has a utility of $-\gppl{\cost}{i}$ for non-participation. In turn, an agent in group $i$ and with cost $c$ participates if and only 
\[
\left(1-b(\ratioVec_i;\gCorrelationVec_i)\right)\int_{\cost}^{\cost_{\max}}\selectionProb{z}{i}dz + \gppl{\cost}{i} - \ppl{\cost}{i}+ \benefit{\aveRatio} + v \geq 0.
\]
Note that because $w$ is continuous, $\gppl{\cost}{i}-\ppl{\cost}{i}$ is continuous and increasing in $c$, and $A_i(z) \geq 0$ hence $\int_{\cost}^{\cost_{\max}}\selectionProb{z}{i}dz$ is continuous non-increasing in $\cost$, we have that $\expectedUtility{\cost}{\cost}{i}$ is continuous and decreasing in $\cost$. In turn an agent participates exactly when his cost satisfies $c \leq \costthreshold{i}$ if and only if 
\[
\left(1-b(\ratioVec_i;\gCorrelationVec_i)\right)\int_{\costthreshold{i}}^{\cost_{\max}}\selectionProb{z}{i}dz + \gppl{\costthreshold{i}}{i} - \ppl{\costthreshold{i}}{i}+ \benefit{\aveRatio} + v = 0,
\]
which yields 
\[
v = \constantpart{i}.
\]
This concludes the proof.

\end{enumerate}

\section{Proof of Lemma \ref{lem_tradeoff} and Remark \ref{rem_tradeoff}}\label{apendixsec_tradeoff}

We first derive the variance.  Recall that $\mathcal{O}\subset \mathcal{N}$ denotes the set of reporters selected in the participants. For simplicity of notations, we make the conditioning on $\mathcal{N}$ implicit in this proof. Let us fix the participation rate profile $\ratioVecequ$, the associated cost threshold $\costthreshold{i}$ under which agents are willing to participate, and the corresponding participation rate $\ratioequ$ under this profile in the population. Note that there are $N = \agentNum\ratioequ$ participants. The estimator is an average of $N = \agentNum\ratioequ$ i.i.d. random variables, where each variable has variance
\begin{align*}
\sigma^2 
&= \mathbbm{E}\left[\left(\frac{x_k \cdot \mathbbm{1}\{k\in \mathcal{O}\}}{A^k}\right)^2\right]-\mathbbm{E}\left[\frac{x_k \cdot \mathbbm{1}\{k\in \mathcal{O}\}}{A^k}\right]^2.
\end{align*}
Note that 
\begin{align*}
\mathbbm{E}\left[\left(\frac{x_k \cdot \mathbbm{1}\{k\in \mathcal{O}\}}{A^k}\right)^2\right]
& = \mathbbm{E} \left[\left(\frac{x_k}{A^k}\right)^2  \cdot \mathbbm{1}\{k\in \mathcal{O}\}\right]
\\& = \mathbbm{E}_{i,c} \left[ \mathbbm{E}_{x_k,\mathcal{O}}\left[\left(\frac{x_k}{A^k}\right)^2  \cdot \mathbbm{1}\{k\in \mathcal{O}\} \bigg| i,c\right] \right]
\\& = \mathbbm{E}_{i,c} \left[\frac{\mathbbm{E} \left[x_k^2|i,c\right]}{A_i(c)^2}  \cdot \Pr \left[k\in \mathcal{O} | i,c\right]\right]
\\& = \mathbbm{E}_{i,c} \left[\frac{\mathbbm{E} \left[x_k|i,c\right]}{A_i(c)} \right],
\end{align*}
where the second-to-last step uses the independence of $x$ and $\mathcal{O}$ conditional on $i,~c$, and the last step uses the facts that i) conditional on $k$ being a participant, $\Pr \left[k\in \mathcal{O} | i,c\right] = A_i(c)$ and ii) $x_k^2 = x_k$ as $x_k \in \{0,1\}$. When $x_k \in [0,1]$ instead, note that $\mathbbm{E} \left[x_k^2|i,c\right] \leq \mathbbm{E} \left[x_k|i,c\right]$, and we get an inequality instead, which proves Remark \ref{rem_tradeoff}. A similar calculation yields (both in the binary and non-binary cases)
\[
\mathbbm{E}\left[\frac{x_k \cdot \mathbbm{1}\{k\in \mathcal{O}\}}{A^k}\right] = \mathbbm{E}_{i,c} \left[\mathbbm{E} \left[x_k|i,c\right] \right].
\]

Further, using the fact that an agent in group $i$ participates if and only if $c \leq \costthreshold{i}$, note that the probability density of an agent being in group $i$ and having cost $c$ \emph{conditional on the agent participating} is given by 
\begin{equation*}
    \frac{\groupProb{i}f_i(\cost)}{\sum\limits_{1\le i\le I}\groupProb{i}\int_{\cost_{\min}}^{\costthreshold{i}}f_i(\cost)d\cost}~~\text{if}~~\cost_{\min} \le \cost \le \costthreshold{i},
\end{equation*}
and is equal to $0$ if $c > \costthreshold{i}$.  In turn, we obtain that
\begin{align*}
\sigma^2 
 &= \frac{\sum\limits_{1\le i\le I}\groupProb{i}\int_{\cost_{\min}}^{\costthreshold{i}}\frac{\mathbb{E}[x^2|i,\cost]}{A_{i}(\cost)}f_i(\cost)d\cost}{\sum\limits_{1\le i\le I}\groupProb{i}\int_{\cost_{\min}}^{\costthreshold{i}}f_i(\cost)d\cost}-\left(\frac{\sum\limits_{1\le i\le I}\groupProb{i}\int_{\cost_{\min}}^{\costthreshold{i}}\mathbbm{E}[x|i,c]f_i(\cost)d\cost}{\sum\limits_{1\le i\le I}\groupProb{i}\int_{\cost_{\min}}^{\costthreshold{i}}f_i(\cost)d\cost}\right)^2.
\end{align*}

Since $N = \frac{1}{s\ratioequ}$ and that $\ratioequ = \sum\limits_{1\le i\le I}\groupProb{i}\int_{\cost_{\min}}^{\costthreshold{i}}f_i(\cost)d\cost$, the variance of the Horvitz-Thompson estimator is then given by
\begin{equation}\label{equ_worstvariance}
 \frac{1}{s\ratioequ}\left(\frac{\sum\limits_{1\le i\le I}\groupProb{i}\int_{\cost_{\min}}^{\costthreshold{i}}\frac{\mathbb{E}[x^2|\cost, i]}{A_{i}(\cost)}f_i(\cost)d\cost}{\ratioequ}-\left(\frac{\sum\limits_{1\le i\le I}\groupProb{i}\int_{\cost_{\min}}^{\costthreshold{i}}\mathbbm{E}[x|c,i]f_i(\cost)d\cost}{\ratioequ}\right)^2  \right).
\end{equation}

Next, we characterize the worst-case bias (over the data of the non-participants). Recall that we assume positive correlation between data and costs. In turn, the worst-case bias corresponds to the case where the data of all the non-participants (whose costs hence average data point are higher than participants in each group) is one. To see this, let $\mu$ be true population mean. Let $\mathbbm{E}[x|d(c)=0]$, resp. $\mathbbm{E}[x|d(c)=1]$, be the expectations of the data of the non-participants (decision variable $d(c)=0$), resp. participants (decision variable $d(c)=1$). Recall that the estimator $\estimator$ is an unbiased estimator of participants' data, i.e., $\mathbbm{E}[\estimator] = \mathbbm{E}[x|d(c)=1]$. We have 
$$\agentNum\ratioequ\mathbbm{E}[\estimator]+\agentNum(1-\ratioequ) \mathbbm{E}[x|d(c)=0] = \agentNum\mu,$$
which leads to 
$$\mu - \mathbbm{E}[\estimator] = (1 - \ratioequ) (\mathbbm{E}[x|d(c)=0] - \mathbbm{E}[\estimator]).$$
Positive correlation between data and cost indicates that as the agents' costs increase, their data is more likely to increase as well than to decrease. Meanwhile, we show in Corollary \ref{cor_threshold} that  non-participants have higher costs. 
Thus, we know that the expectation of  non-participants' data is no less than that of participants' data, i.e.,

    $$
    \mathbbm{E}[x|d(c)=0] \geq \mathbbm{E}[x|d(c)=1] = \mathbbm{E}[\estimator].
    $$
Therefore, when taking absolute values, we have 
    \[
    \left\vert \mu - \mathbbm{E}[\estimator] \right\vert = (1 - \ratioequ) (\mathbbm{E}[x | d(c)=0] - \mathbbm{E}[\estimator]).
    \]
The bias is maximized when the expectation of non-participants' data is the maximum value, i.e., $\mathbbm{E}[x | d(c)=0] = 1$. Thus, we have  
    \[
    \left\vert \mu - \mathbbm{E}[\estimator] \right\vert = (1 - \ratioequ) (1 - \mathbbm{E}[\estimator]).
    \]
Plugging in (by a similar calculation as for the variance terms),
\[
\mathbbm{E}[\estimator] = \frac{1}{\ratioequ}\sum\limits_{1\le i\le I}\groupProb{i}\int_{\cost_{\min}}^{\costthreshold{i}}p_{i}(\cost)f_i(\cost)d\cost,
\]
we immediately obtain that the worst-case bias (where the worst-case is taken over the data of the non-participants) is 

\begin{equation}\label{equ_worstbias1}
|\mu - \mathbbm{E}[\estimator]| = (1-\ratioequ)\left(1-\frac{1}{\ratioequ} \sum\limits_{1\le i\le I}\groupProb{i}\int_{\cost_{\min}}^{\costthreshold{i}}p_{i}(\cost)f_i(\cost)d\cost\right).
\end{equation}

So far we have derived the variance and the worst-case bias separately. Notice that both variance and worst-case bias are characterized by the distribution of participants data, i.e., $p_i(c)$, for $\cost\le \costthreshold{i}$ and all $i$. Thus, the  supremum in $p_i(c)$ (i.e.,over the participants' data) of the linear combination of the variance and the worst-case bias gives the worst-case linear combination of variance and bias, i.e., the worst-case bias-variance trade-off, over the data of both participants and non-participants.

\section{Proof of Theorem \ref{the_opt_pro}}\label{appendixsec_proof_the_opt}
Let $\ratioVecequ$ be the desired equilibrium participation rate profile. The proof goes as follows. First, we write the problem of finding the optimal allocation rule as a minimax optimization problem over a discretization of the costs. Second, we interpret this optimization problem as a zero-sum game between the analyst who controls the allocation rule and aims to minimize the bias-variance trade-off, and an adversary who controls the correlation between data and costs and aims to maximize the bias-variance objective. Finally, we convert our solution in the discrete cost case to the original, continuous case.

\paragraph{Reformulating the optimization program} 

The optimization problem in Definition~\ref{def_mech_design} involves two constraints, a truthfulness constraint and a budget constraint. Because our payment rule uniquely induces truthfulness and the desired participation rate profile $\ratioVecequ$ by Theorem \ref{pro_payment}, we can directly plug in the closed-form expression for the payment into the budget constraint, and replace the truthfulness constraints by a monotonicity constraint on the selection rule $A_i$, without loss of generality. That is, the expected payment to a single participant is given by 
\begin{align*}
&\sum_i q_i \int_{\cost_{\min}}^{\costthreshold{i}} A_i(c) P_i(c) f_i(c) d\cost 
\\=&  \sum_i q_i \int_{\cost_{\min}}^{\costthreshold{i}} \left( A_i(c) \left(\cost-\pplequ{\cost}{i} \right)+
\left(1-b(\ratioVecequ_i;\gCorrelationVec_i)\right)\int_{\cost}^{\cost_{\max}}\selectionProb{z}{i}dz \right)  f_i(c) d\cost \\& +\sum_i q_i \int_{\cost_{\min}}^{\costthreshold{i}} \constantpartequ{i} f_i(c) d\cost
\\=& \sum_i q_i \int_{\cost_{\min}}^{\costthreshold{i}} A_i(c) \left(\cost-\pplequ{\cost}{i}\right) f_i(\cost) d\cost 
+ \sum_i \groupProb{i}
\left(1-b(\ratioVecequ_i;\gCorrelationVec_i)\right) \int_{\cost_{\min}}^{\costthreshold{i}} \left(\int_{\cost}^{\cost_{\max}} \selectionProb{z}{i}dz \right)f_i(c) d\cost 
\\& +\sum_i q_i \int_{\cost_{\min}}^{\costthreshold{i}} \constantpartequ{i} f_i(c) d\cost.
\end{align*}
First, let us simplify the double-integral term. We remark that 
\begin{align*}
\int_{\cost_{\min}}^{\costthreshold{i}} \left(\int_{\cost}^{\cost_{\max}} \selectionProb{z}{i}dz \right)f_i(c) d\cost 
&= \int_{\cost_{\min}}^{\costthreshold{i}} \left(\int_{\cost_{min}}^{\cost_{\max}} \selectionProb{z}{i}dz \right)f_i(c) d\cost 
- \int_{\cost_{\min}}^{\costthreshold{i}} \left(\int_{\cost_{min}}^{\cost} \selectionProb{z}{i}dz \right)f_i(c) d\cost 
\\& = \int_{\cost_{min}}^{\cost_{\max}} \left(\int_{\cost_{\min}}^{\costthreshold{i}} f_i(\cost) d\cost \right) \selectionProb{z}{i}dz
- \int_{\cost_{min}}^{\costthreshold{i}} \left(\int_{z}^{\costthreshold{i}} f_i(\cost) d\cost \right) \selectionProb{z}{i} dz
\\&= F_i(\costthreshold{i}) \int_{\cost_{min}}^{\cost_{\max}} \selectionProb{z}{i}dz
- F_i(\costthreshold{i}) \int_{\cost_{min}}^{\costthreshold{i}} \selectionProb{z}{i}dz
+ \int_{\cost_{min}}^{\costthreshold{i}} F_i(z) A_i(z) dz
\\& =F_i(\costthreshold{i})\int_{\costthreshold{i}}^{\cost_{\max}}A_i(c)dc+\int_{\cost_{\min}}^{\costthreshold{i}}F_i(\cost)A_i(\cost)d\cost.
\end{align*}

As $\constantpart{i}$ is independent of $\cost$, we also have that  
\begin{align*}
\int_{\cost_{\min}}^{\costthreshold{i}} \constantpartequ{i} f_i(c) d\cost
=&\ \  \constantpartequ{i} \int_{\cost_{\min}}^{\costthreshold{i}}f_i(c) d\cost 
\\=&\ \  F_i(\costthreshold{i}) \left(\pplequ{\costthreshold{i}}{i}-\gpplequ{\costthreshold{i}}{i} - \benefit{\ratioequ} \right)\\
& \ \  -F_i(\costthreshold{i})\left(\left(1-b(\ratioVecequ_i;\gCorrelationVec_i)\right)\int_{\costthreshold{i}}^{\cost_{\max}}\selectionProb{z}{i}dz\right). 
\end{align*}
Thus, 
\begin{align*}
&\sum_i q_i \int_{\cost_{\min}}^{\costthreshold{i}} A_i(c) P_i(c) f_i(c) d\cost
\\=& \sum_i q_i \int_{\cost_{\min}}^{\costthreshold{i}} A_i(c) \left(\cost-\pplequ{\cost}{i}\right) f_i(\cost) d\cost 
+ \sum_i q_i \left(1-b(\ratioVecequ_i;\gCorrelationVec_i)\right) \int_{\cost_{\min}}^{\costthreshold{i}}F_i(\cost)A_i(\cost)d\cost
\\&+ \sum_i q_i  \left(1-b(\ratioVecequ_i;\gCorrelationVec_i)\right) F_i(\costthreshold{i})\int_{\costthreshold{i}}^{\cost_{\max}}A_i(c)dc
- \sum_i q_i F_i(\costthreshold{i})\left(\left(1-b(\ratioVecequ_i;\gCorrelationVec_i)\right)\int_{\costthreshold{i}}^{\cost_{\max}}\selectionProb{z}{i}dz\right)
\\& + \sum_i q_i F_i(\costthreshold{i}) \left(\pplequ{\costthreshold{i}}{i}-\gpplequ{\costthreshold{i}}{i} - \benefit{\ratioequ} \right)
\\=&  \sum_i q_i \int_{\cost_{\min}}^{\costthreshold{i}} A_i(c) \left(\cost-\pplequ{\cost}{i} + \left(1-b(\ratioVecequ_i;\gCorrelationVec_i)\right) \frac{F_i(\cost)}{f_i(\cost)} \right) f_i(\cost) d\cost
\\&+\sum_i q_i F_i(\costthreshold{i})\left(\pplequ{\costthreshold{i}}{i}-\gpplequ{\costthreshold{i}}{i} - \benefit{\ratioequ} \right).
\end{align*}

Remembering that by definition $F_i(\costthreshold{i})= \ratioequ_i$, and that the virtual costs are given by 
 \begin{equation*}
 \phi_i(\cost;\ratioVecequ) = \cost-\pplequ{\cost}{i} + (1-b(\ratioVecequ_i;\gCorrelationVec_i))\frac{F_i(c)}{f_i(c)},
 \end{equation*}
we can rewrite 
\begin{align*}
&\sum_i q_i \int_{\cost_{\min}}^{\costthreshold{i}} A_i(c) P_i(c) f_i(c) d\cost
\\=& \sum_{i}\groupProb{i}\int_{\cost_{\min}}^{\costthreshold{i}}\phi_{i}(\cost;\ratioVecequ)\selectionProb{c}{i} \costD_i(\cost)d\cost+\sum_{i}\groupProb{i} \ratioequ_i(\pplequ{\costthreshold{i}}{i}-\gpplequ{\costthreshold{i}}{i}-w(\ratioequ)).
\end{align*}



Therefore, the equivalent optimization program is given by:
\begin{align}\label{equ_equivalentproblem}
\begin{split}
\min_{A, \ratioVecequ} \ \ &T(A,\ratioVecequ)\\
s.t.\ \  &\agentNum \left(\sum_{i}\groupProb{i}\int_{\cost_{\min}}^{\costthreshold{i}}\phi_{i}(\cost;\ratioVecequ)\selectionProb{c}{i} \costD_i(\cost)d\cost+\sum_{i}\groupProb{i} \ratioequ_i(\pplequ{\costthreshold{i}}{i}-\gpplequ{\costthreshold{i}}{i}-w(\ratio))\right) \le B,\\
&  \selectionProb{c}{i} \in [0,1]\ \text{is a non-increasing function}~\forall i \in [I],\\
& 0< \ratioequg_i\le 1~\forall i \in [I].
\end{split}
\end{align}

Now, we fix the equilibrium participation profile $\ratioVecequ$. We focus on finding the optimal selection rule given the participation profile. This is given by the following optimization program, plugging back the expression for $T(A,\ratioVecequ)$:
\begin{equation}
\begin{aligned}
\min_{A_i(c)}  \max_{p_i(c)}~~&\frac{\weight}{\agentNum\left(\ratioequ\right)^2}\left(\sum_{i} \groupProb{i}\int_{\cost_{\min}}^{\costthreshold{i}}\frac{p_i(c)}{\selectionProb{c}{i}}f_i(\cost)d\cost-\frac{1}{\ratioequ}\left(\sum_{i} \groupProb{i}\int_{\cost_{\min}}^{\costthreshold{i}}p_{i}(\cost)f_i(\cost)d\cost\right)^2\right) \\ &+(1-\weight)(1-\ratioequ)  \left(1-\frac{1}{\ratioequ}\sum_{i} \groupProb{i}\int_{\cost_{\min}}^{\costthreshold{i}}p_{i}(\cost)f_i(\cost)d\cost\right) \\
s.t.~~&\sum_{i}\groupProb{i}\int_{\cost_{\min}}^{\costthreshold{i}}\phi_{i}(\cost;\ratioVecequ)\selectionProb{c}{i}\costD_i(\cost)d\cost+\sum_{i} \groupProb{i}\ratio_i(\pplequ{\costthreshold{i}}{i}-\gpplequ{\costthreshold{i}}{i}-w(\ratioequ)) \le \frac{B}{\agentNum},
\\& 0 \leq A_i(c), \leq 1~\forall i,c,
\\& 0 \leq p_i(c), \leq 1~\forall i,c.
\end{aligned}
\end{equation}
Note that we do not require the monotonicity constraints on $A_i$ and $p_i$. We will later show that the solutions $A_i^*$ and $p_i$ to this relaxed optimization problem are indeed monotone, hence relaxing the constraint is without loss of generality. 

\paragraph{Solving for the discrete cost case} 
The above formulation is presented for continuous costs. Before we find the solution to the continuous cost problem, let us first focus on the case of discrete costs and find the corresponding optimization program and solution. We will later show how to transform the discrete solution into an optimal solution to the continuous optimization problem above. 

We first introduce the notations in the discrete case. Suppose the cost $\cost$ in the population is from a discrete and finite set $\{c_1,c_2,...,c_J\}$ with size $J$. Recall that there are $I$ groups of agents parameterized by data correlation strength. We write $\pi_{ij}$ as the probability of an agent belonging to group $i$ and having cost $c_j$. Naturally we have $\sum_{1\le i \le I,1\le j\le J} \pi_{ij}= 1$. Recall that agents' decisions have threshold structure, i.e. only agents with costs below the cost threshold choose to participate. We use $t(i)$ to denote the index of threshold cost in group $i$. That is,  if the cost $\cost\le c_{t(i)}$ for an agent in group $i$, he would like to participate.  Let $\phi_{ij}$ be the virtual cost of an agent with cost $c_j$ in group $i$ as follows
\begin{equation}
\phi_{ij}=  \begin{cases}c_1-\ppl{c_1}{i},  &\ \ \text{if} \ \ j = 1,\\ c_j -\ppl{c_j}{i}+(1-b(\ratioVec_i;\gCorrelationVec_i))(c_j-c_{j-1})\frac{\sum_{t=1}^{j-1}\pi_{it}}{\pi_{ij}}, &\ \ \text{if} \ \ j> 1.
\end{cases}
\end{equation}
Since $c_1-\ppl{c_1}{i}>0$, $\phi_{ij}>0$. Recall that $p_{i}(\cost) = Pr[x = 1|c,\correlationVec_i]$ in continuous case is the probability of the data being one. We use $p_{ij}$ to denote the probability of the data being one for an agent in group $i$ having cost $c_j$. We are trying to find the optimal selection probability for discrete cost in each group. Let $A_{ij}$ be the selection probability for agents with cost $c_j$ in group $i$. The discrete version of the the min-max optimization problem is as follows:

\begin{equation}\label{equ_minmax_pro}
\begin{aligned}
\min_{A}  \max_{p}\ \  &\frac{\weight}{\agentNum\left(\ratioequ\right)^2}\left(\sum_{1\le i\le I, 1\le j\le t(i)}\pi_{ij}\cdot\frac{p_{ij}}{A_{ij}}-\frac{1}{\ratioequ}\left(\sum_{1\le i\le I, 1\le j\le t(i)}\pi_{ij} p_{ij}\right)^2\right) 
\\ &+(1-\weight)(1-\ratioequ)\left(1-\frac{1}{\ratioequ}\sum_{1\le i\le I, 1\le j\le t(i)}\pi_{ij} p_{ij}\right) \\
s.t.\ \  &\sum_{1\le i\le I, 1\le j\le t(i)}\pi_{ij}\phi_{ij}A_{ij}+\sum_{i} q_i \ratio_i(\pplequ{c_{t(i)}}{i}-\gpplequ{c_{t(i)}}{i}-w(\ratioequ)) \le \frac{B}{\agentNum},\\
&0\le A_{ij}\le 1,~\forall i \in [I],~j \in [t(i)],\\
&0\le p_{ij}\le 1,~\forall i \in [I],~j \in [t(i)].
\end{aligned}    
\end{equation}

We write $A=[A_{ij}]_{1\le i\le I, 1\le j\le t(i)}$ and $p=[p_{ij}]_{1\le i\le I, 1\le j\le t(i)}$. Here, there is an implicit constraint on the mechanism to make the solution meaningful: \begin{equation}
    \frac{B}{\agentNum}-\sum_{i} q_i \ratio_i(\pplequ{c_{t(i)}}{i}-\gpplequ{c_{t(i)}}{i}-w(\ratioequ))>0.
\end{equation}
If it does not hold, otherwise, it means  $\sum_{i,  j}\pi_{ij}\phi_{ij}A_{ij}\le 0$  and  $A_{ij}=0, \forall i,j$ is the solution. Recall that $\phi_{ij}>0, \forall i,j$. In this case, no data is collected, which is meaningless. 
The constraint means the budget should be higher than a threshold related to participation rate $\ratioVecequ$ so as to generate positive selection probability. Meanwhile, without loss of generality, we assume 
\begin{equation}
    \sum_{1\le i\le I, 1\le j\le t(i)}\pi_{ij}\phi_{ij}+\sum_{i} q_i \ratio_i(\pplequ{c_{t(i)}}{i}-\gpplequ{c_{t(i)}}{i}-w(\ratioequ)) > \frac{B}{\agentNum}.
\end{equation}
to avoid trivial solution of $A_{ij}=1, \forall i,j$. If the above inequality does not hold, it is optimal to select all agents with probability one. This assumption means the analyst would not aggressively select all the participants with abundant budget.

We denote the objective function inside minimax problem as $U(A,p)$ for simplification. Notice that $U(A,p)$ is a convex in $A$ and concave in $p$. To see this, let $\pi=[\pi_{ij}]_{1\le i\le I, 1\le j\le t(i)}$. We can write $U(A,p)$ as 
$$
U(A,p)
=\frac{\weight}{\agentNum\left(\ratioequ\right)^2}\left(\langle\pi, p./ A\rangle-\frac{1}{\ratioequ}\langle\pi, p\rangle^2\right)
+(1-\weight)(1-\ratioequ)  \left(1-\frac{1}{\ratioequ}\langle\pi, p\rangle \right).
$$ 
Here with abuse of notation we use $./$ to denote component-wise product (division) operation for vectors, and use $\langle,\rangle$ to denote inner product operation. Then we can obviously figure out $U(A,p)$ is convex in $A$ and concave in $p$, similarly for each problem. In turn, the optimization program defines a convex-concave zero-sum game, in which player 1 (the analyst) is choosing $A$ to minimize $U(A,p)$ given $p$ and player 2 (an adversary) is choosing $q$ to maximize $U(A,p)$ given $A$. The solution of minimax problem corresponds to the equilibrium $(A^*,p^*)$ of the zero-sum game  satisfying the constraint $U(A^*,p^*)=\min\limits_{A} U(A,p^*)=\max\limits_{p} U(A^*,p)$. Thus, we can find the equilibrium of the game to derive the optimal allocation rule.


Next, we find the equilibrium by characterizing the best responses of two players.
\begin{Lem}\label{lem_q_br}
Given $A$, $p_{ij}$ for each $i,j$ is the best response of maximizing player  if and only if one of the following holds: 
\begin{enumerate}
    \item $p_{ij}=1$ and $\frac{\weight}{\agentNum\left(\ratioequ\right)^2}\left(\frac{1}{A_{ij}}-\frac{2}{\ratioequ}\left(\sum_{i,j}\pi_{ij}p_{ij}\right)\right)-\frac{1}{\ratioequ}(1-\ratioequ)(1-\weight)>0$;
    \item $p_{ij}=0$ and $\frac{\weight}{\agentNum\left(\ratioequ\right)^2}\left(\frac{1}{A_{ij}}-\frac{2}{\ratioequ}\left(\sum_{i,j}\pi_{ij}p_{ij}\right)\right)-\frac{1}{\ratioequ}(1-\ratioequ)(1-\weight)<0$;
    \item $0 \leq p_{ij} \leq 1$ and $\frac{\weight}{\agentNum\left(\ratioequ\right)^2}\left(\frac{1}{A_{ij}}-\frac{2}{\ratioequ}\left(\sum_{i,j}\pi_{ij}p_{ij}\right)\right)-\frac{1}{\ratioequ}(1-\ratioequ)(1-\weight)=0$.
\end{enumerate}

\end{Lem}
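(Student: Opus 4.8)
The plan is to recognize the adversary's best-response problem as the maximization of the concave function $U(A,\cdot)$ over the box $\prod_{i,j}[0,1]$ and to write down its coordinatewise first-order optimality conditions. Recall from the discussion preceding the lemma that, for fixed $A$, $U(A,p)$ is an affine function of $p$ plus the concave quadratic $-\frac{\weight}{\agentNum(\ratioequ)^3}\langle\pi,p\rangle^2$, hence is concave in $p$; the feasible set $\{p:0\le p_{ij}\le 1~\forall i,j\}$ is convex and compact. Therefore a point $p$ maximizes $U(A,\cdot)$ over this set if and only if it satisfies the standard first-order condition for a concave program on a box: for every coordinate $(i,j)$,
\begin{equation*}
\frac{\partial U}{\partial p_{ij}}(A,p)\le 0~\text{if}~p_{ij}=0,\qquad \frac{\partial U}{\partial p_{ij}}(A,p)=0~\text{if}~0<p_{ij}<1,\qquad \frac{\partial U}{\partial p_{ij}}(A,p)\ge 0~\text{if}~p_{ij}=1.
\end{equation*}

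Next I would compute this partial derivative explicitly. Differentiating the two summation terms of $U(A,p)$ with respect to $p_{ij}$ gives
\begin{equation*}
\frac{\partial U}{\partial p_{ij}}(A,p)=\pi_{ij}\left[\frac{\weight}{\agentNum(\ratioequ)^2}\left(\frac{1}{A_{ij}}-\frac{2}{\ratioequ}\sum_{i',j'}\pi_{i'j'}p_{i'j'}\right)-\frac{1}{\ratioequ}(1-\ratioequ)(1-\weight)\right].
\end{equation*}
Since $\pi_{ij}>0$, the sign of $\partial U/\partial p_{ij}$ coincides with the sign of the bracketed quantity, which is exactly the expression appearing in all three cases of the lemma; denote it $\Psi_{ij}(A,p)$.

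Finally I would translate the first-order condition into the stated form. Substituting the factored derivative into the displayed optimality condition, $p$ is a best response if and only if, for every $(i,j)$, either $p_{ij}=0$ and $\Psi_{ij}(A,p)\le 0$, or $0<p_{ij}<1$ and $\Psi_{ij}(A,p)=0$, or $p_{ij}=1$ and $\Psi_{ij}(A,p)\ge 0$. Re-partitioning by the sign of $\Psi_{ij}(A,p)$ — if it is strictly positive the only admissible value is $p_{ij}=1$, if it is strictly negative the only admissible value is $p_{ij}=0$, and if it is zero every $p_{ij}\in[0,1]$ is admissible — yields precisely Cases 1--3; conversely, any $p$ satisfying one of the three cases at each coordinate meets the first-order condition and hence, by concavity, is a global maximizer. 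I do not expect a genuine obstacle here: the only subtlety worth a sentence is that $U(A,\cdot)$ is concave but not strictly concave (the quadratic term has rank one), so the set of best responses may be a face rather than a single point — which is exactly why Case 3 leaves $p_{ij}$ free in $[0,1]$ — and that the characterization is implicit, since $\Psi_{ij}(A,p)$ depends on $\sum_{i',j'}\pi_{i'j'}p_{i'j'}$; this is harmless because the lemma only claims a fixed-point description of the best-response set, which is all that is needed to subsequently solve the zero-sum game.
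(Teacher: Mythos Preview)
Your proposal is correct and follows essentially the same approach as the paper: both exploit concavity of $U(A,\cdot)$ over the box and invoke first-order optimality, with the paper writing out the Lagrangian and KKT multipliers explicitly while you state the equivalent box-constrained stationarity condition directly and factor out $\pi_{ij}$. Your added remarks about non-strict concavity and the implicit (fixed-point) nature of the characterization are apt and not in the paper's version.
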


\begin{proof}
Since the objective function of the maximizer is convex and differentiable, the KKT conditions are necessary and sufficient for optimality, and the optimal solutions are such that there exist dual variables under which the KKT conditions are satisfied. Note that the Lagrangian of the maximization problem is as follows: 
\begin{equation}
L(p,\lambda^+_{ij},\lambda^-_{ij},\lambda)=U(A,p)+\sum_{i,j}\lambda^+_{ij}(1-p_{ij})+\sum_{i,j}\lambda^-_{ij}p_{ij}.
\end{equation}

Hence, the KKT conditions yield 
\begin{equation}
    \frac{\partial L}{\partial p_{ij}} =\frac{\weight\pi_{ij}}{\agentNum\left(\ratioequ\right)^2}\left(\frac{1}{A_{ij}}-\frac{2}{\ratioequ}\left(\sum_{i,j}\pi_{ij}p_{ij}\right)\right)-\frac{1}{\ratioequ}(1-\ratioequ)(1-\weight)-\lambda^+_{ij}+\lambda^-_{ij}=0
\end{equation}

\begin{equation}
    \lambda^+_{ij}(1-p_{ij})=0, \ \ \lambda^-_{ij}p_{ij}=0, \ \  \lambda^+_{ij}\ge 0,  \ \ \lambda^-_{ij}\ge 0, \ \  \lambda\ge 0.
\end{equation}

\begin{itemize}
    \item If $\frac{\weight}{\agentNum\left(\ratioequ\right)^2}\left(\frac{1}{A_{ij}}-\frac{2}{\ratioequ}\left(\sum_{i,j}\pi_{ij}p_{ij}\right)\right)-\frac{1}{\ratioequ}(1-\ratioequ)(1-\weight)>0$, then $\lambda^+_{ij}> 0$, hence a best response must satisfy $p_{ij} = 1$ by complementary slackness. Further, when taking $\lambda^+_{ij}> 0$, $\lambda^-_{ij} = 0$, and $p_{ij}=1$, the KKT conditions hold; hence, $p_{ij} = 1$ is indeed a (the unique) best response. 
\item  If $\frac{\weight}{\agentNum\left(\ratioequ\right)^2}\left(\frac{1}{A_{ij}}-\frac{2}{\ratioequ}\left(\sum_{i,j}\pi_{ij}p_{ij}\right)\right)-\frac{1}{\ratioequ}(1-\ratioequ)(1-\weight)<0$, it must be that $\lambda^-_{ij} > 0$, which in turns implies $p_{ij} = 0$. Further, the KKT conditions hold with $\lambda^-_{ij}> 0$, $\lambda^+_{ij} = 0$, $p_{ij}=0$, hence $p_{ij} = 0$ is indeed a (the unique) best response. 
\item  Finally, if $\frac{\weight}{\agentNum\left(\ratioequ\right)^2}\left(\frac{1}{A_{ij}}-\frac{2}{\ratioequ}\left(\sum_{i,j}\pi_{ij}p_{ij}\right)\right)-\frac{1}{\ratioequ}(1-\ratioequ)(1-\weight)=0$, the KKT conditions hold so long as $0 \leq p_{ij} \leq 1$ with $\lambda^-_{ij} = \lambda^+_{ij} = 0$. Therefore, any $p_{ij} \in [0,1]$ with $$\frac{\weight}{\agentNum\left(\ratioequ\right)^2}\left(\frac{1}{A_{ij}}-\frac{2}{\ratioequ}(\sum_{i,j}\pi_{ij}p_{ij})\right) - \frac{1}{\ratioequ}(1-\ratioequ)(1-\weight) = 0.$$ is a best response for the maximizing player.
\end{itemize}
\end{proof}

Next, the best response of the minimizing player (the analyst) is given by the following lemma: 
\begin{Lem}\label{lem: min_best_resp}
Given $p$, 
\begin{itemize}
    \item if $p_{ij}= 0, \forall i,j $, any $A_{ij}\in [0,1]$ that satisfies budget constraint is best response of the minimizing player;
    \item if $p_{ij}\neq 0, \forall i,j $, the best response of the minimizing player is 
\begin{equation}
    A^*_{ij} = \min\left\{1,\sqrt{\frac{\weight p_{ij}}{\agentNum \left(\ratioequ\right)^2 \lambda^* \phi_{ij}}}\right\},
\end{equation}
where $\lambda^*$ is such that 
\begin{equation}
\sum_{i,j: p_{ij}\neq 0 }\pi_{ij}\phi_{ij}\cdot  \min\left\{1,\sqrt{\frac{\weight p_{ij}}{\agentNum \left(\ratioequ\right)^2 \lambda^* \phi_{ij}}}\right\}+\sum_{i}\groupProb{i} \ratio_i(\pplequ{c_{t(i)}}{i}-\gpplequ{c_{t(i)}}{i}-w(\ratio)) = B/\agentNum.
\end{equation}

\end{itemize}

\end{Lem}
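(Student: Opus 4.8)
The plan is to read the minimization of $U(A,p)$ over $A$ (for a fixed $p$) as a standard convex program and to solve it through its KKT conditions. First I would note that, of the three terms in $U(A,p)$, only $\frac{\weight}{\agentNum(\ratioequ)^2}\sum_{i,j}\pi_{ij} p_{ij}/A_{ij}$ depends on $A$, the other two being constants in $A$; so the analyst's best response is exactly the minimizer of $\sum_{i,j}\pi_{ij} p_{ij}/A_{ij}$ over the polytope $\{A:\ 0\le A_{ij}\le 1,\ \sum_{i,j}\pi_{ij}\phi_{ij}A_{ij}\le B/\agentNum - C_0\}$, where $C_0\triangleq\sum_i\groupProb{i}\ratio_i\bigl(\pplequ{c_{t(i)}}{i}-\gpplequ{c_{t(i)}}{i}-w(\ratioequ)\bigr)$ is the constant ``budget offset''. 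Since $t\mapsto 1/t$ is convex on $(0,\infty)$ and the feasible set is a polytope, this is a convex problem, so a feasible point is globally optimal if and only if it satisfies the KKT conditions for some admissible dual variables.

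The case $p\equiv 0$ is immediate: the objective is identically zero, hence every feasible $A$ is optimal (and feasibility is guaranteed since $B/\agentNum-C_0>0$ by the standing assumption, e.g.\ $A\equiv 0$). For general $p\ge 0$ I would first split the coordinates into the zero set $Z=\{(i,j):p_{ij}=0\}$ and its complement: on $Z$ the objective does not see $A_{ij}$, so we may as well set $A_{ij}=0$ there, which only loosens the budget constraint on the remaining coordinates. On the coordinates with $p_{ij}>0$, the bound $A_{ij}\ge 0$ is never active (taking $A_{ij}=0$ makes the objective infinite), so I can work where $U$ is smooth and write the Lagrangian with a multiplier $\lambda\ge 0$ for the budget constraint and multipliers $\kappa_{ij}\ge 0$ for the caps $A_{ij}\le 1$. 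Stationarity then reads $-\frac{\weight}{\agentNum(\ratioequ)^2}\pi_{ij}p_{ij}/A_{ij}^2+\lambda\pi_{ij}\phi_{ij}+\kappa_{ij}=0$; when the cap is slack ($\kappa_{ij}=0$) this gives $A_{ij}=\sqrt{\weight p_{ij}/(\agentNum(\ratioequ)^2\lambda\phi_{ij})}$, which is well-defined precisely because $\phi_{ij}>0$ (as observed right after the definition of the virtual costs), and when the cap binds it gives $A_{ij}=1$; together $A^*_{ij}=\min\{1,\sqrt{\weight p_{ij}/(\agentNum(\ratioequ)^2\lambda^*\phi_{ij})}\}$. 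To pin down $\lambda^*$ I would argue the budget constraint is tight at the optimum: the objective is strictly decreasing in each $A_{ij}$ with $p_{ij}>0$, so any slack budget could be spent to strictly improve it, unless every such $A^*_{ij}$ already equals $1$ --- and that case is excluded by the standing assumption $\sum_{i,j}\pi_{ij}\phi_{ij}+C_0>B/\agentNum$. Hence $\lambda^*>0$ solves the tightness equation stated in the lemma, and it exists and is unique by the intermediate value theorem: the left-hand side $\sum_{(i,j):p_{ij}\neq 0}\pi_{ij}\phi_{ij}\min\{1,\sqrt{\weight p_{ij}/(\agentNum(\ratioequ)^2\lambda\phi_{ij})}\}$ is continuous and non-increasing in $\lambda$, equals $\sum_{(i,j):p_{ij}\neq 0}\pi_{ij}\phi_{ij}$ as $\lambda\downarrow 0$, tends to $0$ as $\lambda\to\infty$, and $B/\agentNum-C_0$ lies strictly between these two values by the two standing assumptions. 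Convexity then upgrades this KKT point to a global minimizer, giving the claimed best response.

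The only genuine obstacles are the boundary bookkeeping: (i) establishing that the budget binds so that $\lambda^*>0$ and the closed form applies --- this is exactly where the two ``to avoid trivial solutions'' assumptions introduced at the reformulation step are used --- and (ii) carrying the cap $A_{ij}\le 1$ through the KKT analysis, which is why the optimal rule is the truncated expression $\min\{1,\cdot\}$ rather than the bare square root. Everything else is routine convex-analysis computation; in particular, the monotonicity of $A^*$ in the cost (needed later to justify dropping the monotonicity constraint in Theorem~\ref{the_opt_pro}) is not required here and is deferred.
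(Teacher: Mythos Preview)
Your proposal is correct and follows essentially the same approach as the paper: both argue via the KKT conditions for the convex minimization of $U(A,p)$ in $A$, obtaining the truncated square-root formula from stationarity plus complementary slackness on the cap $A_{ij}\le 1$, and then arguing that the budget binds because the objective is strictly decreasing in each active $A_{ij}$. Your treatment is in fact slightly more complete than the paper's---you explicitly justify existence and uniqueness of $\lambda^*$ via the intermediate value theorem and invoke convexity to upgrade the KKT point to a global optimum---but these are refinements of the same argument rather than a different route.
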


\begin{proof}
If $p_{ij}= 0, \forall i,j $, as $A$ does not appear in the objective function.  Thus, any $A_{ij}\in [0,1]$ that satisfies budget constraint is best response of the minimizing player.

Next, we focus on the case of  $p_{ij}\neq 0, \forall i,j $. We drop the constraint that $A_{ij} \geq 0$ for all $i,j$ (we will see that this is without loss of generality, as we will recover a positive solution). The Lagrangian of the minimization problem is as follows: 
\begin{equation}
\begin{aligned}
L(A,\lambda,\lambda_{ij}) = & \ \ U(A,p)+\sum_{i,j}\lambda_{ij}(A_{ij}-1)\\&+\lambda\cdot \left( \sum_{i,j}\pi_{ij}\phi_{ij}A_{ij}+\sum_{i} \groupProb{i} \ratio_i(\pplequ{c_{t(i)}}{i}-\gpplequ{c_{t(i)}}{i}-w(\ratioequ)) - B/\agentNum\right).
\end{aligned}
\end{equation}
From the KKT condition, we have the optimal $A^*_{ij}$ and optimal dual $\lambda^*$, $\lambda^*_{ij}$ must satisfy
\begin{equation}\label{equ_kkt1}
    \frac{\partial L}{ \partial A_{ij}} = -\frac{\weight}{\agentNum\left(\ratioequ\right)^2}\cdot\frac{\pi_{ij}p_{ij}}{A^{*2}_{ij}}+\lambda^*\pi_{ij}\phi_{ij}+\lambda^*_{ij}=0;
\end{equation}
\begin{equation}
    \lambda^*_{ij}(A^*_{ij}-1)=0,
\end{equation}
\begin{equation}
\lambda^*\cdot \left( \sum_{i,j}\pi_{ij}\phi_{ij}A_{ij}+\sum_{i} \groupProb{i} \ratio_i(\pplequ{c_{t(i)}}{i}-\gpplequ{c_{t(i)}}{i}-w(\ratioequ)) - B/\agentNum\right)=0.
\end{equation}
Thus, we have i) $A^*_{ij}=1$; or 
ii) $A^*_{ij}<1$ and  $ \lambda^*_{ij}=0$. In the second case, we obtain that $A^*_{ij} = \sqrt{\frac{\weight p_{ij}}{\agentNum \left(\ratioequ\right)^2 \lambda^* \phi_{ij}}}$  for some $\lambda^*>0$, according to (\ref{equ_kkt1}). A higher value of selection probability (which indicates higher budget) would always reduce variance and thus, the objective function.   Thus, the optimal allocation rule is such that the budget constraint is binding. 
In conclusion, we have $ A^*_{ij} = \min\left\{1,\sqrt{\frac{\weight p_{ij}}{\agentNum \left(\ratioequ\right)^2 \lambda^* \phi_{ij}}}\right\}$ where $\lambda^*$ is such that the budget constraint is binding. 

\end{proof}

Before we present the intersection of both players' best responses, we change  some  indexes for simplicity. We denote the set of virtual cost of all groups, $\Phi \triangleq \{\phi_{ij}: 1\le i\le I, 1\le j \le t(i)\}$. Suppose there are $K\triangleq \sum_{1\le i\le I}t(i)$ number of elements in the set. We sort them in an non-decreasing order, and replace index $ij$ with index $k$, which indicates the corresponding position in the sorted set, i.e., $\Phi = \{\phi_k:1\le k\le K, \phi_1<\phi_2<...<\phi_K\}$. Similarly, we replace index $ij$ of $A_{ij}$, $p_{ij}$, $\pi_{ij}$ with index $k$, and obtain   $A_k$, $p_{k}$, $\pi_{k}$, which are associated with  virtual cost $\phi_k$. For simplicity, we define $l$ as follows
\begin{equation}
 l\triangleq \sum_{i} q_i \ratioequg_i(\pplequ{\costthreshold{i}}{i}-\gpplequ{\costthreshold{i}}{i}-w(\ratioequ)).
\end{equation}
We begin with some necessary notations as follows. We define 
    \begin{equation}\label{equ_Q}
        Q(m,z) \triangleq \sum\limits_{k=1 }^{m}\pi_{k}\phi_{k}+\sqrt{\frac{\phi_m}{z}}\cdot\sum\limits_{k=m+1}^{K}\pi_{k}\sqrt{\phi_k}, m=1,...,K.
    \end{equation}
    \begin{equation}\label{equ_R}
        R(m,z) \triangleq 2\weight\left(\frac{z}{\phi_{m}} \cdot \sum\limits_{k=1 }^m\pi_{k}\phi_{k}+\sum\limits_{k= m+1}^K\pi_{k}\right)+\left(\ratioequ\right)^2(1-\ratioequ)(1-\weight)\agentNum, m=1,...,K.
    \end{equation}
Notice that $R(m,z)>0$. 

\begin{Cla}\label{cla_monotonic}
$Q(m,1)$ is increasing in $m$,  $R(m,1)$  is  decreasing in $m$ and thus, $\frac{Q(m,1)}{R(m,1)}$  is increasing in $m$. 
\end{Cla}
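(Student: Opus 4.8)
# Proof Proposal for Claim \ref{cla_monotonic}

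The plan is to establish the three monotonicity assertions by a direct sign analysis of the finite differences $Q(m+1,1)-Q(m,1)$ and $R(m+1,1)-R(m,1)$, using only the facts that $\phi_1<\phi_2<\dots<\phi_K$ (all strictly positive), that $\pi_k\geq 0$, and that $\weight\in[0,1]$, $\ratioequ\in(0,1]$, $\agentNum>0$ are fixed nonnegative constants. Throughout we set $z=1$ and write $Q(m)\triangleq Q(m,1)$, $R(m)\triangleq R(m,1)$ for brevity.

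First I would handle $Q(m)$. Writing out $Q(m+1)-Q(m)$ from the definition \eqref{equ_Q}, the term $\pi_{m+1}\phi_{m+1}$ enters the first sum and leaves the second sum (where it had weight $\sqrt{\phi_{m+1}}$ times the prefactor $\sqrt{\phi_m}$), and in addition the prefactor on the entire remaining tail $\sum_{k\geq m+2}\pi_k\sqrt{\phi_k}$ changes from $\sqrt{\phi_m}$ to $\sqrt{\phi_{m+1}}$. Collecting terms, one gets
\begin{equation}
Q(m+1)-Q(m) = \pi_{m+1}\sqrt{\phi_{m+1}}\left(\sqrt{\phi_{m+1}}-\sqrt{\phi_m}\right) + \left(\sqrt{\phi_{m+1}}-\sqrt{\phi_m}\right)\sum_{k=m+2}^{K}\pi_k\sqrt{\phi_k},
\end{equation}
and since $\phi_{m+1}>\phi_m>0$ gives $\sqrt{\phi_{m+1}}-\sqrt{\phi_m}>0$, and every remaining factor is nonnegative, this difference is nonnegative; strict positivity follows whenever $\pi_{m+1}>0$. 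Hence $Q(m)$ is increasing in $m$. An entirely analogous computation for $R(m)$ from \eqref{equ_R}: the additive constant $(\ratioequ)^2(1-\ratioequ)(1-\weight)\agentNum$ is independent of $m$, so only the bracket $\frac{1}{\phi_m}\sum_{k\leq m}\pi_k\phi_k + \sum_{k\geq m+1}\pi_k$ matters. Moving index $m+1$ from the second sum into the first and changing the prefactor $1/\phi_m\to 1/\phi_{m+1}$ yields, after rearrangement,
\begin{equation}
R(m+1)-R(m) = 2\weight\left(\frac{1}{\phi_{m+1}}-\frac{1}{\phi_m}\right)\sum_{k=1}^{m}\pi_k\phi_k + 2\weight\,\pi_{m+1}\left(\frac{\phi_{m+1}}{\phi_{m+1}}-1\right),
\end{equation}
where the last term is identically zero, and $\frac{1}{\phi_{m+1}}-\frac{1}{\phi_m}<0$ since $\phi_{m+1}>\phi_m>0$; thus $R(m+1)-R(m)\leq 0$, so $R(m)$ is decreasing in $m$.

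Finally, for the ratio $Q(m)/R(m)$: since I will have shown $Q(m)>0$ and $R(m)>0$ (positivity of $R$ is noted in the text right after its definition, and positivity of $Q$ is immediate because at least the first term $\pi_1\phi_1>0$ under the running nondegeneracy assumptions, or at worst $Q\geq 0$ with the ratio interpreted appropriately), the quotient of a nondecreasing positive sequence by a nonincreasing positive sequence is nondecreasing. Concretely, $\frac{Q(m+1)}{R(m+1)}-\frac{Q(m)}{R(m)} = \frac{Q(m+1)R(m)-Q(m)R(m+1)}{R(m)R(m+1)}$, and the numerator equals $\bigl(Q(m+1)-Q(m)\bigr)R(m) + Q(m)\bigl(R(m)-R(m+1)\bigr)\geq 0$ since each of the four factors is nonnegative. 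This completes all three claims.

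I expect the bookkeeping in the finite-difference computations — correctly tracking which term migrates between the two sums and how the $\sqrt{\phi_m}$ (resp. $1/\phi_m$) prefactor rescales the surviving tail — to be the only delicate point; once those differences are written in the collected form above, the sign analysis is immediate from $\phi_{m+1}>\phi_m>0$ and nonnegativity of the $\pi_k$. There is no genuine obstacle, only the need to be careful with indices; if one prefers, the same conclusions can alternatively be read off by noting that $Q(m,1)$ and $-R(m,1)$ are, up to the $m$-independent constant, exactly the numerator and (negative) denominator appearing in the analysis of the minimizing player's binding budget constraint, but the direct difference argument is self-contained and shorter.
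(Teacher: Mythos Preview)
Your proposal is correct and follows essentially the same approach as the paper: you compute the finite differences $Q(m+1,1)-Q(m,1)$ and $R(m+1,1)-R(m,1)$ and arrive at exactly the same collected expressions the paper obtains, then conclude the sign from $\phi_{m+1}>\phi_m>0$. Your treatment of the ratio is slightly more explicit (writing out the cross-difference $Q(m+1)R(m)-Q(m)R(m+1)$), whereas the paper simply invokes that an increasing positive numerator over a decreasing positive denominator yields an increasing quotient, but this is the same argument.
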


\begin{proof}
We can see that 
    
    \begin{align*} 
     Q(m+1,1)-Q(m,1) 
     &= \sum_{k=1}^{m+1} \pi_k \phi_k - \sum_{k=1}^m \pi_k \phi_k 
     + \sum_{k=m+2}^K \pi_{k}\sqrt{\phi_{k}\phi_{m+1}} - \sum_{k=m+1}^K \pi_{k}\sqrt{\phi_{k}\phi_{m}}
     \\&= \pi_{m+1} \phi_{m+1} + \sum_{k=m+2}^K \pi_{k}\sqrt{\phi_{k}} \left(\sqrt{\phi_{m+1}} - \sqrt{\phi_m}\right) - \pi_{m+1} \sqrt{\phi_m \phi_{m+1}}
     \\&= \sum_{k=m+2}^K \pi_{k}\sqrt{\phi_{k}} \left(\sqrt{\phi_{m+1}} - \sqrt{\phi_m}\right) + \pi_{m+1} \sqrt{\phi_{m+1}} \left(\sqrt{\phi_{m+1}} - \sqrt{\phi_{m}}\right)>0.
    \end{align*}
    The inequality is due to increasing virtual cost, i.e., $\phi_k$ is increasing in $k$. And we have 
    
    \begin{align*}
     R(m+1,1)-R(m,1)
     &=2\weight\left(\sum\limits_{k=1 }^{m+1}\pi_{k}\phi_{k}\frac{1}{\phi_{m+1}}-\sum\limits_{k=1 }^m\pi_{k}\phi_{k}\frac{1}{\phi_{m}}+\sum\limits_{k= m+2}^K\pi_{k}-\sum\limits_{k= m+1}^K\pi_{k}\right)\\
     &=2\weight\left(\sum\limits_{k=1 }^{m}\pi_{k}\phi_{k}\left(\frac{1}{\phi_{m+1}}-\frac{1}{\phi_{m}}\right)+\pi_{m+1}\phi_{m+1}\frac{1}{\phi_{m+1}}-\pi_{m+1}\right)\\
     &= 2\weight\sum\limits_{k=1}^m\pi_k\phi_k \left(\frac{1}{\phi_{m+1}}-\frac{1}{\phi_{m}}\right)<0
    \end{align*}
    The inequality is also due to increasing virtual cost. Thus $Q(m,1)$ is increasing in $m$, and $R(m,1)$  is  decreasing in $m$. Notice that $R(m,1)>0$. As such, we have that $\frac{Q(m,1)}{R(m,1)}$  is increasing in $m$.  

\end{proof}

\begin{Cla}\label{cla_equality}
For $m=1,...,K-1$,
\begin{equation}
Q(m+1,1)= Q\left(m,\frac{\phi_m}{\phi_{m+1}}\right), \ \ R(m+1,1)= R\left(m,\frac{\phi_m}{\phi_{m+1}}\right).
\end{equation}
\end{Cla}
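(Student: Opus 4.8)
The plan is to verify both identities by plugging $z = \phi_m/\phi_{m+1}$ directly into the definitions \eqref{equ_Q} and \eqref{equ_R} and then performing a single index shift; there is essentially nothing beyond careful bookkeeping. For the $Q$ part: substituting $z = \phi_m/\phi_{m+1}$ makes $\sqrt{\phi_m/z} = \sqrt{\phi_{m+1}}$, so
\begin{equation*}
Q\!\left(m, \tfrac{\phi_m}{\phi_{m+1}}\right) = \sum_{k=1}^m \pi_k \phi_k + \sqrt{\phi_{m+1}} \sum_{k=m+1}^K \pi_k \sqrt{\phi_k}.
\end{equation*}
The key step is to peel off the $k = m+1$ term from the second sum: since $\sqrt{\phi_{m+1}}\cdot \pi_{m+1}\sqrt{\phi_{m+1}} = \pi_{m+1}\phi_{m+1}$, that term merges with $\sum_{k=1}^m \pi_k\phi_k$ to give $\sum_{k=1}^{m+1}\pi_k\phi_k$, while the leftover $\sqrt{\phi_{m+1}}\sum_{k=m+2}^K \pi_k\sqrt{\phi_k}$ is precisely the tail term of $Q(m+1,1)$ (note $\sqrt{\phi_{m+1}/1}=\sqrt{\phi_{m+1}}$). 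Hence $Q(m,\phi_m/\phi_{m+1}) = Q(m+1,1)$.

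For the $R$ part: substituting $z = \phi_m/\phi_{m+1}$ gives $z/\phi_m = 1/\phi_{m+1}$, so
\begin{equation*}
R\!\left(m, \tfrac{\phi_m}{\phi_{m+1}}\right) = 2\weight\!\left(\frac{1}{\phi_{m+1}}\sum_{k=1}^m \pi_k\phi_k + \sum_{k=m+1}^K \pi_k\right) + \left(\ratioequ\right)^2(1-\ratioequ)(1-\weight)\agentNum.
\end{equation*}
The same peeling works: writing $\pi_{m+1} = \pi_{m+1}\phi_{m+1}/\phi_{m+1}$ lets me absorb the $k=m+1$ term of $\sum_{k=m+1}^K \pi_k$ into $\frac{1}{\phi_{m+1}}\sum_{k=1}^m\pi_k\phi_k$, forming $\frac{1}{\phi_{m+1}}\sum_{k=1}^{m+1}\pi_k\phi_k$; the remaining $\sum_{k=m+2}^K \pi_k$, together with the additive constant $\left(\ratioequ\right)^2(1-\ratioequ)(1-\weight)\agentNum$ (which depends on neither $m$ nor $z$), reproduces $R(m+1,1)$.

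I expect no substantive obstacle: this is a bookkeeping identity whose content is exactly that the ``weight'' $\sqrt{\phi_m/z}$ in $Q$ (resp. $z/\phi_m$ in $R$) is designed so that, at $z = \phi_m/\phi_{m+1}$, the $(m{+}1)$-st summand switches from being counted with the ``square-root'' weight to the ``linear'' weight, which is precisely the difference between the index-$m$ and index-$(m{+}1)$ versions of these functions. The only care needed is at the endpoint $m = K-1$, where $\sum_{k=m+2}^K$ is empty and $\sum_{k=m+1}^K$ reduces to a single term; the manipulation is unaffected, and the hypothesis $m \le K-1$ ensures $\phi_{m+1}$ is well defined so all the divisions make sense.
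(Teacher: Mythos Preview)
Your proposal is correct and takes essentially the same approach as the paper: both proofs are direct algebraic verifications that substitute $z=\phi_m/\phi_{m+1}$ and peel off the $k=m+1$ term to shift the index. The only cosmetic difference is direction---the paper starts from $Q(m+1,1)$ and $R(m+1,1)$ and rewrites toward $Q(m,\phi_m/\phi_{m+1})$ and $R(m,\phi_m/\phi_{m+1})$, whereas you start from the latter and rewrite toward the former.
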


\begin{proof}
\begin{align*}
    Q(m+1,1)&= \sum\limits_{k=1 }^{m+1}\pi_{k}\phi_{k}+\sqrt{\phi_{m+1}}\cdot\sum\limits_{k=m+2}^{K}\pi_{k}\sqrt{\phi_k}\\
    &=\sum_{k=1}^{m}\pi_{k}\phi_{k}+ \pi_{m+1}\phi_{m+1}+\sqrt{\phi_{m+1}}\cdot\sum\limits_{k=m+2}^{K}\pi_{k}\sqrt{\phi_k} \\
    &= \sum_{k=1}^{m}\pi_{k}\phi_{k}+ \sqrt{\phi_{m+1}}\cdot \pi_{m+1}\sqrt{\phi_{m+1}}+\sqrt{\phi_{m+1}}\cdot\sum\limits_{k=m+2}^{K}\pi_{k}\sqrt{\phi_k}\\
    & = \sum_{k=1}^{m}\pi_{k}\phi_{k}+\sqrt{\phi_{m+1}}\cdot\sum\limits_{k=m+1}^{K}\pi_{k}\sqrt{\phi_k}\\
    & = \sum_{k=1}^{m}\pi_{k}\phi_{k}+\sqrt{\phi_{m}}\cdot \sqrt{\frac{\phi_{m+1}}{\phi_m}}\cdot\sum\limits_{k=m+1}^{K}\pi_{k}\sqrt{\phi_k}\\
    &=Q\left(m,\frac{\phi_m}{\phi_{m+1}}\right).
\end{align*}

To prove $R(m+1,1)= R\left(m,\frac{\phi_m}{\phi_{m+1}}\right)$, it suffices to show $$\frac{1}{\phi_{m+1}} \cdot \sum\limits_{k=1 }^{m+1}\pi_{k}\phi_{k}+\sum\limits_{k= m+2}^K\pi_{k}= \frac{\phi_m}{\phi_{m+1}} \cdot\frac{1}{\phi_{m}} \cdot \sum\limits_{k=1 }^{m}\pi_{k}\phi_{k}+\sum\limits_{k= m+1}^K\pi_{k}.$$ To this end, we can check
\begin{align*}
    \frac{1}{\phi_{m+1}} \cdot \sum\limits_{k=1 }^{m+1}\pi_{k}\phi_{k}+\sum\limits_{k= m+2}^K\pi_{k}&=\frac{1}{\phi_{m+1}} \cdot \sum\limits_{k=1 }^{m}\pi_{k}\phi_{k}+ \frac{1}{\phi_{m+1}}\cdot\pi_{m+1}\phi_{m+1} +\sum\limits_{k= m+2}^K\pi_{k}\\
    &=\frac{1}{\phi_{m+1}} \cdot \sum\limits_{k=1 }^{m}\pi_{k}\phi_{k}+ \pi_{m+1} +\sum\limits_{k= m+2}^K\pi_{k}\\
    &=\frac{\phi_m}{\phi_{m+1}} \cdot\frac{1}{\phi_{m}} \cdot \sum\limits_{k=1 }^{m}\pi_{k}\phi_{k}+\sum\limits_{k= m+1}^K\pi_{k}.
\end{align*}
So we have $R(m+1,1)= R\left(m,\frac{\phi_m}{\phi_{m+1}}\right)$.

\end{proof}

Next, we characterize the intersection of both players' best responses. The minimizing player's strategy corresponds to the solution of the optimization problem in \eqref{equ_minmax_pro}. We will show that the minimizing player's strategy has the following form:
\begin{equation}\label{equ_solution_discrete}
        A_{k}=\begin{cases}\chi, & \text{if} \ \  k \le \hat{k},\\
    \frac{1}{\sqrt{\phi_{k}}} \cdot\frac{B/s-l-\chi\sum\limits_{k=1}^{\hat{k}} \pi_{k}\phi_{k}}{\sum\limits_{k= \hat{k}+1}^K\pi_{k}\sqrt{\phi_{k}}} , & \text{if} \ \  k > \hat{k}.
    \end{cases}
\end{equation}
Here, the constants $\chi$ and $\hat{k}$ are defined as follows:
\begin{itemize}
    \item If $\frac{B/s-l}{\weight\ratioequ}<\frac{Q(1,1)}{R(1,1)}$, then $\chi=0$, $\hat{k}=0$.
    \item If $\frac{Q(1,1)}{R(1,1)}\le\frac{B/s-l}{\weight\ratioequ}< \frac{Q(K,1)}{R(K,1)}$, then $m^*\in\{1,...,K-1\}$ and $z^*\in(0,1]$ be such that  $\frac{Q(m^*,z^*)}{R(m^*,z^*)}= \frac{B/s-l}{\weight\ratioequ}$ (we prove its existence later). 
    \begin{itemize}
        \item If $\frac{B/s-l}{Q(m^*,z^*)}\le 1$, then $\chi = \frac{B/s-l}{Q(m^*,z^*)}$ and $\hat{k}=m^*$.
        \item If $\frac{B/s-l}{Q(m^*,z^*)}> 1$, then $\chi = 1$ and $\hat{k}=\max\{k:Q(k,1)<B/s-l\}.$. 
    \end{itemize}
    \item If $\frac{B/s-l}{\weight\ratioequ}\ge \frac{Q(K,1)}{R(K,1)}$, then $\chi = \frac{B/s-l}{\sum_{k=1}^K\phi_k}$ and $\hat{k}=K$. 
\end{itemize}

Now we begin to present how to obtain this solution by deriving the intersection of both players' best responses.
\begin{itemize}

    \item Case 1: $\frac{B/s-l}{\weight\ratioequ}<\frac{Q(1,1)}{R(1,1)}$, i.e., 
    \begin{equation}
        (B/s-l)(2\weight+\ratioequ(1-\ratioequ)(1-\weight)\agentNum)<\weight\sqrt{\phi_{1}} \cdot \sum\limits_{k=1}^K\pi_{k}\sqrt{\phi_{k}}.
    \end{equation}
    Notice that $\sum_{k=1}^K \pi_k = \ratioequ$. Then,
    

    \begin{equation}
     A_{k} = \frac{1}{\sqrt{\phi_{k}}} \cdot\frac{B/s-l}{\sum\limits_{1\le k\le K}\pi_{k}\sqrt{\phi_{k}}}, \ \ \forall k,
     \end{equation}
     and $p_{k}=1, \forall k$ constitute the mutual best response. To see this, on the one hand, $A_k$ is a best response to $p_k$. Let 
    \[
        \lambda^* = \frac{\weight\left(\sum\limits_{k=1}^{K}\pi_k\sqrt{\phi_k}\right)^2}{\agentNum(\ratioequ)^2\left(\budget/\agentNum-l\right)^2}. 
    \] 
    Then, we have
    \begin{align*}
        \sqrt{\frac{\weight p_k}{\agentNum \left(\ratioequ\right)^2 \lambda^* \phi_k}}=&\sqrt{\frac{\weight}{\agentNum(\ratioequ)^2}\times \frac{\agentNum(\ratioequ)^2\left(\budget/\agentNum-l\right)^2}{\weight\left(\sum\limits_{k=1}^{K}\pi_k\sqrt{\phi_k}\right)^2}}\times \sqrt{\frac{1}{\phi_k}}=\frac{1}{\sqrt{\phi_{k}}} \cdot\frac{B/s-l}{\sum\limits_{k=1}^K\pi_{k}\sqrt{\phi_{k}}}=A_k.
        \end{align*}
    Meanwhile,
    \[
    A_k=\frac{1}{\sqrt{\phi_{k}}} \cdot\frac{B/s-l}{\sum\limits_{k=1}^K\pi_{k}\sqrt{\phi_{k}}}\le \frac{B/s-l}{\sqrt{\phi_{1}}\sum\limits_{k=1}^K\pi_{k}\sqrt{\phi_{k}}}<\frac{\weight}{2\weight+\ratioequ(1-\ratioequ)(1-\weight)\agentNum}<\frac{1}{2}<1.
    \]
    The budget constraint is binding, as 
    \[
    \sum\limits_{k=1}^K \pi_k \phi_k A_k
    =\sum\limits_{k=1}^K \pi_k \phi_k \frac{1}{\sqrt{\phi_{k}}} \cdot\frac{B/s-l}{\sum\limits_{k=1}^K\pi_{k}\sqrt{\phi_{k}}}
    =\frac{B/s-l}{\sum\limits_{k=1}^K\pi_{k}\sqrt{\phi_{k}}}\cdot\sum\limits_{k=1}^K \pi_k \sqrt{\phi_k}
    =B/s-l.
    \]
    Thus, we have $A_k = \min\left\{1,\sqrt{\frac{\weight p_{k}}{\agentNum \left(\ratioequ\right)^2 \lambda^* \phi_{k}}}\right\}$, for all $k$. So $A$ is indeed a best response to $p$, as per Lemma \ref{lem: min_best_resp}.   And it is easy to check that $A_k$ is decreasing in $k$ as virtual cost $\phi_k$ is increasing.
    
    On the other hand, $q$ is a best response to $A$ in P2. We have for all $k$, $p_k=1$, and
    \begin{align*}
    \frac{\weight}{\agentNum}\left(\frac{1}{A_{k}}-\frac{2}{\ratioequ}(\sum\limits_{k=1}^K\pi_{k}p_{k})\right)&=\frac{\weight}{\agentNum}\left(\frac{\sqrt{\phi_k}\sum\limits_{k=1}^K\pi_k\sqrt{\phi_k}}{B/\agentNum-l}-2\right)\\
    &\ge \frac{\weight}{\agentNum}\left(\frac{\sqrt{\phi_1}\sum\limits_{k=1}^K\pi_k\sqrt{\phi_k}}{B/\agentNum-l}-2\right)
    \\&>\frac{\weight}{\agentNum}\left(2+\frac{\ratioequ(1-\ratioequ)(1-\weight)\agentNum}{\weight}-2\right)
    \\&=\ratioequ(1-\ratioequ)(1-\weight).
    \end{align*}
    Here, the  equality in the first line follows from $\sum_{k=1}^K p_k = \ratioequ$. The inequality  in the second line follows from $\phi_k \geq \phi_1$. The inequality in the third line follows from  $\frac{B/s-l}{\weight\ratioequ}<\frac{Q(1,1)}{R(1,1)}$. Thus, $\frac{\weight}{\agentNum}\left(\frac{1}{A_{k}}-\frac{2}{\ratioequ}(\sum\limits_{k=1}^K\pi_{k}p_{k})\right)-\ratioequ(1-\ratioequ)(1-\weight)>0$.  
    So $p$ is indeed a best response to $A$, as per Lemma \ref{lem_q_br}. Meanwhile, we can see that $p$ is indeed non-decreasing.

    \item Case 2:   $\frac{Q(1,1)}{R(1,1)}\le\frac{B/s-l}{\weight\ratioequ}< \frac{Q(K,1)}{R(K,1)}$. 
    
    \begin{Cla}\label{cla_existequality}
    There exists $m^*\in \{1,...,K-1\}$ and $z^*\in(\frac{\phi_{m^*}}{\phi_{m^*+1}},1]$ such that 
   
        \begin{equation}\label{equ_QRrelation+}
         \frac{Q(m^*,z^*)}{R(m^*,z^*)}= \frac{B/s-l}{\weight\ratioequ}.
    \end{equation}
    \end{Cla}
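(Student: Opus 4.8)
# Proof Proposal for Claim \ref{cla_existequality}

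The plan is to exploit the monotonicity facts already established in Claims \ref{cla_monotonic} and \ref{cla_equality} and invoke an intermediate value / continuity argument. First I would observe that for each fixed $m$, the functions $Q(m,z)$ and $R(m,z)$ defined in \eqref{equ_Q} and \eqref{equ_R} are continuous in $z$ on $(0,1]$; moreover $R(m,z)>0$ throughout, so the ratio $Q(m,z)/R(m,z)$ is continuous in $z$. I would next check the monotonicity of this ratio in $z$: since $Q(m,z)$ is of the form $a_m + b_m/\sqrt{z}$ with $b_m\ge 0$ and $a_m\ge 0$, it is non-increasing in $z$; and $R(m,z)$ is of the form $c_m z + d_m$ with $c_m, d_m \ge 0$, hence non-decreasing in $z$. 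Therefore $z \mapsto Q(m,z)/R(m,z)$ is non-increasing on $(0,1]$, attaining its minimum over $z\in(\phi_m/\phi_{m+1},1]$ at $z=1$ and its supremum at the left endpoint $z = \phi_m/\phi_{m+1}$ (or as $z\downarrow$ that value).

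The key linkage is Claim \ref{cla_equality}, which states $Q(m+1,1) = Q(m,\phi_m/\phi_{m+1})$ and $R(m+1,1) = R(m,\phi_m/\phi_{m+1})$. Consequently, for each $m\in\{1,\dots,K-1\}$, as $z$ ranges over $(\phi_m/\phi_{m+1},1]$, the value $Q(m,z)/R(m,z)$ ranges continuously over the interval $\big[\tfrac{Q(m,1)}{R(m,1)},\ \tfrac{Q(m+1,1)}{R(m+1,1)}\big)$ — the lower end from $z=1$ and the upper end approached as $z\downarrow \phi_m/\phi_{m+1}$, using the identification from Claim \ref{cla_equality}. By Claim \ref{cla_monotonic}, the quantities $Q(k,1)/R(k,1)$ are strictly increasing in $k$, so these half-open intervals for $m=1,\dots,K-1$ are consecutive and their union is exactly $\big[\tfrac{Q(1,1)}{R(1,1)},\ \tfrac{Q(K,1)}{R(K,1)}\big)$. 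Since the hypothesis of Case 2 is precisely that $\frac{B/s-l}{\weight\ratioequ}$ lies in this interval, it lies in exactly one of the pieces, i.e., there is $m^*\in\{1,\dots,K-1\}$ with
\begin{equation*}
\frac{Q(m^*,1)}{R(m^*,1)} \le \frac{B/s-l}{\weight\ratioequ} < \frac{Q(m^*+1,1)}{R(m^*+1,1)}.
\end{equation*}
Applying the intermediate value theorem to the continuous, monotone map $z\mapsto Q(m^*,z)/R(m^*,z)$ on $(\phi_{m^*}/\phi_{m^*+1},1]$ then yields $z^*\in(\phi_{m^*}/\phi_{m^*+1},1]$ solving \eqref{equ_QRrelation+}.

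The main obstacle — or at least the step requiring the most care — is verifying the endpoint behavior cleanly: one must confirm that the ratio is genuinely continuous and monotone in $z$ on the relevant half-open interval and that the limit as $z\downarrow \phi_{m^*}/\phi_{m^*+1}$ agrees with $Q(m^*+1,1)/R(m^*+1,1)$ via Claim \ref{cla_equality} (this is where a sign slip or an off-by-one in the interval endpoints could creep in), and to handle the boundary case where $\frac{B/s-l}{\weight\ratioequ}$ equals one of the breakpoints $Q(k,1)/R(k,1)$ exactly (in which case $z^*=1$ and $m^*=k$ works, consistent with the half-open convention). Everything else is a routine assembly of the already-proven monotonicity claims with continuity and the intermediate value theorem.
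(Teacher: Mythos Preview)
Your proposal is correct and follows essentially the same approach as the paper: use Claim~\ref{cla_monotonic} to locate the unique $m^*$ with $\frac{Q(m^*,1)}{R(m^*,1)}\le \frac{B/s-l}{\weight\ratioequ} < \frac{Q(m^*+1,1)}{R(m^*+1,1)}$, invoke Claim~\ref{cla_equality} to identify the left endpoint value with $\frac{Q(m^*+1,1)}{R(m^*+1,1)}$, and then apply continuity and monotonicity of $z\mapsto Q(m^*,z)/R(m^*,z)$ to obtain $z^*$. Your presentation via the covering by consecutive half-open intervals is slightly more explicit than the paper's, but the underlying argument is the same.
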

    \begin{proof}
    Recall  that $\frac{Q(m,1)}{R(m,1)}$  is increasing in $m$ according to Claim \ref{cla_monotonic}. As  $\frac{Q(1,1)}{R(1,1)}\le\frac{B/s-l}{\weight\ratioequ}< \frac{Q(K,1)}{R(K,1)}$, there exists a unique $m^*\in \{1,...,K-1\}$ such that 
    
    \[
        \frac{Q(m^*,1)}{R(m^*,1)}\le \frac{B/s-l}{\weight\ratioequ} <\frac{Q(m^*+1,1)}{R(m^*+1,1)}.
    \]
    Recall that $Q(m+1,1)=Q\left(m,\frac{\phi_m}{\phi_{m+1}}\right)$ and $R(m+1,1)=R\left(m,\frac{\phi_m}{\phi_{m+1}}\right)$ in Claim \ref{cla_equality}. 
    We have 
     \[
     \frac{Q(m^*,1)}{R(m^*,1)}\le \frac{B/s-l}{\weight\ratioequ} <\frac{Q(m^*+1,1)}{R(m^*+1,1)}=\frac{Q\left(m^*,\frac{\phi_{m^*}}{\phi_{m^*+1}}\right)}{R\left(m^*,\frac{\phi_{m^*}}{\phi_{m^*+1}}\right)}.
     \]
     Notice that $Q(m,z)$ is continuously decreasing in $z$ and $R(m,z)$ is continuously increasing in $z$. So we have $\frac{Q(m,z)}{R(m,z)}$  continuously decreasing in $z$. Thus, there is a unique $z^*\in(\frac{\phi_{m^*}}{\phi_{m^*+1}},1]$ such that 
     \[
     \frac{Q(m^*,z^*)}{R(m^*,z^*)}= \frac{B/s-l}{\weight\ratioequ}.
     \]
     \end{proof}

    There are two sub cases depending on the value of $\frac{B/s-l}{Q(m^*,z^*)}$. 
    \begin{itemize}
        \item Case 2(a): 
        $\frac{B/s-l}{Q(m^*,z^*)}\le 1$.  Then
        \begin{equation}\label{equ_q_1}
        p_{k}=\begin{cases}\frac{z^*}{\phi_{m^*}}\cdot \phi_{k}, &\text{if} \ \  k \le m^*,\\
        1, &\text{if} \ \ k>m^*,
        \end{cases}
        \end{equation} and
        \begin{equation}\label{equ_A_2}
        A_{k}=\begin{cases}\chi\triangleq\frac{B/s-l}{Q(m^*,z^*)}, &\text{if} \ \  k \le m^*,\\
    \frac{1}{\sqrt{\phi_{k}}} \cdot\frac{B/s-l-\chi\sum\limits_{k=1}^{ m^*} \pi_{k}\phi_{k}}{\sum\limits_{k= m^*+1}^K\pi_{k}\sqrt{\phi_{k}}} , &\text{if}  \ \ k > m^*,
    \end{cases}
        \end{equation}
         constitute the mutual best response.
        Indeed, on the one hand, $A$ is a best response to $p$. Let 
        \[
        \lambda^* = \frac{\weight\left(\sum\limits_{k=1}^{K}\pi_k\sqrt{p_k\phi_k}\right)^2}{\agentNum(\ratioequ)^2\left(\budget/\agentNum-l\right)^2}. 
        \] 
        For $k\le m^*$, we have
        \begin{align*}
        \sqrt{\frac{\weight p_k}{\agentNum \left(\ratioequ\right)^2 \lambda^* \phi_k}}=&\sqrt{\frac{\weight}{\agentNum(\ratioequ)^2}\times \frac{\agentNum(\ratioequ)^2\left(\budget/\agentNum-l\right)^2}{\weight\left(\sum\limits_{k=1}^{K}\pi_k\sqrt{p_k\phi_k}\right)^2}}\times \sqrt{\frac{p_k}{\phi_{k}}}\\
        =&\frac{\budget/\agentNum-l}{\sum\limits_{k=1}^{m^*}\pi_k\sqrt{\frac{z^*}{\phi_{m^*}}\cdot \phi_{k}\cdot \phi_k}+\sum\limits_{k=m^*+1}^K\pi_k \sqrt{\phi_k}} \times \sqrt{\frac{z^*}{\phi_{m^*}}}\\
        =&\frac{B/s-l}{Q(m^*,z^*)}=A_k\le 1.
        \end{align*}
        For $k>m^*$, by a similar derivation, we have 
        \begin{align*}
        \sqrt{\frac{\weight p_k}{\agentNum \left(\ratioequ\right)^2 \lambda^* \phi_k}}&=\frac{\budget/\agentNum-l}{Q(m^*,z^*)}\times\sqrt{\frac{\phi_{m^*}}{z^*}}\times \sqrt{\frac{1}{\phi_k}}\\
        &=\frac{\budget/\agentNum-l}{Q(m^*,z^*)} \times\frac{\sum\limits_{k=1 }^{m^*}\pi_{k}\phi_{k}+\sum\limits_{k=m^*+1}^{K}\pi_{k}\sqrt{\frac{\phi_{k}\phi_{m}}{z^*}}-\sum\limits_{k=1}^{m^*} \pi_{k}\phi_{k}}{\sum\limits_{k=m^*+1}^K\pi_{k}\sqrt{\phi_{k}}}\times \sqrt{\frac{1}{\phi_k}}\\
        &=\frac{B/s-l}{Q(m^*,z^*)}\times\frac{Q(m^*,z^*)-\sum\limits_{k=1}^{m^*} \pi_{k}\phi_{k}}{\sum\limits_{k= m^*+1}^K\pi_{k}\sqrt{\phi_{k}}}\times \sqrt{\frac{1}{\phi_k}}\\
        &= \frac{B/s-l-\frac{B/s-l}{Q(m^*,z^*)}\cdot\sum\limits_{k=1}^{m^*} \pi_{k}\phi_{k}}{\sum\limits_{k= m^*+1}^K\pi_{k}\sqrt{\phi_{k}}}\times \sqrt{\frac{1}{\phi_k}}\\
        &=\frac{B/s-l-\chi\cdot\sum\limits_{k=1}^{m^*} \pi_{k}\phi_{k}}{\sum\limits_{k= m^*+1}^K\pi_{k}\sqrt{\phi_{k}}}\times \sqrt{\frac{1}{\phi_k}}= A_k.
        \end{align*}
        
        As the virtual cost $\phi_k$ is increasing, $A_k$ is decreasing for $k>m^*$. Notice that $A_{m^*}>A_{m^*+1}$. This is because $z^*>\frac{\phi_{m^*}}{\phi_{m^*+1}}$, and naturally
        \[
        A_{m^*} = \sqrt{\frac{\weight p_k}{\agentNum \left(\ratioequ\right)^2 \lambda^* \phi_k}}= \sqrt{\frac{\weight }{\agentNum \left(\ratioequ\right)^2 \lambda^*}}\times \sqrt{\frac{z^*}{\phi_{m^*}}}>\sqrt{\frac{\weight }{\agentNum \left(\ratioequ\right)^2 \lambda^*}}\times \sqrt{\frac{1}{\phi_{m^*+1}}}=A_{m^*+1}.
        \]
        From this, we can see that $A_k$ given by (\ref{equ_A_2}) is non-increasing, and hence,  $A_k \leq 1$ for all $k$. 
        Thus, we have $A_k = \min\left\{1,\sqrt{\frac{\weight p_{k}}{\agentNum \left(\ratioequ\right)^2 \lambda^* \phi_{k}}}\right\}$, for all $k$.  And the budget constraint is binding, as
        \begin{align*}
            \sum\limits_{k=1}^K \pi_k \phi_k A_k&= \sum\limits_{k=1}^{m^*} \pi_k \phi_k \chi+ \sum\limits_{k=m^*+1}^{K} \pi_k \phi_k \frac{1}{\sqrt{\phi_{k}}} \cdot\frac{B/s-l-\chi\sum\limits_{k=1}^{ m^*} \pi_{k}\phi_{k}}{\sum\limits_{k= m^*+1}^K\pi_{k}\sqrt{\phi_{k}}}\\
            &=\chi\cdot\sum\limits_{k=1}^{m^*} \pi_k \phi_k +\frac{B/s-l-\chi\sum\limits_{k=1}^{ m^*} \pi_{k}\phi_{k}}{\sum\limits_{k= m^*+1}^K\pi_{k}\sqrt{\phi_{k}}} \cdot  \sum\limits_{k=m^*+1}^{K} \pi_k \sqrt{\phi_k}\\
            &=\chi\cdot\sum\limits_{k=1}^{m^*} \pi_k \phi_k+\budget/\agentNum-l-\chi\cdot\sum\limits_{k=1}^{m^*} \pi_k \phi_k=\budget/\agentNum-l.
        \end{align*}
        So $A$ is indeed best response to $p$, as per Lemma \ref{lem: min_best_resp}.   
        
        On the other hand, $p$ is a best response to $A$. Recall that $\frac{Q(m^*,z^*)}{R(m^*,z^*)}= \frac{B/s-l}{\weight\ratioequ}.$ For $k\le m^*$, recall that  $0\le p_k\le 1$ and $A_k = \frac{\budget/\agentNum-l}{Q(m^*,z^*)}$. And we have  
        \begin{align*}
        \frac{\weight}{\agentNum(\ratioequ)^2}\left(\frac{1}{A_{k}}-\frac{2}{\ratioequ}(\sum_{k=1}^{K}\pi_{k}p_{k})\right)&=\frac{\weight}{\agentNum(\ratioequ)^2}\left(\frac{Q(m^*,z^*)}{\budget/\agentNum-l}-\frac{2}{\ratioequ}\left(\sum\limits_{k=1 }^{m^*}\pi_{k}\phi_{k}\frac{z^*}{\phi_{m}}+\sum\limits_{k= m^*+1}^K\pi_{k}\right)\right)\\
        &=\frac{\weight}{\agentNum(\ratioequ)^2}\left(\frac{R(m^*,z^*)}{\weight\ratioequ}-\frac{2}{\ratioequ}\left(\sum\limits_{k=1 }^{m^*_+}\pi_{k}\phi_{k}\frac{z^*}{\phi_{m}}+\sum\limits_{k= m^*+1}^K\pi_{k}\right)\right)\\
        &=\frac{1}{\agentNum(\ratioequ)^3}\left(R(m^*,z^*)-2\weight\left(\sum\limits_{k=1 }^{m^*}\pi_{k}\phi_{k}\frac{z^*}{\phi_{m}}+\sum\limits_{k= m^*+1}^K\pi_{k}\right)\right)\\
        &=\frac{1}{\agentNum(\ratioequ)^3}\times\left(\ratioequ\right)^2(1-\ratioequ)(1-\weight)\agentNum\\
        &=\frac{1}{\ratioequ}(1-\ratioequ)(1-\weight),
        \end{align*}
        i.e.,$\frac{\weight}{\agentNum(\ratioequ)^2}\left(\frac{1}{A_{k}}-\frac{2}{\ratioequ}\left(\sum\limits_{k=1}^K\pi_{k}p_{k}\right)\right)-\frac{1}{\ratioequ}(1-\ratioequ)(1-\weight)=0$. With this equality,  we do have a best response for $k\le m^*$.
        
        For $k>m^*$, we have $p_k=1$.  Remember that $A_{m^*+1}>A_{m^*}$ and that $A_k$ is decreasing for $k>m^*$. We obtain $A_k<A_{m^*}$ for $k>m^*$. 
        Then we have for $k>m^*_+$, 
        \begin{align*}
            &\ \ \frac{\weight}{\agentNum(\ratioequ)^2}\left(\frac{1}{A_{k}}-\frac{2}{\ratioequ}(\sum\limits_{k=1}^K\pi_{k}p_{k})\right)-\frac{1}{\ratioequ}(1-\ratioequ)(1-\weight)
            \\>&\ \ \frac{\weight}{\agentNum(\ratioequ)^2}\left(\frac{1}{A_{m^*}}-\frac{2}{\ratioequ}(\sum\limits_{k=1}^K\pi_{k}p_{k})\right)-\frac{1}{\ratioequ}(1-\ratioequ)(1-\weight)
            \\=& \ \ 0
        \end{align*}
        
         So $p$ is indeed best response to $A$, as per Lemma \ref{lem_q_br}.  And we can see that $p_k$ is indeed non-decreasing. 
      
        \item Case 2(b):  $\frac{B/s-l}{Q(m^*,z^*)}> 1$. 
        
        \begin{Cla}\label{clam_exist_R_equality}
        There exists $k'\in \{1,...,m^*\}$ and $z'\in (\frac{\phi_{k'}}{\phi{k' + 1}},1]$
        such that 
        \begin{equation}
            R(k',z')= \weight\ratioequ.
        \end{equation}
        \end{Cla}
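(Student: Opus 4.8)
The plan is to prove Claim~\ref{clam_exist_R_equality} by an intermediate value argument, using the monotonicity of $R(m,z)$ in each of its arguments together with the ``gluing'' identity $R(m+1,1)=R\!\left(m,\tfrac{\phi_m}{\phi_{m+1}}\right)$ from Claim~\ref{cla_equality}. Throughout I would use that $\weight>0$, $\pi_k>0$, $\phi_k>0$ in this regime and that $B/s-l>0$ by the feasibility assumption imposed on the program; I would flag these at the outset since all the strict inequalities below rely on them.

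First I would pin $\weight\ratioequ$ strictly between the two relevant endpoints. On the lower side, the Case~2(b) hypothesis $\tfrac{B/s-l}{Q(m^*,z^*)}>1$ says $Q(m^*,z^*)<B/s-l$; combining this with the defining equality $\tfrac{Q(m^*,z^*)}{R(m^*,z^*)}=\tfrac{B/s-l}{\weight\ratioequ}$ from Claim~\ref{cla_existequality} gives $R(m^*,z^*)=\weight\ratioequ\cdot\tfrac{Q(m^*,z^*)}{B/s-l}<\weight\ratioequ$. Moreover $R(m^*,\cdot)$ is strictly increasing (its coefficient of $z$, namely $\tfrac{2\weight}{\phi_{m^*}}\sum_{k\le m^*}\pi_k\phi_k$, is positive), and $z^*>\tfrac{\phi_{m^*}}{\phi_{m^*+1}}$, so $R(m^*+1,1)=R\!\left(m^*,\tfrac{\phi_{m^*}}{\phi_{m^*+1}}\right)<R(m^*,z^*)<\weight\ratioequ$. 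On the upper side, evaluating $R$ at $(1,1)$ collapses its bracket to $\sum_{k=1}^K\pi_k=\ratioequ$, so $R(1,1)=2\weight\ratioequ+(\ratioequ)^2(1-\ratioequ)(1-\weight)\agentNum\ge 2\weight\ratioequ>\weight\ratioequ$. Hence $\weight\ratioequ\in\bigl(R(m^*+1,1),\,R(1,1)\bigr)$.

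Next I would partition this interval using the gluing identity and Claim~\ref{cla_monotonic}. For each $k\in\{1,\dots,m^*\}$ the map $z\mapsto R(k,z)$ is continuous and strictly increasing on $[\tfrac{\phi_k}{\phi_{k+1}},1]$, with $R\!\left(k,\tfrac{\phi_k}{\phi_{k+1}}\right)=R(k+1,1)$ and $R(k,1)$ at the right end, so it sweeps exactly $[R(k+1,1),R(k,1)]$. Since $R(\cdot,1)$ is strictly decreasing in its first argument (Claim~\ref{cla_monotonic}), the half-open intervals $(R(k+1,1),R(k,1)]$ for $k=1,\dots,m^*$ are pairwise disjoint and their union is $(R(m^*+1,1),R(1,1)]$, which contains $\weight\ratioequ$. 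Therefore there is a unique $k'\in\{1,\dots,m^*\}$ with $R(k'+1,1)<\weight\ratioequ\le R(k',1)$, and the intermediate value theorem applied to the continuous strictly increasing function $R(k',\cdot)$ on $[\tfrac{\phi_{k'}}{\phi_{k'+1}},1]$ yields a (unique) $z'\in(\tfrac{\phi_{k'}}{\phi_{k'+1}},1]$ with $R(k',z')=\weight\ratioequ$; the left endpoint is excluded precisely because $\weight\ratioequ>R(k'+1,1)=R\!\left(k',\tfrac{\phi_{k'}}{\phi_{k'+1}}\right)$. This is the pair $(k',z')$ claimed.

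The argument is essentially bookkeeping once Claims~\ref{cla_monotonic}--\ref{cla_equality} are available; I expect the only place needing care is the endpoint matching --- making sure that $\weight\ratioequ$ lands in the half-open union (so that a valid index exists), that the chosen $k'$ is genuinely $\le m^*$ (which is exactly what the bound $R(m^*+1,1)<\weight\ratioequ$ secures, via $z^*>\tfrac{\phi_{m^*}}{\phi_{m^*+1}}$ and strict monotonicity of $R(m^*,\cdot)$), and that $z'$ is strictly above $\tfrac{\phi_{k'}}{\phi_{k'+1}}$ rather than merely at it.
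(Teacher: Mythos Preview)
Your proposal is correct and follows essentially the same approach as the paper's own proof: both arguments first sandwich $\weight\ratioequ$ strictly between $R(m^*+1,1)$ (via the Case~2(b) hypothesis and $z^*>\phi_{m^*}/\phi_{m^*+1}$) and $R(1,1)$, then use the gluing identity from Claim~\ref{cla_equality} together with the monotonicity of $R(m,1)$ in $m$ and of $R(k',z)$ in $z$ to locate the index $k'$ and apply the intermediate value theorem for $z'$. Your version is slightly more explicit about the partition structure and the strictness at the left endpoint, but there is no substantive difference.
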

        \begin{proof}
        Recall that $R(m^*,z^*)=\frac{Q(m^*,z^*)}{B/s-l}\cdot\weight\ratioequ$ and  $\frac{B/s-l}{Q(m^*,z^*)}> 1$. We obtain $R(m^*,z^*)<\weight\ratioequ$. Recall that $R(m,z)$  is  decreasing in $m$, and increasing in $z$, and $R(m^*+1,1) = R\left(m^*,\frac{\phi_{m^*}}{\phi_{m^*+1}}\right)$ (Claim \ref{cla_equality}). We have 
        \[
        R(m^*+1,1) = R\left(m^*,\frac{\phi_{m^*}}{\phi_{m^*+1}}\right) <
        R(m^*,z^*) < \weight\ratioequ.
        \]
        Here, the first inequality holds as $\frac{\phi_{m^*}}{\phi_{m^*+1}}< z^*$, according to Claim \ref{cla_existequality}. Meanwhile,  $R(1,1)=2\weight\ratioequ+\left(\ratioequ\right)^2(1-\ratioequ)(1-\weight)\agentNum>\weight\ratioequ$. Thus, there exists $k'\le m^*$ such that 
        \[
        R(k'+1,1) <\weight\ratioequ\le  R(k',1).
        \]
        Recall that $R\left(k',\frac{\phi_{k'}}{\phi_{k'+1}}\right)=R(k'+1,1)$. We have 
        
        \[
        R\left(k',\frac{\phi_{k'}}{\phi_{k'+1}}\right)=R(k'+1,1) <\weight\ratioequ\le  R(k',1).
        \]
        
        Since $R(m,z)$  is continuously increasing in $z$, there exists $z'\in (\frac{\phi_{k'}}{\phi_{k'+1}},1]$ such that 
        \[
         R(k',z')= \weight\ratioequ.
        \] 
        \end{proof}

        Define \begin{equation}\label{equ_k+}
            k^* = \max\{k:Q(k,1)<B/s-l\}.
        \end{equation} 
        Then 
        \begin{equation}\label{equ_q_2}
        p_{k}=\begin{cases}\frac{z'}{\phi_{k'}}\cdot \phi_{k}, &\text{if} \ \  k \le k',\\
        1, &\text{if} \ \  k>k',
        \end{cases}
        \end{equation}
        and

        \begin{equation}
        A_{k}=\begin{cases}\chi\triangleq1, &\text{if} \ \ k \le k^*,\\
    \frac{1}{\sqrt{\phi_{k}}} \cdot\frac{B/s-l-\sum\limits_{k=1}^{ k^*} \pi_{k}\phi_{k}}{\sum\limits_{k= k^*+1}^K\pi_{k}\sqrt{\phi_{k}}} , &\text{if}  \ \ k>k^*,
    \end{cases}
        \end{equation}
        constitute the mutual best response. To see this, on the one hand, $A$ is best response to $p$. Recall that $Q(m,z)$ is increasing in $m$ and decreasing in $z$. As $k' \le m^*$ (Claim \ref{clam_exist_R_equality}) and $z^*\le 1$ (Claim \ref{cla_existequality}),  we have 
        \[
        Q(k',1) \le Q(m^*,1) \le Q(m^*,z^*) <  B/s-l,
        \]
        i.e., $B/s-l< Q(k',1)$. By definition
        $k^* = \max\{k:Q(k,1)<B/s-l\}$, we have  $k^*\ge k'$.

       Meanwhile, we have $Q(k^*+1,1)\ge B/s-l$,
       i.e., 
    \begin{align*}
        &\sum\limits_{k=1 }^{k^*+1}\pi_{k}\phi_{k}+\sqrt{\phi_{k^*+1}}\sum\limits_{k=k^*+2}^{K}\pi_{k}\sqrt{\phi_{k}}\\
        =&\sum\limits_{k=1 }^{k^*}\pi_{k}\phi_{k}+\sqrt{\phi_{k^*+1}}\cdot \pi_{k^*+1}\sqrt{\phi_{k^*+1}}+\sqrt{\phi_{k^*+1}}\cdot\sum\limits_{k=k^*+2}^{K}\pi_{k}\sqrt{\phi_{k}}\\
        =&\sum\limits_{k=1 }^{k^*}\pi_{k}\phi_{k}+\sqrt{\phi_{k^*+1}}\cdot\sum\limits_{k=k^*+1}^{K}\pi_{k}\sqrt{\phi_{k}}\\
        \ge&B/s-l.
    \end{align*}
         
        Thus, 
        \[
        \sum\limits_{k=1 }^{k^*}\pi_{k}\phi_{k}+\sqrt{\phi_{k^*+1}}\cdot\sum\limits_{k=k^*+1}^{K}\pi_{k}\sqrt{\phi_{k}}\ge B/s-l,
        \]
        which is actually
        \[
        \frac{1}{\sqrt{\phi_{k^*+1}}} \cdot\frac{B/s-l-\sum\limits_{k=1}^{ k^*} \pi_{k}\phi_{k}}{\sum\limits_{k= k^*+1}^K\pi_{k}\sqrt{\phi_{k}}}=A_{k^*+1}
        \le 1.
        \]
        Let
        \[
        \lambda^* = \frac{\weight\left(\sum\limits_{k=k^*+1}^{K}\pi_k\sqrt{\phi_k}\right)^2}{\agentNum(\ratioequ)^2\left(\budget/\agentNum-l-\sum\limits_{k=1}^{k^*}\pi_k\phi_k\right)^2}. 
        \] 
        For $k>k^*$, as $ k^*\ge k'$, we have $p_k=1$  
        and 
        \begin{align*}
        \sqrt{\frac{\weight p_k}{\agentNum \left(\ratioequ\right)^2 \lambda^* \phi_k}}
        &=\sqrt{\frac{\weight}{\agentNum(\ratioequ)^2}\times \frac{\agentNum(\ratioequ)^2\left(\budget/\agentNum-l-\sum\limits_{k=1}^{k^*}\pi_k\phi_k\right)^2}{\weight\left(\sum\limits_{k=k^*+1}^{K}\pi_k\sqrt{\phi_k}\right)^2}}\times \sqrt{\frac{1}{\phi_{k}}}
        \\&= \frac{1}{\sqrt{\phi_{k}}} \cdot\frac{B-l-\sum\limits_{k=1}^{k^*} \pi_{k}\phi_{k}}{\sum\limits_{k= k^*+1}^K\pi_{k}\sqrt{\phi_{k}}}
        \\&=A_k
        \end{align*}
        Meanwhile, due to the increasing virtual costs in $k$, we have $A_{k}\le  A_{k^*+1} \le 1$, for all $k \ge k^*+1$. So we have $\min\left\{1,\sqrt{\frac{\weight p_k}{\agentNum \left(\ratioequ\right)^2 \lambda^* \phi_k}}\right\}=A_k$ for all $k>k^*$.
        
        Next, in the regime $k\le k^*$, we consider two possibilities of $k^*$: $k^*>k'$ and $k^*=k'$. Recall that $k^*\ge k'$. 
        \begin{itemize}
            \item Firstly, we focus on $k^*>k'$. For $k\in [k'+1,k^*]$, we have $p_k =1$ and
        \begin{align*}
        \sqrt{\frac{\weight p_k}{\agentNum \left(\ratioequ\right)^2 \lambda^* \phi_k}}=\sqrt{\frac{\weight}{\agentNum \left(\ratioequ\right)^2 \lambda^* \phi_k}}\ge\sqrt{\frac{\weight}{\agentNum \left(\ratioequ\right)^2 \lambda^* \phi_{k^*}}}>1.
        \end{align*}
        The last inequality holds due to $Q(k^*,1)<B/s-l$, i.e., 
        \[
        \sum\limits_{k=1 }^{k^*}\pi_{k}\phi_{k}+\sum\limits_{k=k^*+1}^{K}\pi_{k}\sqrt{\phi_{k}\phi_{k^*}}< B/s-l,
        \]
        which leads to (by plugging the value of $\lambda^*$)
        \begin{align*}
        \sqrt{\frac{\weight}{\agentNum \left(\ratioequ\right)^2 \lambda^* \phi_{k^*}}}&=\sqrt{\frac{\weight}{\agentNum(\ratioequ)^2}\times \frac{\agentNum(\ratioequ)^2\left(\budget/\agentNum-l-\sum\limits_{k=1}^{k^*}\pi_k\phi_k\right)^2}{\weight\left(\sum\limits_{k=k^*+1}^{K}\pi_k\sqrt{\phi_k}\right)^2}}\cdot \frac{1}{\sqrt{\phi_{k^*}}}\\
        &=\frac{B/s-l-\sum\limits_{k=1}^{ k^*} \pi_{k}\phi_{k}}{\sum\limits_{k= k^*+1}^K\pi_{k}\sqrt{\phi_{k}}}\cdot \frac{1}{\sqrt{\phi_{k^*}}}\\
        &> 1.
        \end{align*}
        Thus, $\min\left\{1,\sqrt{\frac{\weight p_k}{\agentNum \left(\ratioequ\right)^2 \lambda^* \phi_k}}\right\}=1=A_k$.
        
        For $k\le k'$,  since $z'\ge \frac{\phi_{k'}}{\phi_{k'+1}}$ (Claim \ref{clam_exist_R_equality}) and $\phi_{k^*}\ge \phi_{k'}$, we have 
        \begin{align*}
            \sqrt{\frac{\weight p_k}{\agentNum \left(\ratioequ\right)^2 \lambda^* \phi_k}}
            &=\sqrt{\frac{\weight}{\agentNum \left(\ratioequ\right)^2 \lambda^*}}\times\sqrt{\frac{z'}{\phi_{k'}}}\\&\ge \sqrt{\frac{\weight}{\agentNum \left(\ratioequ\right)^2 \lambda^*}}\times \sqrt{\frac{1}{\phi_{k'+1}}}\\
            &\ge \sqrt{\frac{\weight}{\agentNum \left(\ratioequ\right)^2 \lambda^*}}\times \sqrt{\frac{1}{\phi_{k^*}}}\\
            &>1.
        \end{align*}
        Thus, we have $\min\left\{1,\sqrt{\frac{\weight p_k}{\agentNum \left(\ratioequ\right)^2 \lambda^* \phi_k}}\right\}=1=A_k$. 
        
        In summary, we have $\min\left\{1,\sqrt{\frac{\weight p_k}{\agentNum \left(\ratioequ\right)^2 \lambda^* \phi_k}}\right\}=A_k$ for all $k\le k^*$ when $k^*>k'$.

            \item Secondly, we focus on $k^*=k'$.
            On the one hand, we have  $k' \leq m^*$ according to  Claim \ref{clam_exist_R_equality},  and therefore,  $k^*\le m^*$. On the other hand, $Q(m^*,1) \leq Q(m^*,z^*) < B/s - l$, so we have that $k^* \geq m^*$ by definition of $k^*=\max\{k: Q(k,1)<B/s-l\}$. So, it must be that $k' = k^* = m^*$. 
            
            Recall in the proof of Claim \ref{clam_exist_R_equality} that $R(m^*,z^*) < \gamma \ratioequ = R(k',z') = R(m^*,z')$, i.e., $R(m^*,z^*) < R(m^*,z')$.  And since $R(m,z)$ is increasing in $z$, we have $z^* < z'$. Since $Q(m,z)$ is decreasing in $z$, it must be that 
        \[
        Q(m^*,z') < Q(m^*,z^*) < B/s - l.
        \]
        That is 
        \begin{align*}
            Q(m^*,z') = \sum\limits_{k=1}^{m^*} \pi_{k}\phi_{k}+ \sqrt{\frac{\phi_{m^*}}{z'}}\sum\limits_{k= m^*+1}^K\pi_{k}\sqrt{\phi_{k}}<B/s - l.
        \end{align*}
        For $k\le k^*=k'$, we have $p_k=\frac{z'}{\phi_{k'}}\cdot \phi_{k}$, and 
         \begin{align*}
            \sqrt{\frac{\weight p_k}{\agentNum \left(\ratioequ\right)^2 \lambda^* \phi_k}}
            &=\sqrt{\frac{\weight}{\agentNum \left(\ratioequ\right)^2 \lambda^*}}\times\sqrt{\frac{z'}{\phi_{k'}}}
            \\&= \sqrt{\frac{z'}{\phi_{m^*}}}\cdot \frac{B-l-\sum\limits_{k=1}^{m^*} \pi_{k}\phi_{k}}{\sum\limits_{k= m^*+1}^K\pi_{k}\sqrt{\phi_{k}}}\\
            &>1.
        \end{align*}
        Thus, we have $\min\left\{1,\sqrt{\frac{\weight p_k}{\agentNum \left(\ratioequ\right)^2 \lambda^* \phi_k}}\right\}=1=A_k$. 
        
        In summary, we have $\min\left\{1,\sqrt{\frac{\weight p_k}{\agentNum \left(\ratioequ\right)^2 \lambda^* \phi_k}}\right\}=A_k$ for all $k\le k^*$ when $k^*=k'$. 
        \end{itemize}


         In conclusion, we have $A_k = \min\left\{1,\sqrt{\frac{\weight p_{k}}{\agentNum \left(\ratioequ\right)^2 \lambda^* \phi_{k}}}\right\}$, for all $k$, and $A_k$ is non-increasing in $k$. And the budget constraint is binding through similar calculation. So $A$ is indeed best response to $p$. 
        
        On the other hand, $p$ is best response to $A$. For $k\le k^*$, we have $0<p_k\le 1$ and $A_k=1$.  From $R(k',z')=\weight\ratioequ$, we obtain
        
        \begin{align*}
            R(k',z')=2\weight\left(\sum\limits_{k=1 }^{k'}\pi_{k}\phi_{k}\frac{z'}{\phi_{k'}}+\sum\limits_{k= k'+1}^K\pi_{k}\right)+\left(\ratioequ\right)^2(1-\ratioequ)(1-\weight)\agentNum=\weight\ratioequ,
        \end{align*}
        
        i.e., 
         \begin{align*}
            2\weight\left(\sum\limits_{k=1 }^{K}\pi_{k}p_k\right)+\left(\ratioequ\right)^2(1-\ratioequ)(1-\weight)\agentNum=\weight\ratioequ.
        \end{align*}
        So
        \begin{align*}
            &\frac{\weight}{\agentNum(\ratioequ)^2}\left(\frac{1}{A_{k}}-\frac{2}{\ratioequ}(\sum\limits_{k=1}^K\pi_{k}p_{k})\right)-\frac{1}{\ratioequ}(1-\ratioequ)(1-\weight)\\=&\frac{\weight}{\agentNum(\ratioequ)^2}\left(1-\frac{2}{\ratioequ}(\sum\limits_{k=1}^K\pi_{k}p_{k})\right)-\frac{1}{\ratioequ}(1-\ratioequ)(1-\weight)\\=&0. 
        \end{align*}
       With the above equality, any $p_k\in[0,1]$ is a best response according to Lemma \ref{lem_q_br}.
       For $k> k^*$, we have $p_k=1$, and since $A_{k^*}<1$, we have 
        \begin{align*}
            &\frac{\weight}{\agentNum(\ratioequ)^2}\left(\frac{1}{A_{k}}-\frac{2}{\ratioequ}(\sum\limits_{k=1}^K\pi_{k}p_{k})\right)-\frac{1}{\ratioequ}(1-\ratioequ)(1-\weight)\\>&\frac{\weight}{\agentNum(\ratioequ)^2}\left(1-\frac{2}{\ratioequ}(\sum\limits_{k=1}^K\pi_{k}p_{k})\right)-\frac{1}{\ratioequ}(1-\ratioequ)(1-\weight)\\=&0.
        \end{align*}

       So $p$ is indeed a best response to $A$, as per Lemma \ref{lem_q_br}. And we can see that $p_k$ is indeed non-decreasing. 
       
        \item Case 3:   $\frac{B/s-l}{\weight\ratioequ}\ge \frac{Q(K,1)}{R(K,1)}$. Recall  that $Q(m,z)$ is continuously decreasing in $z$ and $R(m,z)$ is continuously increasing in $z$. Thus, $\frac{Q(m,z)}{R(m,z)}$ is decreasing in $z$.  Notice that when $z=0$,  we have
        \begin{equation*}
        \frac{Q(K,0)}{R(K,0)}=\frac{\sum\limits_{k=1 }^{K}\pi_{k}\phi_{k}}{2\weight\left(\frac{0}{\phi_{m}} \cdot \sum\limits_{k=1 }^K\pi_{k}\phi_{k}\right)+\left(\ratioequ\right)^2(1-\ratioequ)(1-\weight)\agentNum} =\frac{\sum\limits_{k=1 }^{K}\pi_{k}\phi_{k}}{\left(\ratioequ\right)^2(1-\ratioequ)(1-\weight)\agentNum}.
        \end{equation*}
        
        
        If  $ \frac{\sum\limits_{k=1 }^{K}\pi_{k}\phi_{k}}{\left(\ratioequ\right)^2(1-\ratioequ)(1-\weight)\agentNum}>\frac{B/s-l}{\weight\ratioequ}$, let  $\tilde{z}\in (0,1]$ be such that $ \frac{Q(K,\tilde{z})}{R(K,\tilde{z})}=\frac{B/s-l}{\weight\ratioequ}$. 
        Such a $\tilde{z}$ exists because $\frac{Q(K,1)}{R(K,1)}\le \frac{B/s-l}{\weight\ratioequ}<\frac{Q(K,0)}{R(K,0)}$. Then \begin{equation}
            p_k=\tilde{z}\cdot\frac{\phi_k}{\phi_K}, \forall k,
        \end{equation} and 
        \begin{equation}
            A_k = \frac{\budget/\agentNum-l}{\sum\limits_{k=1 }^{K}\pi_{k}\phi_{k}}, \forall k
        \end{equation}
        constitute the mutual best response. To see this, on the one hand, $A$ is best response to $p$. Let 
        \[
        \lambda^* = \frac{\weight\left(\sum\limits_{k=1}^{K}\pi_k\sqrt{p_k\phi_k}\right)^2}{\agentNum(\ratioequ)^2\left(\budget/\agentNum-l\right)^2}. 
        \] 
        Then 
        \begin{align*}
        \sqrt{\frac{\weight p_k}{\agentNum \left(\ratioequ\right)^2 \lambda^* \phi_k}}=&\sqrt{\frac{\weight}{\agentNum(\ratioequ)^2}\times \frac{\agentNum(\ratioequ)^2\left(\budget/\agentNum-l\right)^2}{\weight\left(\sum\limits_{k=1}^{K}\pi_k\sqrt{p_k\phi_k}\right)^2}}\times \sqrt{\frac{p_k}{\phi_{k}}}\\
        =&\frac{\budget/\agentNum-l}{\sum\limits_{k=1}^{K}\pi_k\sqrt{\frac{\tilde{z}}{\phi_{K}}\cdot \phi_{k}\cdot \phi_k}} \times \sqrt{\frac{\tilde{z}}{\phi_{K}}}\\
        =&\frac{B/s-l}{\sum\limits_{k=1 }^{K}\pi_{k}\phi_{k}}=A_k.
        \end{align*}
        And $\frac{B/s-l}{\sum\limits_{k=1 }^{K}\pi_{k}\phi_{k}}<1$, as we assume  $\sum\limits_{k=1 }^{K}\pi_{k}\phi_{k}>B/s-l$ without loss of generality to avoid trivial solution of any selection probability being one.  Thus, we $\min\left\{1,\sqrt{\frac{\weight p_k}{\agentNum \left(\ratioequ\right)^2 \lambda^* \phi_k}}\right\}=A_k$, for all $k$. And the budget constraint is binding as $$\sum_{k=1}^K\pi_k \phi_k A_k =\frac{B/s-l}{\sum\limits_{k=1 }^{K}\pi_{k}\phi_{k}} \cdot \sum_{k=1}^K\pi_k \phi_k=B/s-l.$$ So $A$ is indeed best response to $p$, as per Lemma \ref{lem: min_best_resp}. 
        
        On the other hand, $p$ is a best response to $A$. Notice that 
        \begin{equation*}
            A_k = \frac{B/s-l}{\sum\limits_{k=1 }^{K}\pi_{k}\phi_{k}} = \frac{B/s-l}{Q(K,\tilde{z})}.
        \end{equation*}
        From $ \frac{Q(K,\tilde{z})}{R(K,\tilde{z})}=\frac{B/s-l}{\weight\ratioequ}$, we have 
        
        \begin{align*}
        \frac{\weight}{\agentNum(\ratioequ)^2}\left(\frac{1}{A_{k}}-\frac{2}{\ratioequ}(\sum_{k=1}^{K}\pi_{k}p_{k})\right)&=\frac{\weight}{\agentNum(\ratioequ)^2}\left(\frac{Q(K,\tilde{z})}{\budget/\agentNum-l}-\frac{2}{\ratioequ}\left(\sum\limits_{k=1 }^{K}\pi_{k}\tilde{z}\frac{\phi_{k}}{\phi_{K}}\right)\right)\\
        &=\frac{\weight}{\agentNum(\ratioequ)^2}\left(\frac{R(K,\tilde{z})}{\weight\ratioequ}-\frac{2}{\ratioequ}\left(\sum\limits_{k=1 }^{K}\pi_{k}\tilde{z}\frac{\phi_k}{\phi_{K}}\right)\right)\\
        &=\frac{1}{\agentNum(\ratioequ)^3}\left(R(K,\tilde{z})-2\weight\left(\sum\limits_{k=1 }^{K}\pi_{k}\tilde{z}\frac{\phi_k}{\phi_{m}}\right)\right)\\
        &=\frac{1}{\agentNum(\ratioequ)^3}\times\left(\ratioequ\right)^2(1-\ratioequ)(1-\weight)\agentNum\\
        &=\frac{1}{\ratioequ}(1-\ratioequ)(1-\weight),
        \end{align*}
        i.e.,$\frac{\weight}{\agentNum(\ratioequ)^2}\left(\frac{1}{A_{k}}-\frac{2}{\ratioequ}\left(\sum\limits_{k=1}^K\pi_{k}p_{k}\right)\right)-\frac{1}{\ratioequ}(1-\ratioequ)(1-\weight)=0$. 
         So $pk=\tilde{z}\cdot\frac{\phi_k}{\phi_K}$, for all $k$, is indeed best response to $A$, as per Lemma \ref{lem_q_br}. And we can see that $p_k$ is indeed non-decreasing.  

         If $ \frac{\sum\limits_{k=1 }^{K}\pi_{k}\phi_{k}}{\left(\ratioequ\right)^2(1-\ratioequ)(1-\weight)\agentNum}\le\frac{B/s-l}{\weight\ratioequ}$, then $p_k=0$ for all $k$ and
         
         \begin{equation}
            A_k = \frac{\budget/\agentNum-l}{\sum\limits_{k=1 }^{K}\pi_{k}\phi_{k}}, \forall k,
         \end{equation}
         constitute one mutual best response. To see this, on one hand, $A$ is one best response to $p$. This is because when $p_k=0$ for all $k$, any $A_k\in [0,1]$ that satisfies budget constraint is a best response, according to Lemma \ref{lem: min_best_resp}. And $A_k<1$ indeed satisfies budget constraint. Actually, there are other formats of $A$ that satisfy budget constraint being best response. We present the constant solution, which is non-increasing, for simplicity.
         
         On the other hand, $p$ is best response to $A$.  We have
        \begin{align*}
          \frac{\weight}{\agentNum(\ratioequ)^2}\cdot\frac{1}{A_k}-\frac{1}{\ratioequ}(1-\ratioequ)(1-\weight)= \frac{\weight}{\agentNum(\ratioequ)^2}\left(\frac{\sum\limits_{k=1 }^{K}\pi_{k}\phi_{k}}{B/s-l}\right)-\frac{1}{\ratioequ}(1-\ratioequ)(1-\weight)\le 0.
        \end{align*}
        So $p_k=0$ is indeed best response to $A$, according to Lemma \ref{lem_q_br}.

\end{itemize}

    \end{itemize}

\paragraph{From discrete to continuous costs} So far we present the intersections of both players' best responses, i.e., an equilibrium of the min-max game. And $A_k$ of the equilibrium corresponds to the solution of the minimax optimization problem. We then convert the solution under discrete cost to the solution under continuous cost. The continuous cost  can be considered as a special case of discrete cost  by setting $K$ to be infinity. 

We begin with some key notations under continuous case. We denote the distribution of virtual cost in group $i$ as $\virtualcost{i}$. We can obtain it based on the connection between cost and virtual cost.  Recall that the probability of group $i$ as $q_i$. Thus, we have the distribution of virtual cost in all groups as  $ \sum_{i}q_i \virtualcost{i}(\phi)\triangleq \virtualcostall(\phi)$. As non-participants whose costs are higher than the cost threshold would not get payments, we only need to  focus on participants and their virtual costs. Let $\phi_{i\min}$ and $\phi_{i\max}$ be the maximum and minimum value of virtual costs of participants in group $i$, respectively. Let $\phi_{\min}$ and $\phi_{\max}$ be the maximum and minimum value of virtual costs of participants, respectively. Naturally, $\phi_{\min}=\min_{i}\phi_{i\min}$ and $\phi_{\max}=\max_{i}\phi_{i\max}$.

We divide the interval $[\phi_{\min},\phi_{\max}]$ into $K$ numbers of sub-intervals with equal length $(\phi_{\max}-\phi_{\min})/K$. When $K$ is very large, we can approximate the  continuous distribution by $K$  discrete probabilities $$\pi_k = \int_{\phi_{\min}\frac{K-k+1}{K}+\phi_{\max}\frac{k-1}{K}}^{\phi_{\min}\frac{K-k}{K}+ \phi_{\max}\frac{k}{K}}\virtualcostall(\phi)d\phi, \ \  k=1,...,K,$$
which are integrals of $\virtualcostall(\phi)$ in sub-intervals. 
Consider the functions $Q(m,z)$ and $R(m,z)$ defined in \eqref{equ_Q} and \eqref{equ_R}.  We define their continuous versions by replacing summations with integrals as follows:

\begin{equation}
Q_c(x) = \int_{\phi_{\min}}^{x}\phi \virtualcostall(\phi)d\phi + \sqrt{x}\int_{x}^{\phi_{\max}}\sqrt{\phi}\virtualcostall(\phi)d\phi
\end{equation}

\begin{equation}
R_c(x) = 2\weight\left(\frac{1}{x}\int_{\phi_{\min}}^{x}\phi \virtualcostall(\phi)d\phi + \int_{x}^{\phi_{\max}}\virtualcostall(\phi)d\phi\right)+\left(\ratioequ\right)^2(1-\ratioequ)(1-\weight)\agentNum.
\end{equation}
We can see that $Q\left(\frac{x-\phi_{\min}}{\phi_{\max}-\phi_{\min}},1\right)\rightarrow Q_c(x)$ and $R\left(\frac{x-\phi_{\min}}{\phi_{\max}-\phi_{\min}},1\right)\rightarrow R_c(x)$ as $K\rightarrow \infty$. As $\phi_m \rightarrow \phi_{m+1}$, $m=1,.., K$ in the limit $K\rightarrow \infty$, we have $Q(m,z)\rightarrow Q(m+1,z)$ and $R(m,z)\rightarrow R(m+1,z)$.  Since $Q(m+1,1)= Q\left(m,\frac{\phi_m}{\phi_{m+1}}\right), R(m+1,1)= R\left(m,\frac{\phi_m}{\phi_{m+1}}\right)$ according to Claim \ref{cla_equality}, we have $Q\left(m,\frac{\phi_m}{\phi_{m+1}}\right) \rightarrow Q(m,1)$ and $R\left(m,\frac{\phi_m}{\phi_{m+1}}\right) \rightarrow R(m,1)$. Thus, it suffices to consider $z=1$ and leverage the limit $Q\left(\frac{x-\phi_{\min}}{\phi_{\max}-\phi_{\min}},1\right)\rightarrow Q_c(x)$ and $R\left(\frac{x-\phi_{\min}}{\phi_{\max}-\phi_{\min}},1\right)\rightarrow R_c(x)$ when characterizing the connection between $Q(m,z)$, $R(m,z)$ and $Q_c(x)$, $R_c(x)$.

Based on the connection between discrete $Q$ and $R$ and continuous $Q_c$ and $R_c$, we adapt the solutions of discrete case to that of continuous case. 

\begin{itemize}
    \item Case 1:  $\frac{B/s-l}{\weight\ratioequ}<\frac{Q_c(\phi_{\min})}{R_c(\phi_{\min})}$, i.e., 
    $(B/s-l)(2\weight+\ratioequ(1-\ratioequ)(1-\weight)\agentNum)<\weight\sqrt{\phi_{\min}} \cdot \int_{\phi_{\min}}^{\phi_{\max}}\sqrt{\phi}\virtualcostall(\phi)d\phi,$. We have the optimal allocation rule for virtual cost $\phi$
    \begin{equation}
    A(\phi) = \frac{1}{\sqrt{\phi}}\cdot\frac{B/s-l}{\int_{\phi_{\min}}^{\phi_{\max}}\sqrt{\phi}\virtualcostall(\phi)d\phi}.
    \end{equation}
    We can equivalently present the solution in terms of cost $\cost$:
    \begin{equation}
      A_i(\cost)=  \frac{1}{\sqrt{\phi_{i}(\cost)}}\cdot\frac{B/s-l}{\sum\limits_{i}q_i\int_{\cost_{\min}}^{\costthreshold{i}}\sqrt{\phi_i(\cost)}f_i(c)dc}.
    \end{equation}
    \item Case 2: $\frac{Q_c(\phi_{\min})}{R_c(\phi_{\min})}\le \frac{B/s-l}{\weight\ratioequ}<\frac{Q_c(\phi_{\max})}{R_c(\phi_{\max})}$. Let $\phi' \in [\phi_{\min},\phi_{\max})$ such that $\frac{Q_c(\phi')}{R_c(\phi')}=\frac{B/s-l}{\weight\ratioequ}$. 
    \begin{itemize}
        \item Case 2(a): $\frac{B/s-l}{Q_c(\phi')}\le 1$. We have 
        \begin{equation}
        A(\phi)=\begin{cases}\chi\triangleq\frac{B/s-l}{Q_c(\phi')}, &\text{if} \ \ \phi \le \phi',\\
    \frac{1}{\sqrt{\phi}} \cdot\frac{B/s-l-\chi\int_{\phi_{\min}}^{\hat{\phi}}\phi\virtualcostall(\phi)d\phi}{\int_{\phi'}^{\phi_{\max}}\phi\virtualcostall(\phi)d\phi} , & \text{if}\ \ \phi < \phi'.
    \end{cases}
        \end{equation}
    The solution in terms of cost $\cost$ is 
    
    \begin{equation}
	A_i(\cost) = \begin{cases}\chi\triangleq\frac{B/s-l}{Q_c(\phi')}, &\text{if}\ \ \phi_i(\cost) \le \phi', \\
	\frac{1}{\sqrt{\phi_{i}(\cost)}}\cdot\frac{B/s-l-\chi\sum\limits_{i}q_i\int_{\phi_{i\min}}^{\hat{\phi}}\phi_i(c)\virtualcost{i}(\phi_i)d\phi_i}{\sum\limits_{i}q_i\int_{\phi'}^{\phi_{i\max}}\sqrt{\phi_i(\cost)}\virtualcost{i}(\phi_i)d\phi_i}, &\text{if}\ \ \phi_i(\cost) > \phi'.
	\end{cases}
	\end{equation}
        \item Case 2(b): $\frac{B/s-l}{Q_c(\phi')}>1$. Let $\phi^*$ be such that $Q_c(\phi^*)=B/s-l$. Then we have 
        \begin{equation}
        A(\phi)=\begin{cases} 1, &\text{if}\ \ \phi \le \phi^*,\\
    \frac{1}{\sqrt{\phi}} \cdot\frac{B/s-l-\int_{\phi_{\min}}^{\phi^*}\phi\virtualcostall(\phi)d\phi}{\int_{\phi^*}^{\phi_{\max}}\phi\virtualcostall(\phi)d\phi} , &\text{if} \ \ \phi < \phi^*.
    \end{cases}
        \end{equation}
    The solution in terms of cost $\cost$ is 
    
    \begin{equation}
	A_i(\cost) = \begin{cases}1, &\text{if}\ \ \phi_i(\cost) \le \phi^*, \\
	\frac{1}{\sqrt{\phi_{i}(\cost)}}\cdot\frac{B/s-l-\sum\limits_{i}q_i\int_{\phi_{i\min}}^{\hat{\phi}}\phi_i(c)\virtualcost{i}(\phi_i)d\phi_i}{\sum\limits_{i}q_i\int_{\hat{\phi}}^{\phi_{i\max}}\sqrt{\phi_i(\cost)}\virtualcost{i}(\phi_i)d\phi_i}, &\text{if}\ \ \phi_i(\cost) > \phi^*.
	\end{cases}
	\end{equation}
    
    \end{itemize}
    \item  Case 3:  $\frac{B/s-l}{\weight\ratioequ}\ge\frac{Q_c(\phi_{\max})}{R_c(\phi_{\max})}$. We have 
    \begin{equation}
    A(\phi) =\chi= \frac{B/s-l}{\int_{\phi_{\min}}^{\phi_{\max}}\phi\virtualcostall(\phi)d\phi}.
    \end{equation}
     The solution in terms of cost $\cost$ is 
    \begin{equation}
      A_i(\cost) = \chi=  \frac{B/s-l}{\sum\limits_{i}q_i\int_{\phi_{i\min}}^{\phi_{i\max}}\phi_i(\cost)\virtualcost{i}(\phi_i)d\phi_i}.
    \end{equation}
\end{itemize}

In summary, we can present the solution of continuous case for cost $c\le \costthreshold{i}$ in group $i$ as follows:
\begin{equation}\label{equ_continuousA}
	A_i(\cost) = \begin{cases}\chi, &\text{if} \ \  \phi_{i}(\cost) \le \hat{\phi}, \\
	\frac{1}{\sqrt{\phi_{i}(\cost)}}\cdot\frac{B/s-l-\chi\sum\limits_{i}q_i\int_{\phi_{i\min}}^{\hat{\phi}}\phi_i(c)\virtualcost{i}(\phi_i)d\phi_i}{\sum\limits_{i}q_i\int_{\hat{\phi}}^{\phi_{i\max}}\sqrt{\phi_i(\cost)}\virtualcost{i}(\phi_i)d\phi_i}, &\text{if} \ \  \phi_{i}(\cost) > \hat{\phi}.
	\end{cases}
	\end{equation}
 Here, the constants $\chi$ and $\hat{\phi}$ are defined as follows:
 \begin{itemize}
     \item If $\frac{B/s-l}{\weight\ratioequ}<\frac{Q_c(\phi_{\min})}{R_c(\phi_{\min})}$, then  $\chi=0$ and $\hat{\phi}<\phi_{i\min}$ for all $i$.
     \item If $\frac{Q_c(\phi_{\min})}{R_c(\phi_{\min})}\le \frac{B/s-l}{\weight\ratioequ}<\frac{Q_c(\phi_{\max})}{R_c(\phi_{\max})}$, then   $\chi=\min\left\{1,\frac{B/s-l}{Q_c(\phi')}\right\}$ where $\phi'$ satisfies $\frac{Q_c(\phi')}{R_c(\phi')}=\frac{B/s-l}{\weight\ratioequ}$, and $\hat{\phi}$ satisfies $\frac{Q_c(\hat{\phi})}{\max\left\{1,R_c(\hat{\phi})/\weight\ratioequ\right\}}=B/s-l$.
     \item    If $ \frac{B/s-l}{\weight\ratioequ}\ge\frac{Q_c(\phi_{\max})}{R_c(\phi_{\max})}$, then $\chi = \frac{B/s-l}{\sum\limits_{i}q_i\int_{\phi_{i\min}}^{\phi_{i\max}}\phi_i(\cost)\virtualcost{i}(\phi_i)d\phi_i}$ and $\hat{\phi} = \phi_{\max}$.
 \end{itemize}

$\hfill\square$

\section{Proof of Corollary \ref{cor_opt_pro}}

Corollary \ref{cor_opt_pro} is a special case of Theorem \ref{the_opt_pro}. The proof is in Case 1 in the proof of  Theorem \ref{the_opt_pro}. $\hfill\square$

\section{Proof of Proposition \ref{Pro_fullpart_benefit}}\label{appendixsec_fullpart_benefit}


Under condition 1) of  Proposition \ref{Pro_fullpart_benefit},  the solution falls into Case 1 in the proof of Theorem \ref{the_opt_pro}. Plugging the solution of Case 1 in the proof of Theorem \ref{the_opt_pro} to the objective function, we have the following optimization problem over participation profile $\ratioVecequ$:
    \begin{equation}
    \min_{\ratioVecequ} \ \  T^*(\ratioVecequ) = \frac{\weight}{\agentNum}\left(U(\ratioVecequ)-\frac{1}{\ratioequ}\right),
    \end{equation}
    where \begin{equation}
        U(\ratioVecequ) \triangleq  \frac{1}{\left(\ratioequ\right)^2}\cdot \frac{\left(\sum_{i}\groupProb{i}\int_{\cost_{\min}}^{\costthreshold{i}}\sqrt{\phi_{i}(\cost;\ratioVecequ)}\costD_i(\cost)d\cost\right)^2}{B/s-l(\ratioVecequ)}.
    \end{equation}
    
    Notice that $\ratioequ = \sum_{i} q_i \ratioequg_i$ by the definition of average participation rate. Thus, $\frac{\partial \ratioequ}{\partial \ratioequg_i}=q_i$. Define
    
    \begin{equation}
    r(\ratioVecequ) \triangleq \left(\sum_{i}\groupProb{i}\int_{\cost_{\min}}^{\costthreshold{i}}\sqrt{\phi_{i}(\cost;\ratioVecequ)}\costD_i(\cost)d\cost\right)^2.
    \end{equation}
    And recall that 
    \begin{equation}
 l(\ratioVecequ)\triangleq \sum_{j} q_j \ratioequg_i(\pplequ{\costthreshold{j}}{j}-\gpplequ{\costthreshold{j}}{j}-w(\ratioequ)),
\end{equation}
and 
    \begin{equation}
        \Delta_i \triangleq \ppl{\costthreshold{i}}{i}- \gppl{\costthreshold{i}}{i}.
    \end{equation}
    We can write  $U(\ratioVecequ)$ as follows:
    
    \begin{equation}
    U(\ratioVecequ) =\frac{1}{\left(\ratioequ\right)^2}\cdot \frac{r(\ratioVecequ)}{B/s-l(\ratioVecequ)} = \frac{1}{\left(\ratioequ\right)^2}\cdot \frac{r(\ratioVecequ)}{B/s-\sum_{i}q_i\ratioequ_i(\Delta_i-w(\ratioequ))}.
    \end{equation}
    Before presenting the derivative of the objective function with respect to group $i$'s participation ratio $\ratioequg_i$, i.e.,  $\frac{\partial T^{*}(\ratioVecequ)}{\partial \ratioequg_i}$, we give the the derivative of $l(\ratioVecequ)$ with respect to  $\ratioequg_i$, which appears in $\frac{\partial T^{*}(\ratioVecequ)}{\partial \ratioequg_i}$. Note that $\frac{\partial l(\ratioVec)}{\partial \ratioequ_i}=\frac{\partial \sum_{j}q_j \ratioequg_j(\Delta_j-w(\ratioequ))}{\partial \ratioequ_j}$. That is,
    \begin{equation}
    \begin{aligned}
    &\ \ \frac{\partial \sum_{j}q_j \ratioequg_j(\Delta_j-w(\ratioequ))}{\partial \ratioequ_i}\\
    =&\ \  \frac{\partial q_i \ratioequ_i (\Delta_i -w(\ratioequ))}{\partial \ratioequ_i} +   \frac{\partial \sum_{j\neq i}q_j \ratioequ_j (\Delta_j -w(\ratioequ))}{\partial \ratioequ_i}\\
     =&\ \  q_i(\Delta_i - w(\ratioequ)) +q_i\ratioequ_i\left(\frac{\partial\Delta_i }{\partial \ratioequ_i}-q_iw'(\ratioequ)\right)+ \sum_{j\neq i}q_j\ratioequ_j\left(\frac{\partial\Delta_j }{\partial \ratioequ_i}-q_iw'(\ratioequ)\right)\\
    =&\ \   q_i(\Delta_i - w(\ratioequ))+q_i\ratioequ_i\frac{\partial \Delta_i}{\partial \ratioequ_i}+\sum_{j\neq i}q_j\ratioequ_j\frac{\partial \Delta_j}{\partial \ratioequ_i} - q_iw'(\ratioequ)\sum_{j}q_j\ratioequ_j\\
   = &\ \   q_i(\Delta_i - w(\ratioequ))+q_i\ratioequ_i\frac{\partial \Delta_i}{\partial \ratioequ_i}+\sum_{j\neq i}q_j\ratioequ_j\frac{\partial \Delta_j}{\partial \ratioequ_i} -  q_iw'(\ratioequ)\ratioequ\ \ \ \ \ \text{using $\sum_{j}q_j\ratioequ_j=\ratioequ$}\\
    =&\ \  q_i \left(\ratioequ_i\frac{\partial \Delta_i}{\partial \ratioequ_i}+\Delta_i - w(\ratioequ)-\ratioequ w'(\ratioequ)\right)+\sum_{j\neq i}q_j\ratioequ_j\frac{\partial \Delta_j}{\partial \ratioequ_i}.
    \end{aligned}
    \end{equation}
    
    The derivative of the objective function with respect to group $i$'s participation ratio $\ratioequg_i$ is 
    \begin{equation} 
       \frac{\partial T^{*}(\ratioVecequ)}{\partial \ratioequg_i} =   \frac{\weight}{\agentNum}\frac{\partial U(\ratioVecequ)}{\partial \ratioequg_i}+\frac{\weight\groupProb{i}}{\agentNum\left(\ratioequ\right)^2}.
    \end{equation}
    
    where

    
    \begin{equation}
    \begin{aligned}
    \frac{\partial U(\ratioVecequ)}{\partial \ratioequg_i} =&\ \  \frac{\frac{\partial r(\ratioVecequ)}{\partial \ratioequg_i}}{(\ratioequ)^2(B/s-l(\ratioVecequ))}-\frac{2\groupProb{i}r(\ratioVecequ)}{(\ratioequ)^3(B/s-l(\ratioVecequ))}\\
    &\ \ +\frac{q_i r(\ratioVecequ)(\ratioequ_i\frac{\partial \Delta_i}{\partial \ratioequ_i}+\Delta_i-w(\ratioequ)-\ratioequ w'(\ratioequ))}{(\ratioequ)^2(B/s-l(\ratioVecequ))^2}+\frac{r(\ratioVecequ)\sum\limits_{j\neq i}q_j\ratioequ_j\frac{\partial \Delta_j}{\partial \ratioequ_i}}{(\ratioequ)^2(B/s-l(\ratioVecequ))^2}.
    \end{aligned}
    \end{equation}

    We wish to characterize the conditions under which  $\frac{\partial T^*(\ratioVecequ)}{\partial \ratioequg_i}\le  0, \forall i$, i.e., 
    \begin{equation}
        \frac{\partial T^*(\ratioVecequ)}{\partial \ratioequg_i}= \frac{\weight}{\agentNum}\frac{\partial U(\ratioVecequ)}{\partial \ratioequg_i}+\frac{\weight\groupProb{i}}{\agentNum\left(\ratioequ\right)^2}\le 0.
    \end{equation}
    This is equivalent to 
    \begin{equation}
        \frac{\partial T^*(\ratioVecequ)}{\partial \ratioequg_i}= \frac{\partial U(\ratioVecequ)}{\partial \ratioequg_i}+\frac{\groupProb{i}}{\left(\ratioequ\right)^2}\le 0,
    \end{equation}
    by removing $\weight/\agentNum$, which is greater than zero. Plugging in $\frac{\partial U(\ratioVecequ)}{\partial \ratioequg_i}$, we have 
    
    \begin{equation}
    \begin{aligned}
    &\frac{\frac{\partial r(\ratioVecequ)}{\partial \ratioequg_i}}{(\ratioequ)^2(B/s-l(\ratioVecequ))}-\frac{2\groupProb{i}r(\ratioVecequ)}{(\ratioequ)^3(B/s-l(\ratioVecequ))}+\frac{q_i r(\ratioVecequ)(\ratioequ_i\frac{\partial \Delta_i}{\partial \ratioequ_i}
    +\Delta_i-w(\ratioequ)-\ratioequ w'(\ratioequ))}{(\ratioequ)^2(B/s-l(\ratioVecequ))^2}\\
    &+\frac{r(\ratioVecequ)\sum\limits_{j\neq i}q_j\ratioequ_j\frac{\partial \Delta_j}{\partial \ratioequ_i}}{(\ratioequ)^2(B/s-l(\ratioVecequ))^2}+\frac{\groupProb{i}}{\left(\ratioequ\right)^2}\le 0.
    \end{aligned}
    \end{equation}
    Multiplying $(\ratioequ)^2(B/s-l(\ratioVecequ))^2$, which is greater than zero (recall that $(B/s-l(\ratioVecequ)>0$ implicitly. Otherwise, there is no positive solution of allocation rule) at both sides, we have
\begin{equation}\label{equ_proof1}
        \begin{aligned}
     &\frac{\partial r(\ratioVecequ)}{\partial \ratioequg_i}(B/s-l(\ratioVecequ))-\frac{2\groupProb{i}r(\ratioVecequ)(B/s-l(\ratioVecequ))}{\ratioequ}+q_i r(\ratioVecequ)(\ratioequ_i\frac{\partial \Delta_i}{\partial \ratioequ_i}+\Delta_i-w(\ratioequ)-\ratioequ w'(\ratioequ))\\
     &+r(\ratioVecequ)\sum\limits_{j\neq i}q_j\ratioequ_j\frac{\partial \Delta_j}{\partial \ratioequ_i}+q_i(B/s-l(\ratioVecequ))^2\le 0.
    \end{aligned}
\end{equation}
So, to ensure $\frac{\partial T^*(\ratioVecequ)}{\partial \ratioequg_i}\le  0, \forall i$, we need to have the inequality in (\ref{equ_proof1}) holds for all $i$.     Define


       

    \begin{equation}\label{equ_D}
        \begin{aligned}
        D_i(\ratioVecequ,\budget,\weight,\agentNum,\groupProbVec) \triangleq &\ \ \frac{(\ratioequ_i \Delta'_i(\ratioequ_i)+\Delta_i(\ratioequ_i)-w(\ratioequ))}{\ratioequ}+\frac{\frac{\partial r(\ratioVecequ)}{\partial\ratioequg_i}(B/\agentNum-l(\ratioVecequ))}{\ratioequ q_i r(\ratioVecequ)}\\
        &\ \ -\frac{2(B/s-l(\ratioVecequ))}{(\ratioequ)^2}+ \frac{(B/s-l(\ratioVec))^2}{\ratioequ r(\ratioVecequ)}+\frac{\sum\limits_{j\neq i}q_j\ratioequ_j\frac{\partial \Delta_j}{\partial \ratioequ_i}}{q_i\ratioequ}.
    \end{aligned}
    \end{equation}

           
       If $w'(\ratioequ)\ge  D_i(\ratioVecequ,,\budget,\weight,\agentNum,\groupProbVec)$, for all $\ratioVec$ in which $\ratioequ_i\ge \ratio_{\min}>0$, the inequality in (\ref{equ_proof1}) holds, i.e., 
        we can obtain $\frac{\partial T^*(\ratioVecequ)}{\partial \ratioequg_i}\le  0, \forall i.$ Thus, the objective function is decreasing in the group participation ratio, and the optimal participation ratio for group $i$ is one, for all groups $i$. $\hfill\square$
    
\section{Proof of Proposition \ref{pro_fullpart_loss}}\label{appendixsec_fullpart_loss}
The idea of the proof is similar to the proof of Proposition \ref{Pro_fullpart_benefit}. We can similarly arrive at  (\ref{equ_proof1}). Define

        
        \begin{equation}\label{equ_delta}
            \begin{aligned}
            \delta_i (\ratioVecequ,\budget,\weight,\agentNum,\groupProbVec) \triangleq  
            &\ \ \frac{w(\ratioequ)+\ratioequ w'(\ratioequ)-\Delta_i(\ratioequ_i)}{\ratioequ_i}-\frac{\frac{\partial r(\ratioVecequ)}{\partial\ratioequg_i}(B/\agentNum-l(\ratioVecequ))}{\ratioequ_i q_i r(\ratioVecequ)}\\
        &\ \ +\frac{2(B/s-l(\ratioVecequ))}{\ratioequ_i \ratioequ}- \frac{(B/s-l(\ratioVec))^2}{\ratioequ_i r(\ratioVecequ)}-\frac{\sum\limits_{j\neq i}q_j\ratioequ_j\frac{\partial \Delta_j}{\partial \ratioequ_i}}{q_i\ratioequ_i}.
        \end{aligned}
        \end{equation}
          
        If $\Delta'_i(\ratioequ_i)\le   \delta_i(\ratioVec, \budget,\weight,\agentNum,\groupProbVec),$ for all $\ratioVec$ in which $\ratioequ_i\ge \ratio_{\min}>0$, the inequality in (\ref{equ_proof1}) holds, i.e., 
        we can obtain $\frac{\partial T^*(\ratioVecequ)}{\partial \ratioequg_i}\le  0, \forall i.$ Thus, the objective function is decreasing in the group participation rate, and the optimal participation rate for group $i$ is one, for all groups $i$. $\hfill\square$

\end{document}